\renewcommand{\selectlanguage}[1]{} % get rid of babel error
\newcommand\numberthis{\addtocounter{equation}{1}\tag{\theequation}}
\newtheorem{theorem}{Theorem}[section]
\newtheorem{lemma}[theorem]{Lemma}
\crefname{lemma}{Lemma}{Lemmas}
\Crefname{lemma}{Lemma}{Lemmas}
\newtheorem{corollary}[theorem]{Corollary}
\crefname{corollary}{Corollary}{Corollaries}
\Crefname{corollary}{Corollary}{Corollaries}
\newtheorem{proposition}[theorem]{Proposition}
\crefname{proposition}{Proposition}{Propositions}
\Crefname{proposition}{Proposition}{Propositions} 
\newtheorem{conjecture}[theorem]{Conjecture}
\crefname{conjecture}{Conjecture}{Conjectures}
\Crefname{conjecture}{Conjecture}{Conjectures} 
\theoremstyle{remark}
\crefname{remark}{Remark}{Remarks}
\Crefname{remark}{Remark}{Remarks}
\newtheorem{example}{Example}[section]
\crefname{example}{Example}{Examples}
\Crefname{example}{Example}{Examples}
\theoremstyle{definition}
\newtheorem{definition}[theorem]{Definition}
\crefname{definition}{Def.}{Defs.}
\Crefname{definition}{Definition}{Definitions}
\newcommand{\crefpart}[2]{%
  \namecref{#1}~\hyperref[#2]{\labelcref*{#1}.\ref*{#2}}%
}
\newcommand{\RR}{\mathbb{R}}
\newcommand{\CC}{\mathbb{C}}
\newcommand{\dua}[2]{\langle #1, #2 \rangle}
\DeclareMathAlphabet{\mathmybb}{U}{bbold}{m}{n}
\newcommand{\id}{\mathmybb{1}}
\newcommand{\vsigma}{\boldsymbol{\sigma}}
\newcommand{\vv}{\boldsymbol{v}}
\renewcommand{\time}{\tau}
\newcommand{\dtime}{\frac{\dd}{\dd\time}}
\newcommand{\ddtime}{\frac{\dd^2}{\dd\time^2}}
\newcommand{\s}{\mathrm{s}}
\newcommand{\vdxc}{v_\mathrm{dxc}}
\newcommand{\vd}{v_\mathrm{d}}
\newcommand{\vdx}{v_\mathrm{dx}}
\newcommand{\vdc}{v_\mathrm{dc}}
\newcommand{\vx}{v_\mathrm{x}}
\newcommand{\pf}{{\mathrm{pf}}}
\newcommand{\qop}{\hat{q}}
\newcommand{\qval}{q}
\renewcommand{\norm}[1]{\| #1 \|}
\renewcommand{\abs}[1]{| #1 |}
\newcommand{\Abs}[1]{\left| #1 \right|}
\begin{document}
\author{Vebjørn H. Bakkestuen}
\affiliation{Department of Computer Science, Oslo Metropolitan University, Oslo, Norway}
\author{Vegard Falmår}
\affiliation{Department of Computer Science, Oslo Metropolitan University, Oslo, Norway}
\author{Maryam Lotfigolian}
\affiliation{Department of Computer Science, Oslo Metropolitan University, Oslo, Norway}
\author{Markus Penz}
\affiliation{Department of Computer Science, Oslo Metropolitan University, Oslo, Norway}
\affiliation{Max Planck Institute for the Structure and Dynamics of Matter and Center for Free-Electron Laser Science, Hamburg, Germany}
\author{Michael Ruggenthaler}
\affiliation{Max Planck Institute for the Structure and Dynamics of Matter and Center for Free-Electron Laser Science, Hamburg, Germany}
\affiliation{The Hamburg Center for Ultrafast Imaging, Hamburg, Germany}
\author{Andre Laestadius}
\affiliation{Department of Computer Science, Oslo Metropolitan University, Oslo, Norway}
\affiliation{Hylleraas Centre for Quantum Molecular Sciences, Department of Chemistry, University of Oslo, Oslo, Norway}
\email{andre.laestadius@oslomet.no}

\title[QEDFT: Quantum Rabi Model]{Quantum-Electrodynamical Density-Functional Theory\\Exemplified by the Quantum Rabi Model}

        \begin{abstract}
            The key features of density-functional theory (DFT) within a minimalistic implementation of quantum electrodynamics are demonstrated, thus allowing to study elementary properties of quantum-electrodynamical density-functional theory (QEDFT). We primarily employ the quantum Rabi model, that describes a two-level system coupled to a single photon mode, and also discuss the Dicke model, where multiple two-level systems couple to the same photon mode. In these settings, the density variables of the system are the polarization and the displacement of the photon field. We give analytical expressions for the constrained-search functional and the exchange-correlation potential and compare to established results from QEDFT. We further derive a form for the adiabatic connection that is almost explicit in the density variables, up to only a non-explicit correlation term that gets bounded both analytically and numerically. This allows several key features of DFT to be studied without approximations.\\[.75em]            
			This paper is a contribution to the ``Trygve Helgaker Festschrift'' on the occasion of Trygve's 70th birthday. Some of the authors had the privilege of having Trygve as a teacher and mentor in the field of mathematical DFT. It is therefore with great honor that we dedicate this work to the celebration of his birthday.
        \end{abstract}

\maketitle

\tableofcontents

%%%%%%%%%%%%%%%%%%%%%%%%%%%%%%%%%%%%%%%%%%%%%%%%%%%%%%%%%%%%%%%%%%%%%
%% Start the main part of the manuscript here.
%%%%%%%%%%%%%%%%%%%%%%%%%%%%%%%%%%%%%%%%%%%%%%%%%%%%%%%%%%%%%%%%%%%%%
\section{Introduction}\label{sec:Intro}
\subsection{Prelude and Overview}
The study of light-matter interactions forms the basis for understanding a wide range of phenomena whose effects are instrumental for measuring and manipulating matter in experiments.
At the fundamental level, charged particles interact among each other through their coupling to the photon field, a process that is described by quantum electrodynamics (QED)~\cite{Weinberg_1995,Peskin_1995,MandlShaw,ryder1996quantum,greiner2013field,spohn2004dynamics}. While the quantization of the electromagnetic field is often considered to only be relevant for high-energy physics, QED effects, such as spontaneous emission or the Purcell effect, also occur in the low-energy (non-relativistic) regime of charged particles. In recent years, many experimental and theoretical works have shown that in optical environments, such as Fabry--Pérot cavities, changes in the quantized light field can modify chemical and material properties even at equilibrium~\cite{ebbesen2016hybrid,garcia2021manipulating,Ruggenthaler2023}. It has therefore become increasingly relevant to extend well-established first-principles methods, such as density-functional theory (DFT) and coupled-cluster theory, to encompass QED~\cite{Ruggenthaler2014,Ruggenthaler2017,Ruggenthaler2018,haugland2020coupled,mordovina2020polaritonic}.

Due to its computational simplicity, DFT is the method predominantly used when studying quantum systems with large numbers of particles. In DFT, the $N$-body wavefunction---with its intractable dimensionality---is replaced by the one-body particle density. This dimensional reduction is precisely why DFT calculations provide effective approximations and thus have become an indispensable tool across many fields, such as chemistry, materials science and solid-state physics~\cite{Burke2012,Verma_2020,Teale2022}. Since the seminal papers of \citet{Hohenberg1964}, \citet{KS1965}, and \citet{Lieb1983}, significant efforts have been devoted both to the numerical and mathematical developments of DFT. Besides, motivated by its extremely elegant formulation in terms of convex analysis, a perspective towards DFT has emerged that makes it more than just an approximation method. The concave form of the ground-state energy in terms of the external potential naturally yields the universal functional as its Legendre--Fenchel transform. Then the Hohenberg--Kohn theorem is the statement that the subdifferential of the universal functional just contains a single element (see also \cref{sec:DFTBackground}) and $v$-representability connects closely to differentiability of this functional. The whole of DFT thus follows as a convex treatment of many-body quantum mechanics in the ground state. In this sense, DFT can be referred to more as a discovery than an invention~\cite{Helgaker-personal}. The DFT formulation of QED presented here serves a case in point for this viewpoint.

In this article we will exemplify several key features of DFT by derivations and examples using simple models for QED---settings that allow significantly more explicit constructions than the standard Coulombic DFT. This means the whole formalism of DFT can be well defined and important results like the Hohenberg--Kohn theorem or $v$-representability receive a mathematical rigorous treatment. Apart from formulating the basics of how to build up QEDFT, this has the additional value of constituting a pedagogical showcase for DFT in general. The richness of mathematical and physical concepts that enter the theory will become clearly appreciable in the course of this work. In this sense, this complements our previous, more technical work that aimed at defining the basic DFT structures for the Dicke model~\cite{Bakkestuen2024}.

The article is structured as follows. This introductory section continues with a short overview of quantum-electrodynamical density-functional theory (QEDFT) and briefly explain the landscape and hierarchy of QED models surrounding our model of choice, the \textit{quantum Rabi model}. We conclude the section with a summary of important concepts in the mathematical formulation of standard DFT. The quantum Rabi model setting, including the Hamiltonian, general ground-state properties, and hyper-virial results are presented in \cref{sec:QRabiModel}. Note that \cref{subsec:DickeModel,subsec:HKDickeModel} pertain to a generalization to the Dicke model. In a first reading these sections may be safely skipped as they do not directly affect the analysis of the quantum Rabi model, however they illustrate important aspects of DFT not readily seen in the simpler setting. \Cref{sec:HK} introduces the polarization and displacement as internal variables of this QEDFT formulation and establishes the corresponding Hohenberg--Kohn-type results. The DFT theory part is then continued by defining the Levy--Lieb constrained-search functional and studying its properties for the model at hand in \Cref{sec:LL}. This also establishes the important result that all internal-variable pairs are in fact $v$-representable if they are not on the boundary of their domain. As an important tool of DFT the adiabatic connection is analyzed in \Cref{sec:AC}, where we can give an almost explicit form due to the relative simplicity of the model. Finally, functional approximation that are based on a photon-free formulation are discussed in \Cref{sec:pf} and we compare to other results from QEDFT. We conclude in \Cref{sec:Conclusions}.

\subsection{Quantum-Electrodynamical Density-Functional Theory}\label{subsec:QEDFT}
Of specific interest for this work is the extension of DFT to QED, which has been termed QEDFT and has been investigated for the static~\cite{Ruggenthaler2017, penz2023structure2} as well as the time-dependent (and even relativistic) case~\cite{ruggenthaler2011time,tokatly2013time,Ruggenthaler2014,flick2019light,jestadt2019light,konecny2024relativistic}. As a formulation, QEDFT has been applied to a broad variety of physical and chemical situations and different approximations to the new photon-matter exchange-correlation field in the Kohn--Sham formulation of QEDFT have been proven to provide accurate results~\cite{pellegrini2015optimized,Flick2018,yang2021quantum,flick2022simple,liebenthal2023assessing,lu2024electron,dengoptical}. Yet a detailed mathematical investigation of this new form of a density-functional reformulation, similar to how it was performed in standard DFT~\cite{Lieb1983,eschrig2003,Helgaker2022bookchap}, has so far not been pursued, apart from \citet{Bakkestuen2024} that now led to this work. Such an investigation is, however, important not only as fundamental question, but also to further guide the development of QEDFT and its approximation strategies.

While most of the rigorous considerations in QEDFT are based on the Pauli--Fierz Hamiltonian~\cite{spohn2004dynamics}, various approximations to this Hamiltonian are used as a starting point for further investigations~\cite{Ruggenthaler2023}. These approximate Hamiltonians lead to a hierarchy of QEDFTs~\cite{Ruggenthaler2014,jestadt2019light} and yield a connection to well-established models of quantum optics that are designed to describe the photonic subsystem well, while significantly simplifying the matter part. One such paradigmatic quantum-optical model is the quantum Rabi model~\cite{Xie2017}, which will be the main focus of this article. Given the vast number of models available, let us briefly orient ourselves on where in the landscape of QED models our investigation will take place, before turning to the analysis.

\subsection{Models in QED}\label{subsec:QEDModels}
QED is arguably the most accurate description of light-matter interactions and as a field it is as old as quantum mechanics itself~\cite{Weinberg_1995}. It is described by the QED Lagrangian $\mathcal{L}_\mathrm{QED}$, and its equations of motion are the relativistic Dirac equation and the wave equation for the vector potential~\cite{MandlShaw,Peskin_1995,Weinberg_1995}. Although QED is one of the most accurate descriptions of nature ever conceived and has been studied for almost 100 years, it is full of challenges. Notably, within the full field-theoretic treatment QED processes are calculable only perturbatively and the complexity of the calculations rapidly increases in the perturbative expansion. Moreover, most applications in chemistry and solid-state physics involve energies that are not sufficiently large that relativistic effects become important, at least not for the fermionic degrees of freedom. In these applications, the treatment of the fully relativistic Dirac equation thus becomes superfluous. Within low-energy applications of QED one usually chooses the Coulomb gauge, also known as the radiation or transversal gauge, as it has a number of useful properties, especially in the semi-classical regime (quantum particles and classical radiation fields). In particular it removes the unphysical degrees of freedom in the gauge field, ensuring that the vector potential only has the two transversal polarizations, and it picks out the Coulomb potential to describe interactions between the fermions.

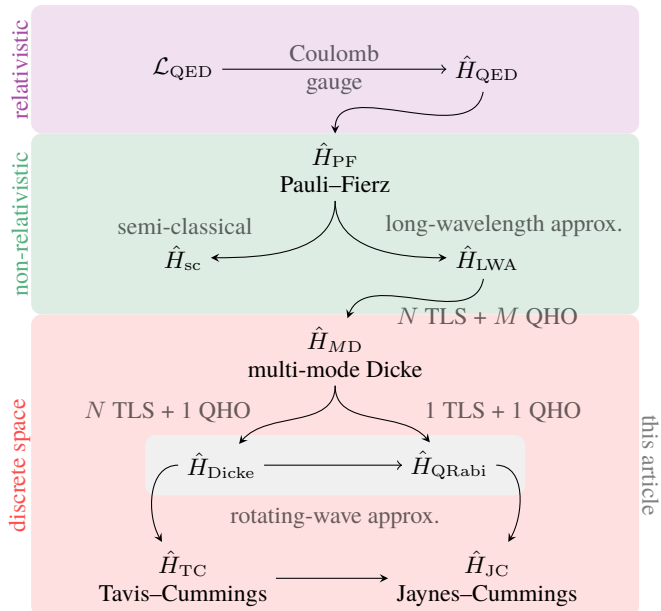
\begin{figure}[tb]
    \centering
    %auto-ignore
\usetikzlibrary{shapes.geometric, arrows,positioning}
\tikzstyle{VioletBlob} = [rectangle, rounded corners, minimum width=8cm, minimum height=1.7cm,text centered, draw=none, fill=violet!12]
\tikzstyle{GreenBlob} = [rectangle, rounded corners, minimum width=8cm, minimum height=2.4cm,text centered, draw=none, fill=ForestGreen!12]
\tikzstyle{RedBlob} = [rectangle, rounded corners, minimum width=8cm, minimum height=4cm,text centered, draw=none, thick, fill=red!12]
\tikzstyle{GrayBlob} = [rectangle, rounded corners, minimum width=5cm, minimum height=0.8cm,text centered, draw=none, fill=gray!12]

\begin{tikzpicture}[node distance=2cm]
    \node [VioletBlob] (rel) {};
    \node (nonrel) [GreenBlob, below of=rel, node distance=2.051cm] {}; 
    \node (discrete) [RedBlob,below of=nonrel,node distance=3.201cm] {};
    
    \node (RelLabel) [left of=rel, node distance=4.15cm,rotate=90,opacity=1] {\textcolor{violet!70}{relativistic}};
    \node (NonRelLabel) [left of=nonrel, node distance=4.15cm,rotate=90,opacity=1] {\textcolor{ForestGreen!70}{non-relativistic}};
    \node (DiscreteLabel) [left of=discrete, node distance=4.15cm,rotate=90,opacity=1] {\textcolor{red!70}{discrete space}};
    
    \node (LQED) [left of=rel,node distance=2cm] {$\mathcal{L}_\mathrm{QED}$};
    \node (HQED) [right of=rel,node distance=2cm] {$\hat{H}_\mathrm{QED}$};
    % \node (Gauge1) [left of=HQED] {$\hat{H}_\mathrm{??}$};
    % \node (Gauge2) [right of=HQED] {$\hat{H}_\mathrm{??}$};

    \node (PauliFierz) [below of=rel,node distance=1.3cm,align=center] {$\hat{H}_\mathrm{PF}$ \\ \small{Pauli--Fierz}};

    \node (nonrel1) [below of=PauliFierz, node distance=1.2cm] {};
    \node (PFLWA) [right of=nonrel1,node distance=2cm,align=center] {$\hat{H}_\mathrm{LWA}$};
    \node (SemiClassical) [left of=nonrel1,node distance=2cm,align=center] {$\hat{H}_\mathrm{sc}$};
     
    % \node (PFExplain) [below right of=HQED, node distance=1.2cm,opacity=0.6] {\small{non-rel. lim}};
    \node (MMDExplain) [below of=PFLWA, node distance=0.8cm,opacity=0.6] {\small{$N$ TLS + $M$  QHO}};

    \node (MultiDicke) [below of=nonrel1, node distance=1.25cm, align=center] {$\hat{H}_{M\mathrm{D}}$\\ \small{multi-mode Dicke}};

    \node (ModelLevel) [GrayBlob,below of=MultiDicke,node distance=1.5cm] {};
    \node (ThisLabel) [right of=ModelLevel, node distance=4.15cm,rotate=270,opacity=1] {\textcolor{gray}{this article}};
    
    \node (Dicke) [left of=ModelLevel,node distance=1.5cm] {$\hat{H}_\mathrm{Dicke}$};
    \node (Rabi) [right of=ModelLevel,node distance=1.5cm] {$\hat{H}_\mathrm{QRabi}$};
    \node (RWA) [below of=ModelLevel,node distance=0.7cm,opacity=0.6]  {\small{rotating-wave approx.}};

    \node (LWAExplain) [above right of=PFLWA, node distance=0.6cm, xshift=-0.2cm, opacity=0.6] {\small{long-wavelength approx.}};
    \node (scExplain) [above right of=SemiClassical, node distance=0.6cm, xshift=-0.4cm, opacity=0.6] {\small{semi-classical}};
    \node (DickeExplain) [above left of=Dicke,node distance=1cm, opacity=0.6] {\small{$N$ TLS + 1 QHO}};
    \node (RabiExplain) [above right of=Rabi,node distance=1cm,opacity=0.6] {\small{1 TLS + 1 QHO}};

    \node (Cummings) [below of=ModelLevel, node distance=1.5cm] {};    
    \node (JaynesCummings) [right of=Cummings,node distance=2cm,align=center] {$\hat{H}_\mathrm{JC}$ \\ \small{Jaynes--Cummings}};
    \node (TavisCummings) [left of=Cummings,node distance=2cm,align=center] {$\hat{H}_\mathrm{TC}$ \\ \small{Tavis--Cummings}};
    
    \draw[->] (LQED) -- (HQED) node[midway,opacity=0.6,rotate=0,align=center] {\small{Coulomb}\\ \small{gauge}};
    % \draw[->,>=stealth] (LQED) to[out=200,in=80] (Gauge1);
    % \draw[->,>=stealth] (LQED) to[out=340,in=100] (Gauge2);
    \draw[->,>=stealth] (HQED) to[out=260,in=90] (PauliFierz);
    \draw[->,>=stealth] (PauliFierz) to[out=270,in=0] (SemiClassical);
    \draw[->,>=stealth] (PauliFierz) to[out=270,in=180] (PFLWA);
    \draw[->,>=stealth] (PFLWA) to[out=260,in=70] (MultiDicke);
    \draw[->,>=stealth] (MultiDicke) to[out=270,in=50] (Dicke);
    \draw[->,>=stealth] (MultiDicke) to[out=270,in=130] (Rabi);
    \draw[->] (Dicke) -- (Rabi) {};
    \draw[->,>=stealth] (Dicke) to[out=180,in=120] (TavisCummings) ;
    \draw[->,>=stealth] (Rabi) to[out=0,in=60] (JaynesCummings) {};
    \draw[->,>=stealth] (TavisCummings) to[out=0,in=180]  (JaynesCummings) {};
\end{tikzpicture}
    
    \caption{Hierarchy of QED models relevant for the current analysis. The top panel (violet) contains the fully relativistic theory, the middle panel (green) is the non-relativistic limit in form of the Pauli--Fierz Hamiltonian and further simplifications, and bottom panel (red) has different models following from discretizations. The models of interest in this article are marked by the injected gray panel, of particular interest is the quantum Rabi model.}
    \label{fig:ModelHierarchy}
\end{figure}

The wide range of possible simplifications of the theory gives rise to a large hierarchy of models. A small snapshot of such models is outlined in \cref{fig:ModelHierarchy}. Let us briefly comment on how they connect to each other. Starting from $\mathcal{L}_\mathrm{QED}$, there are a multitude of gauge-fixing conditions, each with a particular set of advantages and disadvantages. By choosing the Coulomb gauge and equal-time commutation relations, one arrives at what we will refer to as the \textit{relativistic QED Hamiltonian} $\hat{H}_\mathrm{QED}$ (as the volume integral over the QED Hamiltonian density). Then, in taking the non-relativistic (low energy) limit, one arrives at the \textit{Pauli--Fierz Hamiltonian} $\hat{H}_\mathrm{PF}$~\cite{spohn2004dynamics}, where usually the Born--Oppenheimer approximation is already included. A further important simplification is the long-wavelength or dipole approximation, where the transfer of momentum between light and matter is assumed to be zero~\cite{Ruggenthaler2023}. In this context, we note that different forms of the resulting Hamiltonian are possible.  In certain forms besides the Coulomb matter-matter interaction, photon-induced direct matter-matter coupling terms appear, which are called dipole-self-energies or self-polarization terms~\cite{rokaj2018light, schaefer2020relevance}. In the following we discard such direct matter-matter interaction terms that arise due to further transformations of the dipole-approximated Pauli--Fierz Hamiltonian but want to highlight that these terms can become important in certain situations~\cite{sidler2024unraveling}. Despite simplifications, such models are still difficult to solve. Thus, many applications rely on further approximations and one possible avenue is to describe the light part classically while keeping the matter part quantum, which yields what is usually referred to as a \textit{semi-classical} approach. An example of such approach is the WKB (Wentzel--Kramers--Brillouin) approximation. On the other side, the models we are interested in here use discretization in both the light and matter parts, but retain the ``quantumness'' of both. For such models, one picks out $M$ photonic modes from the expansion of the vector-potential operator and discards the rest, an efficient approach when only a limited number of photonic modes actively couple to the problem. This allows the photonic sector to be treated as $M$ quantum harmonic oscillators (QHO). If one further simplifies the setting by treating the fermionic sector as $N$ two-level systems (TLS), one obtains what we refer to as the \textit{multi-mode Dicke Hamiltonian} $\hat{H}_\mathrm{MD}$. 

Of particular interest for this article are two variants of the multi-mode Dicke model. In particular, the restriction to just one two-level system and one photonic mode, known as the \textit{quantum Rabi model} in the literature, whose Hamiltonian $\hat{H}_\mathrm{QRabi}$ is given in \cref{eq:InternalHamiltonian}. The other one is the slightly more general \textit{Dicke model}, whose Hamiltonian $\hat{H}_\mathrm{Dicke}$ is given by \cref{eq:DickeHamiltonian}. The model consists of  $N$ two-level systems and a single QHO, and will be studied in selected parts of the article. These models have also gained interest in the mathematics community due to their quite intricate spectral properties~\cite{Braak2011,Braak2013-Dicke,Braak2015}. Recently, a quite advanced DFT approximation for the Dicke model was suggested by \citet{Novokreschenov2023}. Finally, by performing the rotating-wave approximation, the Dicke and quantum Rabi models can be reduced to the Tavis--Cummings and Jaynes--Cummings models, respectively.

\subsection{Elements from Standard Density-Functional Theory}\label{sec:DFTBackground}
Let us briefly recapitulate the mathematical description of standard DFT, in order to set the stage for our analysis of QEDFT. Standard DFT uses the one-body particle density $\rho$ as the primary quantity to describe a system of $N$ electrons within a quantum-mechanical treatment. Here, the density is defined from the wavefunction as
\begin{equation*}
    \rho_\psi (\mathbf{x}) = N  \int_{\RR^{3(N-1)}} \abs{\psi(\mathbf{x},\mathbf{x}_2, \dots, \mathbf{x}_N)}^2 \dd{\mathbf{x}_2} \dots \dd{\mathbf{x}_N} .
\end{equation*}
The set of fermionic $N$-particle wavefunctions considered are $L^2$-normalized and have finite kinetic energy, which gives a set of $N$-representable densities $\mathcal I_N$~\cite{Lieb1983} defined by the following properties,
\begin{equation*}
    \begin{aligned}
        & \rho(\mathbf{x}) \geq 0,\quad  \int_{\RR^3} \rho(\mathbf{x}) \dd \mathbf{x} =N,\\
        & T_\mathrm{vW}(\rho) = \frac 1 2 \int_{\RR^3} \big\vert \nabla \sqrt{\rho(\mathbf{x})} \big\vert^2 \dd \mathbf{x}< \infty.
    \end{aligned}
\end{equation*}
The set of $v$-representable densities is then the subset of densities $\rho_\psi$ originating from ground states $\psi$ of the Schrödinger equation with all possible external potentials $v$.\footnote{This way, the set of $v$-representable densities depends on the selected set of allowed potentials. A natural choice is $L^{3/2} + L^\infty$ as explained later in this section.} The result that each $v$-representable density then corresponds to exactly one potential $v$ (up to a constant) is the celebrated Hohenberg--Kohn theorem~\cite{Hohenberg1964}. Strictly speaking, further restrictions on the class of potentials considered are needed, before we can conclude that the potential is unique. We point the interested Reader to \citet{Garrigue2018} for further details. Perhaps even more important, the Hohenberg--Kohn theorem does not say anything about a possible surjectivity of the mapping from potentials to densities, i.e., it does not tell us which densities are $v$-representable in the first place.

Lieb's~\cite{Lieb1983} explicit construction of a wavefunction from a determinant shows that for each $\rho\in\mathcal I_N$ there is an anti-symmetric wavefunction $\psi$ that has $\rho_\psi =\rho$ and finite kinetic energy. This is what the term $N$-representability for densities refers to. Then, Levy's~\cite{Levy79} direct way of transitioning from a variation over wavefunctions to instead obtain the ground-state energy by means of a constrained search introduces the Levy--Lieb functional $F_\mathrm{LL}$,
\begin{align*}
        E(v) &= \inf_\psi \qty{ \langle\psi | \hat H(v) | \psi \rangle : \norm{\psi} = 1,\, \norm{\nabla\psi} < \infty } \\
        & = \inf_\rho \Big\{ \inf \qty{ \langle\psi | \hat{T} + \hat{W} | \psi \rangle : \psi\in \mathcal M_\rho } + \langle v,\rho \rangle \Big\} \\
        & = \inf_\rho \Big\{ F_\mathrm{LL}(\rho) + \langle v,\rho \rangle \Big\}.\numberthis{\label{eq:Egs}}
\end{align*}
Here, $\hat H(v)=\hat{T} + \hat{W} +\sum_{i=1}^N v(\mathbf{x}_i)$, with $\hat{T}$ and $\hat{W}$ the standard kinetic energy and two-body operators, respectively, $\langle v,\rho \rangle = \int_{\mathbb R^3} v\rho$ and we introduced the constraint manifold
\begin{equation*}
    \mathcal M_\rho =\qty{ \psi : \norm{\psi} = 1,\, \norm{\nabla\psi} < \infty,\, \psi \mapsto \rho }.
\end{equation*}
\Cref{eq:Egs} above defines the Levy--Lieb functional, 
\begin{equation*}
    F_\mathrm{LL}(\rho) = \inf \qty{ \langle\psi | \hat{T} + \hat{W}  | \psi \rangle : \psi\in \mathcal M_\rho} ,
\end{equation*}
as a constrained minimization of the internal energy over $\mathcal M_\rho$. Its effective domain (the points where it is finite) is the set of $N$-representable densities $\mathcal I_N$. It is worth pointing out that even for $N$-representable densities, an optimizer $\psi_0\in \mathcal M_\rho$ might not be associated with a Lagrange multiplier. In standard DFT, this Lagrange multiplier would be a scalar potential for which $\psi_0$ is a ground state (or even an excited state). We bring this point up here for the benefit of the Reader to contrast with the results available for the quantum Rabi and Dicke models of QEDFT presented later. 

An alternative perspective on DFT is to view the ground-state energy and universal density functional as a conjugate (Legendre-)pair $(F,E)$. In this spirit, Lieb introduced the \emph{convex} functional
\begin{equation*}
    F(\rho) = \sup_v \Big\{ E(v) - \langle v,\rho \rangle \Big\},
\end{equation*}
where $E(v)$ is the ground-state energy introduced in \cref{eq:Egs}. It holds that $F$ is the convex envelope of $F_\mathrm{LL}$ and that it is equal to the constrained-search functional over mixed (instead of pure) states~\cite{Lieb1983}. We can further replace $F_\mathrm{LL}$ by $F$ in the expression for the ground-state energy, \cref{eq:Egs}. Employing an unusual sign convention\footnote{Instead of the standard convex conjugate $f^*(y) = \sup_{x}(\langle y,x\rangle - f(x))$, we use $E(v) = F^\wedge(v) = -F^*(-v)$ and $F(\rho) = E^\vee(\rho) = (-E)^*(-\rho)$.}, $(F,E)$ forms a conjugate pair, where a ground-state density $\rho$ together with its potential $v$ saturates the Fenchel--Young inequality $E(v) - F(\rho) \leq \langle v,\rho\rangle$, i.e., $E(v) - F(\rho) = \langle v,\rho\rangle$.  This \emph{Lieb functional} $F$ is defined on $X=L^1\cap L^3$, which properly contains $\mathcal I_N$, where $F(\rho)=\infty$ for all $\rho$ that are not $N$-representable. In this mathematical setting, the natural space of potentials becomes $X^*=L^{3/2} + L^\infty$, which includes molecular Coulomb-type potentials $v(\mathbf x) =\sum_a Z_a |\mathbf R_a - \mathbf x|^{-1}$. The potential space $X^*$ is the dual space of $X$ and ensures finite interaction between the density $\rho$ and the potential $v$ by Hölder's inequality,
\begin{equation*}
    \left\vert \langle v,\rho \rangle \right\vert \leq \Vert v \Vert_{X^*} \Vert \rho\Vert_X . 
\end{equation*}

Placing our attention again on the variational principle for the ground-state energy, \cref{eq:Egs}, we note that since the variation of $\rho$ at any stationary point must be zero, we can equally formulate this with a differential. Then a potential $v$ determined like this actually yields $\rho$ in the ground state if we happen to be in a global minimum of $\rho\mapsto F_\mathrm{LL}(\rho) + \langle v,\rho \rangle$, else it can still be the density of an excited state. But remember that we have $F(\rho)$ convex, so using this functional instead, we can be sure that we obtain a potential that actually yields the correct density in the ground state. We thus write
\begin{equation}\label{eq:F-subdiff}
    -v\in\underline{\partial} F(\rho)
\end{equation}
where we used the generalized concept of a \emph{subdifferential}, because $F(\rho)$ is not differentiable in the usual sense~\cite{Lammert2007}. The subdifferential is defined as the set of all bounded tangent functionals below $F$ at $\rho$,
\begin{equation*}
\begin{aligned}
    & \underline{\partial} F(\rho) \\ 
    &= \big\{  v\in X^* \mid \forall\rho'\in X: F(\rho')
    \geq \underbrace{F(\rho)+\langle v,\rho'-\rho \rangle}_{\text{tangent of $F$ at $\rho$}} \big\}.
\end{aligned}
\end{equation*}
That is, if the subdifferential is non-empty then the potential is an element, else the density must be marked as non-$v$-representable. In standard DFT the potential is always unique by the Hohenberg--Kohn theorem (\emph{if} it exists), so we have at most one element in the subdifferential $\underline{\partial} F(\rho)$ (up to adding a constant). This, however, is no longer true in other variants of DFT. Different potentials can lead to the same density variables and thus provide counterexamples to the Hohenberg--Kohn theorem on a finite lattice~\cite{penz2021-Graph-DFT} or in paramagnetic current-DFT~\cite{Capelle2002,Tellgren2012,LaestadiusBenedicks2014}, while the situation remains undecided in total-current DFT~\cite{Laestadius2021}.

Finally, we address the important exchange-correlation contribution $E_\mathrm{xc}(\rho)$ to $F(\rho)$. This can be given by the \emph{density-fixed adiabatic connection}~\cite{LANGRETH19751425,Polak-et-al-2024} that introduces a coupling constant $\lambda$ in front of $\hat{W}$. It is then possible to connect the simpler non-interacting case $\lambda=0$ to the physical system of full interactions at $\lambda=1$ (as dictated by $\hat W$), or beyond with $\lambda >1$ (or even taking $\lambda \to \infty$, referred to as the strong-interaction limit~\cite{PhysRevA.60.4387}). For a fixed $\rho$, this defines a $\lambda$-dependent Lieb functional $F^\lambda(\rho)$. We can then connect this functional at any $\lambda>0$ to the non-interacting (kinetic energy only) $T(\rho)=F^0(\rho)$ by means of an integral representation with the Newton--Leibniz trick~\cite{Laestadius2024}
\begin{equation}\label{eq:NewtonLeibniz}
    F^\lambda(\rho) = T(\rho) + \int_0^\lambda f^\nu(\rho) \dd{\nu}.
\end{equation}
Here we have used that $\lambda \mapsto F^\lambda(\rho)$ is a concave function and where $f^\nu(\rho)$ is any element of the \emph{superdifferential} of $\lambda\mapsto F^\lambda(\rho)$. The superdifferential is the collection of all tangents that lie above the function, and is the concave equivalent of the subdifferential that we defined above. More precisely, in the current context the superdifferential is the set-valued mapping given by 
\begin{equation}\label{eq:F-superdiff}
    \overline{\partial}_\lambda F^\lambda = \qty{f \in \RR \mid \forall \lambda'\in \RR: F^{\lambda'} \leq F^{\lambda}+ f\cdot( \lambda'-\lambda ) }.
\end{equation}
We note in passing that the superdifferential is given by the interval $[a,b]$ with $a$ being the right and $b$ the left derivative of $F^\lambda$ at $\lambda$ (and that those always exist and are equal except at countably many points). Assuming that the density matrix $\Gamma^\lambda$ minimizes the internal energy from $\hat T + \lambda \hat W$ under the given density constraint, it can be pointed out that~\cite{Laestadius2024}
\begin{equation}\label{eq:F-superdiff-trace-W}
    f^\lambda(\rho) = \Tr \hat W \Gamma^\lambda  \in \overline{\partial}_\lambda F^\lambda(\rho).
\end{equation}
Furthermore, we can subtract the Hartree energy, $E_\mathrm{H}(\rho) = \iint \rho(\mathbf{r})\rho(\mathbf{r}')/\abs{\mathbf{r} - \mathbf{r}'} \dd[3]{\mathbf{r}} \dd[3]{\mathbf{r}'}$, such that
\begin{equation}\label{eq:ACDFT}
    \begin{aligned}
        F^\lambda(\rho) &= T(\rho) + \lambda E_\mathrm{H}(\rho) + \lambda E_\mathrm{xc}^\lambda(\rho), \\
        E_\mathrm{xc}^\lambda(\rho) &= \frac 1 \lambda \int_0^\lambda (f^\nu(\rho) - E_\mathrm{H}(\rho))\dd \nu.
    \end{aligned}
\end{equation}
If we set $ E_\mathrm{xc}(\rho)= E_\mathrm{xc}^{\lambda=1}(\rho)$, we see that the adiabatic connection allows us to write (recall that $\Gamma^\lambda$ is the minimizer corresponding to $F^\lambda(\rho)$)
\begin{equation*}
    E_\mathrm{xc}(\rho) =  \int_0^1 ( \Tr \hat W \Gamma^\lambda - E_\mathrm{H}(\rho))\dd \lambda,
\end{equation*}
which is an important result in exact DFT for the unknown exchange-correlation contribution to 
$F(\rho)$.

Furthermore, the term $E_\mathrm{xc}$ can itself be partitioned. The Levy--Perdew definition~\cite{LevyPerdewIJQC1994} of the exchange energy as the high-density limit $E_\mathrm{x}(\rho) = \lim_{\gamma \to \infty} \gamma^{-1}E_\mathrm{xc}(\rho_\gamma)$ (where $\rho_\gamma(\mathbf{r}) = \gamma^3 \rho(\gamma \mathbf r)$), then implies~\cite{Laestadius2024} (from the fact that $F^{\gamma\lambda}(\rho_\gamma) = \gamma^2 F^\lambda(\rho)$)
\begin{equation}\label{eq:DefExhange}
    E_\mathrm{x}(\rho) := \lim_{\lambda\to 0^+}\frac{F^\lambda (\rho) - F^0(\rho)}{\lambda}
    - E_\mathrm{H}(\rho).
\end{equation}
In other words, the exchange energy is (after subtracting $E_\mathrm{H}$) the right derivative of the mapping $\lambda\mapsto F^\lambda(\rho)$ at $\lambda=0$. The correlation term $E_\mathrm{c}$ is then simply the difference $E_\mathrm{xc} - E_\mathrm{x}$.

Even from this short summary it becomes obvious that DFT has a very rich mathematical structure, but it is also ripe with difficulties such as non-$v$-representability and non-differentiability~\cite{Lammert2007,penz2023structure1}. That is why it is beneficial, especially when including new effects, to study the extended theory thoroughly on the basis of simple model systems. It is thus the objective of this work to detail the above mathematical formulation of standard DFT in the context of model systems for QEDFT. Let us first turn to the QED-model Hamiltonians at hand, i.e., the quantum Rabi and Dicke models.

\section{The Quantum Rabi Model}\label{sec:QRabiModel}
\subsection{Model Definition}

In atomic units\footnote{That is units such that $\hbar = m_e = e = 4 \pi \varepsilon_0 = 1$.}, the Hamiltonian of the quantum Rabi model, $\hat{H}_\mathrm{QRabi}$, from now on denoted $\hat{H}_0$,  can be written as
\begin{equation}\label{eq:InternalHamiltonian}
    \hat{H}_0 = \frac{1}{2} \hat{p}^2 + \frac{\omega^2}{2} \qop^2
        - t \hat{\sigma}_x + g \hat{\sigma}_z \qop.
\end{equation}
It describes a single two-level system, like two levels in an atom or molecule, a simple dipole, or the spin states of an electron. This ``matter'' system gets coupled to a single photonic mode modeled as a quantum harmonic oscillator (QHO). Here $\omega$ is the frequency of the photonic mode, $t > 0$ the two-level kinetic hopping parameter, and $g \in \RR $  the coupling parameter between the photonic system and the two-level system. The operators
\begin{equation*}
    \qop =  \frac{1}{\sqrt{2\omega}} \qty( \hat{a}^\dagger +  \hat{a} )
    \quad \text{and} \quad
    \hat{p} = i \sqrt{\frac{\omega}{2}} \qty( \hat{a}^\dagger -  \hat{a} )
\end{equation*}
are the QHO position and momentum operators, respectively, and
$\hat{a}^\dagger$ ($\hat{a}$) the raising (lowering) QHO ladder operators. Moreover,
\begin{equation*}
    \hat\sigma_x = \begin{pmatrix} 0 & 1 \\ 1 & 0 \end{pmatrix}, \quad
    \hat\sigma_y = \begin{pmatrix} 0 & -i \\ i & 0 \end{pmatrix}, \quad
    \hat\sigma_z = \begin{pmatrix} 1 & 0 \\ 0 & -1 \end{pmatrix},
\end{equation*}
are the usual Pauli matrices.

In order to be able to do DFT, we will consider an extension of the quantum Rabi model which couples the photonic mode and the two-level system to external quantities $j \in \RR$ and $v \in \RR$, respectively.  We will refer to $\hat{H}_0$ as the \textit{internal} Hamiltonian, whereas the \textit{full} Hamiltonian is given by
\begin{equation}\label{eq:FullHamiltonian}
    \hat H(v,j) = \hat H_0 + v \hat \sigma_z + j \qop.
\end{equation}
The model setup of the full Hamiltonian is schematically illustrated in \cref{fig:QRabiSketch}. In this setting, applying a potential $v$ corresponds to tuning the level splitting, whereas applying a current $j$ displaces the QHO potential and induces an energy shift~\cite{Ruggenthaler2014}.

\begin{figure}[tbh]
    \centering
    %auto-ignore
\usetikzlibrary{shapes.geometric, arrows}
\usetikzlibrary{3d}
\usetikzlibrary{decorations.pathmorphing}

\def\Ha{1.7}
\def\Hb{1.5}
\def\Wa{3}
\def\Wb{2}

\begin{tikzpicture}[decoration=snake]
        \coordinate (l1) at (-\Wa,-\Ha);
        \coordinate (l2) at (-\Wb,-\Ha);
        \coordinate (l3) at (-\Wb,\Ha);
        \coordinate (l4) at (-\Wa,\Ha);
        \coordinate (r1) at (\Wb,-\Ha);
        \coordinate (r2) at (\Wa,-\Ha);
        \coordinate (r3) at (\Wa,\Ha);
        \coordinate (r4) at (\Wb,\Ha);

        % Mirrors
        \draw[draw=none,fill=gray!70] (l1) -- (l2) to[out=130,in=230]  (l3) -- (l4) -- (l1);
        \draw[draw=none,fill=gray!70] (r1) -- (r2) -- (r3) -- (r4) to[out=310,in=50] (r1);

        % Two-level system
        \draw[thick,dashed] (-1.5,-0.7) -- (1.5,-0.7);
        \draw[thick,dashed] (-1.5,0.7) -- (1.5,0.7);
        \node at (1.8,-0.7) {$\psi_-$};
        \node at (1.8,0.7) {$\psi_+$};
        
        % Spheres
        \draw (0,-0.7) circle (5pt);
        \draw (0,0.7) circle (5pt);
        \shadedraw (0,0) circle (5pt);

        % Quantum Harmonic Oscillator 
        \draw[decorate] (-2,0) -- (0,0);
        \node at (-1,0.3) {$\omega$};

        % External potentials
        \draw[decorate,dotted,thick] (-4,0) -- (-2,0);
        \shadedraw[draw=none] (-4,0) circle (5pt);
        \node at (-4.4,0) {$j$};

        \node (v) at (1,0) {$v$};
        \draw[->,thick] (v) -- (1,0.6);
        \draw[->,thick] (v) -- (1,-0.6);

\end{tikzpicture}
    \caption{A schematic illustration of the full quantum Rabi model, \cref{eq:FullHamiltonian}, where $\omega$ is the photon frequency, the potential $v$ tunes the level splitting, and $j$ couples to the photonic degree-of-freedom. Here, $\psi_\pm$ are the components of the wavefunction in the eigenbasis of $\hat{\sigma}_z$.}
    \label{fig:QRabiSketch}
\end{figure}
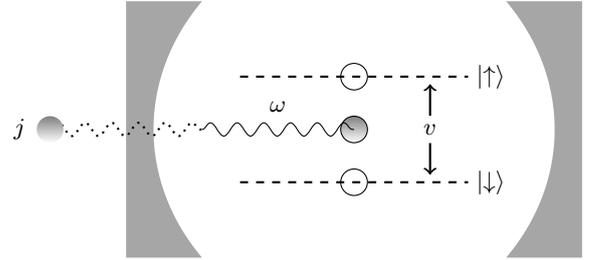

\subsection{Spaces and Domains}\label{subsec:Spaces}

The state space of the system is the Hilbert space $\mathcal{H} =  L^2(\RR, \CC) \otimes \CC^2  \simeq L^2(\RR, \CC^2)$, allowing us to represent a general state \(\psi \in \mathcal H\) as a two-component function with respect to the eigenbasis of the $\hat{\sigma}_z$ Pauli operator,
\begin{equation} \label{eq:pauli-representation}
    \psi(\qval) = \begin{pmatrix}
            \psi_+(\qval) \\
            \psi_-(\qval)
        \end{pmatrix},
    \quad
    \psi_+, \psi_- \in L^2(\RR , \CC ).
\end{equation}
We will work exclusively in this representation. The norm of \(\psi \in \mathcal H\) will be denoted by \(\norm{\psi}_\mathcal{H}\), and can be expressed in terms of the standard $L^2$-norms of \(\psi_\pm\) as $\norm{\psi}_\mathcal{H}^2 = \norm{\psi_+}_{L^2}^2 + \norm{\psi_-}_{L^2}^2$. The subscripts of the norms will be left implicit in the following. Likewise, the inner product between \(\psi, \varphi \in \mathcal H\) can be expressed by the inner product in \(L^2\) between the components, \(\braket{\psi}{\varphi} = \dua{\psi_+}{\varphi_+} + \dua{\psi_-}{\varphi_-}\). We will thus reserve the notation \(\braket{\cdot}{\cdot}\) for inner products between states in the full Hilbert space of the system, and use \(\dua{\cdot}{\cdot}\)  for inner products between the component wavefunctions in \(L^2\). The expectation value of an observable \(\hat A\) in the state \(\psi\) will be denoted \(\ev*{\hat A}{\psi}\). We will usually leave the \(\qval\)-dependence implicit in integrals, and unless stated otherwise, all integrals are definite integrals over $\RR$. 

All physical eigenfunctions are subject to the usual constraint of finite energy. For just the QHO, this means
\begin{equation*}
    \ev{\hat p^2 + \omega^2 \qop^2}{\psi} = \norm{\hat p\psi}^2 + \omega^2\norm{\qop\psi}^2 < \infty,
\end{equation*}
which is equivalent to $\norm{\hat p\psi}, \norm{\qop\psi} < \infty$ holding simultaneously. These conditions give us a space for the \emph{admissible} states, which is also the form domain~\cite{reed-simon-1} of the operator $\hat p^2 + \omega^2 \qop^2$,
\begin{equation*}
    Q_0 := \qty{ \psi \in \mathcal H : \norm{\hat p \psi},\norm{\qop\psi} < \infty }.
\end{equation*}
The Hamiltonians $\hat H_0$ and $\hat H(v,j)$ have a finite expectation value with respect to any $\psi\in Q_0$.

We will now define two variables, \(\sigma_\psi\) and \(\xi_\psi\), that will later become important as descriptors of the system in the context of DFT. In particular, let us define the polarization
\begin{equation*}
    \sigma_\psi
        = \ev{\hat \sigma_z}{\psi}
        = \norm{\psi_+}^2 - \norm{\psi_-}^2.
\end{equation*}
From the normalization condition \(\norm{\psi} = 1\) it follows that \(\abs{\sigma_\psi} \leq 1\) and that
\begin{equation} \label{eq:PauliProjectionConstraint}
     \norm{\psi_\pm}^2 = \frac{1 \pm \sigma_\psi}{2}.
\end{equation}
Similarly, we define the displacement of the photon field as
\begin{equation*}
    \xi_\psi
        = \ev{\qop}{\psi}
        = \int \qval \abs{\psi}^2 \, \dd{\qval},
\end{equation*}
using the shorthand \(\abs{\psi}^2 = \abs{\psi_+}^2 + \abs{\psi_-}^2\). We note that these variables satisfy $(\sigma_\psi, \xi_\psi) \in [-1, \, 1] \cross \RR$ and we sometimes call them, in analogy to standard DFT, a \emph{density pair}. They will be connected to the DFT treatment in \cref{subsec:variables}, but for now they serve only as notational shorthands.

\subsection{Properties of the Ground State}\label{subsec:GSProperies}
The point of departure for our investigation is the usual variational formulation for the ground-state energy corresponding to an external pair $(v,j) \in \RR^2$,
\begin{equation}
    \label{eq:GroundStateEnergy}
    E_0(v,j) := \inf_{\substack{\psi \in Q_0 \\ \|\psi\|=1}} \ev{\hat{H}(v,j)}{\psi}.
\end{equation}
Note that with the above definition alone, it is not guaranteed that there exists for all pairs \((v, j) \in \RR^2\) a normalized state in \(Q_0\) which realizes the ground-state energy \(E_0(v, j)\), i.e., that the infimum is in fact a minimum. Nevertheless, in this section we will first obtain a lower bound on the energy and then establish the existence of a ground state that is analytic, real, strictly positive and unique.

We begin by noting that for any $\psi \in Q_0$ the expectation value of the Hamiltonian is
\begin{align*}
    &\ev{\hat{H}}{\psi} = \frac{1}{2} \int \abs{\psi'}^2\dd{\qval} + \frac{\omega^2}{2} \int \qval^2 \abs{\psi}^2  \dd{\qval} \\
    &\qquad -2 t \Re \int \psi_+^* \psi_- \dd{\qval}+ g \int \qval \qty(\abs{\psi_+}^2 - \abs{\psi_-}^2) \dd{\qval} \\
    &\qquad + v \int \qty( \abs{\psi_+}^2 - \abs{\psi_-}^2) \dd{\qval} + j \int \qval \abs{\psi}^2 \dd{\qval}.
\end{align*}
By completing the squares, we have the simpler form 
\begin{equation}\label{eq:2HO}
    \begin{aligned}
        \ev{\hat{H}}{\psi} =& \sum_{\alpha \in \{ \pm \} } \int \qty[ \frac{1}{2} \abs{\psi_\alpha'}^2 + V_\alpha(\qval) \abs{\psi_\alpha}^2 ]  \dd{\qval} \\
        &- 2t \Re \int \psi_+^* \psi_- \dd{\qval}
    - \frac{j^2 + g^2}{2 \omega^2},
    \end{aligned}
\end{equation}
where the harmonic-oscillator potentials are 
\begin{equation} \label{eq:QHOPotentials}
    V_\pm(\qval) = \frac{\omega^2}{2} \qty(\qval + \frac{j \pm g}{\omega^2})^2  \pm \qty( v - \frac{j g}{\omega^2} )
\end{equation}
and where the kinetic term
\begin{equation*}
    - 2t \Re \int \psi_+^* \psi_- \dd{\qval} = -2t \Re \dua{\psi_+}{\psi_-} =\ev{-t\hat\sigma_x}{\psi}
\end{equation*}
did not change.
This shows that the quadratic form of $\hat{H}(v,j)$ can be expressed in terms of two coupled harmonic oscillators with potentials parametrically dependent on \(v\) and \(j\).  For the sake of brevity, let $C =(j^2 + g^2)/2 \omega^2$. Using the Cauchy--Schwarz inequality, we obtain from \cref{eq:PauliProjectionConstraint} that the kinetic term fulfills the estimate
\begin{align*}
    2 \Re \int \psi_+^* \psi_- \, \dd{\qval}
        &\le \Abs{2 \Re \int \psi_+^* \psi_- \, \dd{\qval}} \\
        \le 2 \Abs{\int \psi_+^* \psi_- \, \dd{\qval}}
        &\le 2 \norm{\psi_+} \norm{\psi_-}
        = \sqrt{1 - \sigma_\psi^2}.
\end{align*}
The Hamiltonian in \cref{eq:FullHamiltonian} thus has a well-defined ground-state energy, $E_0(v,j)$, for all external pairs $(v, j) \in \RR^2$, as the quadratic form of \(\hat H\) is bounded from below,
\begin{align*}
    &\ev{\hat{H}}{\psi} \\
        &\geq \sum_{\alpha \in \{ \pm \} }
            \int \qty[ \frac{1}{2} \abs{\psi_\alpha'}^2 + V_\alpha(\qval)\abs{\psi_\alpha}^2 ] \dd{\qval} 
         -t \sqrt{1 - \sigma_\psi^2} - C \\ %\frac{j^2 + g^2}{2 \omega^2} \\
    &\geq \sum_{\alpha \in \{ \pm \} } \frac{1 + \alpha \sigma_\psi}{2}
        \qty[\frac{\omega}{2} + \alpha \qty(v - \frac{j g}{\omega^2})] 
         -t \sqrt{1 - \sigma_\psi^2} - C \\%\frac{j^2 + g^2}{2 \omega^2} \\
    &= \frac{\omega}{2} + \sigma_\psi\left( v - \frac{j g}{\omega^2}\right)
        - t \sqrt{1 - \sigma_\psi^2} - C \\
    &\geq \frac{\omega}{2}
        - \sqrt{t^2 + \qty(v - \frac{j g}{\omega^2})^2}
        - \frac{j^2 + g^2}{2 \omega^2}. \numberthis{\label{eq:lower-bound-ev-of-H}}
\end{align*}
Here, we have used \cref{eq:PauliProjectionConstraint} to attain the relative contributions of each shifted QHO to the energy, and the fact that the shifted QHO with potentials given by \cref{eq:QHOPotentials} independently have the energy solutions
\begin{equation*}
    E_\pm^n = \qty(n + \frac{1}{2}) \omega \pm \qty( v - \frac{j g}{\omega^2} )
    \quad \text{with} \quad n \in \mathbb{N}_0.
\end{equation*}
The final inequality can be shown by differentiating the preceding expression with respect to \(\sigma_\psi\), and setting the result to zero. The extremum is attained at
\begin{equation*}
    \sigma_\psi^* = - \frac{K}{\sqrt{t^2 + K^2}},
    \quad K = v - \frac{j g}{\omega^2}.
\end{equation*}
Note that \(\abs{\sigma_\psi^*} < 1\). Since the second derivative is positive for all \(\sigma_\psi \in (-1, 1)\) this extremum corresponds to a minimum, and it can easily be seen that this value is smaller than the value at the end points \(\sigma_\psi = \pm 1\).

Together with the boundedness below, we can now use the knowledge that for potentials $V_\pm(q)\to\infty$ as $|q|\to\infty$, as it is the case for the harmonic oscillator, the spectrum of the Hamiltonian is always discrete~\cite{simon2009} (this can also directly be argued from the compactness of the resolvent). This means that we always have a ground-state solution for any external pair \((v, j) \in \RR^2\).

Next, we show that all eigenstates of the time-independent Schrödinger equation with Hamiltonian from \cref{eq:FullHamiltonian} are analytic functions. Any eigenstate \(\psi\) with eigenvalue \(E\) is the solution to a system of 2nd order ODEs:
\begin{subequations}
    \begin{align}
        \psi_+''(\qval) &= p_+(\qval) \psi_+(\qval) - 2 t \psi_-(\qval), \label{eq:coupled-ODEs-a} \\
        \psi_-''(\qval) &= p_-(\qval) \psi_-(\qval) - 2 t \psi_+(\qval) \label{eq:coupled-ODEs-b}.
    \end{align}
\end{subequations}
The coefficients \(p_\pm(\qval)\) are the polynomials in $\qval$ given by
\begin{equation*}
  p_\pm(\qval) = \omega^2 \qval^2 + 2 (j \pm g) \qval - 2 (E \mp v).
\end{equation*}
Setting \(\phi_1 = \psi_+\), \(\phi_2 = \psi_-\), \(\phi_3 = \psi_+'\), and \(\phi_4 = \psi_-'\) lets us express these equations as a system of 1st order ODEs
\begin{equation*}
    \begin{pmatrix}
        \phi_1' \\ \phi_2' \\ \phi_3' \\ \phi_4'
    \end{pmatrix} =
    \begin{pmatrix}
        0 & 0 & 1 & 0 \\
        0 & 0 & 0 & 1 \\
        p_+(\qval) & -2t & 0 & 0 \\
        -2t & p_-(\qval) & 0 & 0
    \end{pmatrix}
    \begin{pmatrix}
        \phi_1 \\ \phi_2 \\ \phi_3 \\ \phi_4
    \end{pmatrix}.
\end{equation*}
This setting shows that every solution to the time-independent Schrödinger equation is \emph{analytic}, e.g.\ by \citet[Theorem~5.9 and Corollary~5.10]{coddington1997-book}.

For the real-valuedness of the components it is enough to see that the real and imaginary parts of $\psi_\pm(\qval)$ decouple in the expression of the quadratic form $\ev*{\hat H}{\psi}$, and that the minimization can be carried out for the real and imaginary parts separately. This decoupling is readily apparent from \cref{eq:2HO} for all terms except $-2t \Re \dua{\psi_+}{\psi_-}$. But also for this term we have
\begin{equation*}
    \begin{aligned}
        & \ev*{\hat \sigma_x}{\psi}
            = 2 \Re \dua{\psi_+}{\psi_-}
            = 1 - \|\psi_+ -\psi_-\|^2\\
            &= 1 - \|\Re\psi_+ -\Re\psi_-\|^2 - \|\Im\psi_+ -\Im\psi_-\|^2
    \end{aligned}
\end{equation*}
by the polarization identity, from which it is clear that the real and imaginary parts decouple for this term as well, meaning that \(\psi\) can be chosen to be real-valued.

Next, we demonstrate non-negativity. Let $\psi$ be any admissible state and define the level sets $P_\pm = \{ \qval \in \RR : \psi_\pm(\qval) \geq 0 \}$ and $M_\pm = \{ \qval \in \RR : \psi_\pm(\qval) < 0 \}$. We can then construct the non-negative state
\begin{equation*}
    \begin{pmatrix}
        \tilde{\psi}_+(\qval) \\ \tilde{\psi}_-(\qval)
    \end{pmatrix}
      = \left\{
        \begin{array}{ll}
          \qty(\psi_+(\qval),\, \psi_-(\qval))^\top,   & \qval \in P_+ \cap P_- \\
          \qty(\psi_+(\qval),\, -\psi_-(\qval))^\top,  & \qval \in P_+ \cap M_- \\
          \qty(-\psi_+(\qval),\, \psi_-(\qval))^\top,  & \qval \in M_+ \cap P_- \\
          \qty(-\psi_+(\qval),\, -\psi_-(\qval))^\top, & \qval \in M_+ \cap M_-.
        \end{array}\right.
\end{equation*}
By direct calculation, we see that the constraints and all terms in $\expval*{\hat{H}_0}{\tilde{\psi}}$ are unchanged by the transformation \(\psi \mapsto \tilde \psi\), except the kinetic hopping term. However, by the polarization identity again
\begin{align*}
    -t \expval*{\hat{\sigma}_x}{\tilde{\psi}}
        &= -2t \Re \dua{\tilde{\psi}_+}{\tilde{\psi}_-}
        = -t + t\norm{\tilde{\psi}_+ - \tilde{\psi}_-}^2,
\end{align*}
where the last term can be expanded as
\begin{align*}
    \norm{\tilde{\psi}_+ - \tilde{\psi}_-}^2
        &= \int\displaylimits_{\mathclap{P_+ \cap P_-}} \abs{\psi_+ - \psi_-}^2
          + \int\displaylimits_{\mathclap{M_+ \cap M_-}} \abs{\psi_+ - \psi_-}^2  \\
          &+ \int\displaylimits_{\mathclap{P_+ \cap M_-}} \abs{\psi_+ + \psi_-}^2
          + \int\displaylimits_{\mathclap{M_+ \cap P_-}} \abs{\psi_+ + \psi_-}^2.
\end{align*}
We then see that the two integrals on the first line remained unchanged by the transformation \(\psi \mapsto \tilde \psi\), whereas the two integrals on the second line did not increase as their integrands contain a sum of a positive and a negative part. It then follows that the transformation \(\psi \mapsto \tilde \psi\) does \textit{not} increase the energy and that $\tilde\psi$ must remain a ground state if $\psi$ is one. We may thus conclude that the components of the ground state can always be chosen to be non-negative.

Now, take such an analytic and non-negative ground state and assume it has $\psi_+(\qval)=0$ at some $\qval\in\RR$. Since it is non-negative it must hold $\psi_+'(\qval)=0$ and $\psi_+''(\qval)\geq 0$. Then by \cref{eq:coupled-ODEs-a}, it follows from $\psi_-(\qval)\geq 0$ that $\psi_+''(\qval)\leq 0$, so actually $\psi_+''(\qval)=0$. But by the same equation this means also $\psi_-(\qval)=0$ and thus from non-negativity that $\psi_-'(\qval)=0$ and from \cref{eq:coupled-ODEs-b} that $\psi_-''(\qval)=0$. But this just means that we have a zero solution since the initial values at $\qval$ are zero, so we arrived at a contradiction. The argument for $\psi_-(\qval)=0$ is equivalent, so we can conclude that $\psi(\qval)>0$ for all $\qval\in\RR$. Analyticity also implies that the expectation values $\sigma_\psi = \pm 1$ that would rely on $\psi_\mp=0$ cannot be achieved by a ground state, nor by any other eigenstate, since \cref{eq:coupled-ODEs-a,eq:coupled-ODEs-b} show that this would imply a zero wavefunction.

Note that we carefully state that the components of the ground state ``can be chosen'' real and strictly positive, since we could not yet rule out the existence of other ground states that do not have this feature. However, the ground state of the quantum Rabi model must \emph{always} be strictly positive by an argument that uses that the imaginary-time evolution is positivity improving~\cite{Hirokawa2014}. Now, since a second ground state, $\psi_2$, must be orthogonal to the first one, $\psi_1$, but is positive as well, so we inevitably get $\braket{\psi_1}{\psi_2}>0$ which is a contradiction. We thus conclude that the ground state is also unique.

\subsection{Results from the Hypervirial Theorem}\label{subsec:HyperVirial}
With the system being fully defined by its Hamiltonian, \cref{eq:FullHamiltonian}, and having established some important properties of its ground state, we now move to deriving some useful relations between expectation values of operators with respect to any eigenstate (in particular the ground state). The hypervirial theorem offers a convenient tool for that task. The hypervirial theorem~\cite{hirschfelder1960classical} is the simple result that for any time-independent operator $\hat A$ and an eigenstate $\psi$ of $\hat H$ it holds by the Ehrenfest theorem (with $\time$ the time variable)
\begin{equation*}
    \dtime\ev{\hat A}{\Psi} = i\ev{[\hat H,\hat A]}{\Psi} = 0.
\end{equation*}
This is particularly useful if $\hat A$ is chosen such that $[\hat H,\hat A]$ yields an operator of interest. This method was already applied to the Pauli--Fierz Hamiltonian to get field-mode virial theorems and expressions for the coupling energy~\cite{theophilou2020virial}. In the following, we will use the Heisenberg picture to give the time-derivative (first and sometimes second order) for a variety of operators that then give useful relations in the stationary setting. The calculation just requires the canonical commutator relations between $\qop$ and $\hat p$ and the commutator relations between the Pauli matrices. For example, choosing $\hat{A} = \qop$ gives 
\begin{equation}\label{eq:dt-x}
    \dtime\qop = \hat p, \quad \ddtime\qop=\dtime\hat p=-\omega^2\qop-g\hat\sigma_z-j .
\end{equation}
For the expectation value, always with respect to an eigenstate $\psi$ of $\hat H$, we use the shorthand notation $\ev*{\hat O}=\ev*{\hat O}{\psi}$ in this section. Then the expectation values of the expressions above are
\begin{align}
    &\ev{\hat p} = 0, \nonumber\\
    &\omega^2\ev{\qop} + g\ev{\hat\sigma_z} + j = 0.
    \label{eq:sigma-xi-j-relation}
\end{align}
Here, the last equation is the analogue of Maxwell's equation in this reduced setting and connects $\ev{\qop}$ and $\ev{\hat\sigma_z}$ with the external $j$. This important relation will be used later multiple times. If we add the second order for $\hat p$,
\begin{equation*}
    \quad \ddtime\hat p = -\omega^2\hat p+2gt\hat\sigma_y,
\end{equation*}
then since we already know $\ev{\hat p}=0$, we get
\begin{equation}\label{eq:ev-sigmay}
    \ev{\hat\sigma_y} = 0.
\end{equation}
This relation gets obvious if one writes it out in wavefunction components and takes into account that the ground-state wavefunction is always real (\Cref{subsec:GSProperies}), yet we now additionally established that it holds for any eigenstate,
\begin{equation*}
    \int \left( \psi_+^* \psi_- - \psi_-^* \psi_+ \right) = 0.
\end{equation*}

Another interesting choice is $\hat A = \qop\hat p$,
\begin{equation*}
    \dtime \qop\hat p = \hat p^2-\omega^2\qop^2-g\hat\sigma_z\qop-j\qop,
\end{equation*}
which leads to the usual virial theorem for the photon mode that relates to the coupling energy and the energy from current $j$,
\begin{equation}\label{eq:classical-VR}
    \ev{\hat p^2} - \omega^2\ev{\qop^2} = g\ev{\hat\sigma_z\qop} + j\ev{\qop}.
\end{equation}
Next, the Pauli operator $\hat\sigma_z$ yields the polarization as an expectation value,
\begin{align*}
    &\dtime \hat\sigma_z = -2t\hat\sigma_y,\\
    &\ddtime\hat\sigma_z = -2t\dtime\hat\sigma_y = -4t^2\hat\sigma_z - 4tg\hat\sigma_x\qop - 4tv\hat\sigma_x.
\end{align*}
The expectation value of the first equation is just \cref{eq:ev-sigmay} again. However, the second equation allows us to determine the external potential $v$ from expectation values via 
\begin{equation}\label{eq:force-balance}
    t\ev{\hat\sigma_z} + g\ev{\hat\sigma_x\qop} + v\ev{\hat\sigma_x} = 0,
\end{equation}
and will thus be important for later applications. This expression is analogous to the force-balance equation that can also be employed in standard DFT and QEDFT to derive an exchange-correlation potential~\cite{tancogne2024exchange,lu2024electron}. This procedure will be showcased in \cref{subsec:eff-pot} for the quantum Rabi model. Moreover, we give one last result from the hypervirial theorem of particular interest since it includes the coupling term,
\begin{align}
    &\dtime \hat\sigma_z\hat p = -\omega^2\hat\sigma_z\qop -2t\hat\sigma_y\hat p -g-j\hat\sigma_z,\nonumber\\
    &\omega^2\ev{\hat\sigma_z\qop}+2t\ev{\hat\sigma_y\hat p} +g+j\ev{\hat\sigma_z}=0. \label{eq:virial-sigmaz-p}
\end{align}
Of course, many more such exact relations can be derived, but the above are the most significant for this study of the quantum Rabi model.

Before moving on to the analysis of the quantum Rabi model from a mathematical DFT perspective, let us briefly venture into its immediate extension. The Reader should be forewarned that this section and \cref{subsec:HKDickeModel} are somewhat peripheral to the main objectives of this article. However, it does provide some useful insights to notions not present in the quantum Rabi model, including a characterization of the set of densities where the Hohenberg--Kohn theorems hold.

\subsection{The Dicke Model}\label{subsec:DickeModel}
Recall from our discussion of the model hierarchy in \cref{subsec:QEDModels}, see in particular \cref{fig:ModelHierarchy}, that the immediate extensions of the quantum Rabi model are the Dicke model and its extension, the multi-mode Dicke model. For the sake of simplicity, let us only look at the Dicke model. However, the inclusion of multiple photonic modes does not significantly complicate the problem, and the interested Reader is referred to \citet{Bakkestuen2024}.

The Dicke model describes a  system of $N$ two-level systems coupled to a single quantum harmonic oscillator. By extension of the state space for the quantum Rabi model, the state space of the Dicke model is the Hilbert space $\mathcal{H}_\text{Dicke} = L^2(\RR , \CC ) \otimes \CC ^{2^N}\simeq L^2(\RR , \CC ^{2^N})$. Thus, a state $\psi \in  L^2(\RR , \CC ^{2^N})$ has 4 components in the position basis if $N=2$ and 8 components if $N=3$. In analogy to \cref{eq:pauli-representation}, for $N=2$ we write
\begin{equation*}
    \psi(\qval) = \begin{pmatrix} \psi_{++}(\qval) \\ \psi_{+-}(\qval) \\ \psi_{-+}(\qval) \\ \psi_{--}(\qval) \end{pmatrix},
    \quad \begin{matrix}
        \psi_{\alpha\beta} \in L^2(\RR , \CC ) \\ \forall\, \alpha,\beta = \pm.
    \end{matrix}
\end{equation*}
The appropriate inner product on $L^2(\RR , \CC ^{2^N})$, denoted in the same manner as for the quantum Rabi model, is then 
\begin{equation*}
    \braket{\varphi}{\psi} = \sum_{\alpha_1=\pm }\cdots\sum_{\alpha_N =\pm} \langle \varphi_{\alpha_1,\dots,\alpha_N}, \psi_{\alpha_1,\dots,\alpha_N}  \rangle.
\end{equation*}
The displacement of a state is then 
\begin{equation*}
    \xi_\psi := \ev{\qop}{\psi} = \sum_{\alpha_1=\pm }\cdots\sum_{\alpha_N =\pm} \int \qval \abs{\psi_{\alpha_1,\dots,\alpha_N}(\qval)}^2 \dd{\qval}.
\end{equation*}

In order to define the polarization variable, we need to extend the definition of the Pauli operators to act on the Hilbert space of the Dicke model. For any $1 \leq j \leq N$, the \textit{lifted Pauli operators} are defined as 
\begin{equation*}
    \hat{\sigma}_a^j = \id \otimes \cdots \otimes \id \otimes \underbrace{\hat{\sigma}_a}_{j\text{th}} \otimes \id \otimes \cdots \otimes \id,
\end{equation*}
where $a=x,y,z$. By employing the usual matrix forms, the lifted Pauli-$z$ matrices for $N=2$ are 
\begin{align*}
    \hat\sigma_z^1 = \begin{pmatrix}  1 &&& \\ &1&&\\&&-1& \\&&&-1 \end{pmatrix}, \quad \hat\sigma_z^2 = \begin{pmatrix}  1 &&& \\ &-1&&\\&&1& \\&&&-1 \end{pmatrix}.
\end{align*}
All the lifted Pauli matrices can then be collected into the vector $\hat\vsigma_a = (\hat\sigma_a^1,\dots,\hat\sigma_a^N)^\top \in (\CC ^{2^N\times 2^N})^N$. Consequently, the polarization vector is given by
\begin{equation*}
    \vsigma_\psi := \ev{\hat{\vsigma}_z}{\psi} = \qty(\sigma_1, \dots, \sigma_N)^\top, \quad \sigma_j = \ev{\hat{\sigma}_z^j}{\psi}. 
\end{equation*}
For $N=2$, along with the normalization constraint, this yields
\begin{align*}
    1 &= \norm{\psi_{++}}^2 + \norm{\psi_{+-}}^2 + \norm{\psi_{-+}}^2 + \norm{\psi_{--}}^2, \\ 
    \sigma_1 &= \norm{\psi_{++}}^2 + \norm{\psi_{+-}}^2 - \norm{\psi_{-+}}^2 - \norm{\psi_{--}}^2, \\ 
    \sigma_2 &= \norm{\psi_{++}}^2 - \norm{\psi_{+-}}^2 + \norm{\psi_{-+}}^2 - \norm{\psi_{--}}^2,
\end{align*}
or alternatively 
\begin{align*}
    \frac{1 \pm \sigma_1}{2} &= \norm{\psi_{\pm+}}^2 + \norm{\psi_{\pm-}}^2, \\
    \frac{1 \pm \sigma_2}{2} &= \norm{\psi_{+\pm}}^2 + \norm{\psi_{-\pm}}^2.
\end{align*}
Equipped with the vector of the lifted Pauli operators, the internal \textit{Dicke-model Hamiltonian} is 
\begin{equation}\label{eq:DickeHamiltonian}
    \hat{H}_0 = \frac{1}{2} \hat{p}^2 + \frac{\omega^2}{2} \qop^2 - \mathbf{t}\cdot \hat{\vsigma}_x + \mathbf{g} \cdot \hat{\vsigma}_z \qop,
\end{equation}
where $\mathbf{t}\in \RR ^N$ is the kinetic hopping parameters and $\mathbf{g}\in \RR ^N$ the coupling parameters. For simplicity, the form domain of the Dicke Hamiltonian will also be denoted $Q_0$. Coupled to the external quantities $\vv \in \RR ^N$ and $j\in\RR$, the full Hamiltonian then is  
\begin{equation*}
    \hat{H} = \hat{H}_0 + \vv \cdot \hat{\vsigma}_z + j\qop.
\end{equation*}
Despite having introduced the setting of the Dicke model, the following results, apart from \cref{subsec:HKDickeModel} on the Hohenberg--Kohn theorem, will be mainly for the quantum Rabi model only.

\section{Hohenberg--Kohn Theorems}\label{sec:HK}

\subsection{Internal Variables for the Quantum Rabi Model}\label{subsec:variables}
A common starting point for developing a DFT is to establish the connection between what are known as the internal and external variables of the system. As briefly mentioned in \cref{sec:DFTBackground}, the famous Hohenberg--Kohn theorem establishes that the internal variable---the electronic density---determines the external variable---the external potential---uniquely up to an additive constant. This mapping from the internal to the external variables firmly establishes the internal variable as a descriptor of the system (for a fixed number of electrons) in the case of Coulombic interactions. 

Our point of departure to develop a QEDFT using the quantum Rabi model, and similarly for the Dicke model, will be to establish a Hohenberg--Kohn theorem. Recall from \cref{subsec:Spaces} that the full Hamiltonian is described by the pair of external quantities $(v,j)$. Furthermore, we introduced two expectation values of particular interest, 
\begin{equation*}
	\sigma_\psi = \ev{\hat{\sigma}_z}{\psi} \quad \text{and} \quad \xi_\psi = \ev{\qop}{\psi}, 
\end{equation*}
the polarization and the displacement of the photonic field, respectively. Moreover, in treating $\sigma$ and $\xi$ as free variables, not necessarily with a reference to a state, the previously given constraints demand $\sigma \in [-1, 1]$ and $\xi \in \mathbb{R}$.  The goal of this section is then to derive a Hohenberg--Kohn-type result, mapping a pair $(\sigma,\xi) \in [-1,1] \times \mathbb{R}$ to a external pair $(v,j)\in \mathbb{R}^2$ that yields exactly this density pair in the ground state. Deriving such a Hohenberg--Kohn-type result establishes the pair $(\sigma,\xi)$ as the internal (density) variables of the system that fully determine the Hamiltonian at hand and thus also the ground state(s). This then sets the stage for developing a QEDFT using a Levy--Lieb functional in \cref{sec:LL}.

Firstly, we show this for the quantum Rabi model. The proof strategy that we employ consists of separating the mapping from density pairs back to external pairs into two parts: First (HK1), the mapping from densities to ground states and second (HK2), the map from such states to external quantities. The reason is that while the first part is almost trivial and holds for all kinds of different DFTs, the second one is highly dependent on the system under consideration. This division of the Hohenberg--Kohn theorem is further detailed in \citet{penz2023structure1}. Later, a Hohenberg--Kohn theorem for the Dicke model is also proven and we illustrate the notion of ``regular'' densities. Note also that the proof given for \cref{thrn:HKDicke} is somewhat different from the proof given in \citet{Bakkestuen2024}, leading to an alternative characterization for regular densities. 

\subsection{Hohenberg--Kohn Theorem for the Quantum Rabi Model}
\label{subsec:HK-QRabi}

The first part, also sometimes called the \emph{weak Hohenberg--Kohn theorem}~\cite{tellgren2018uniform}, shows that if two states have the same internal variables \((\sigma, \xi)\) and they are ground states of Hamiltonians which differ only by the values of the external pair \( (v,j) \), then both states will also be ground states of the other Hamiltonian. We here use the notation $\psi\mapsto(\sigma,\xi)$ to state that a normalized $\psi$ has $\sigma_\psi=\sigma$ and $\xi_\psi=\xi$.

\begin{lemma}[Weak Hohenberg--Kohn, HK1] \label{lemma:WeakHK}
    Suppose $\psi_1,\psi_2\in Q_0$ are ground states of
    $\hat{H}(v_1, j_1)$ and $\hat{H}(v_2, j_2)$, respectively. If both
    $\psi_1, \psi_2 \mapsto (\sigma, \xi)$, then $\psi_2$ is a
    ground state of $\hat{H}(v_1,j_1)$ and $\psi_1$ is a ground state of
    $\hat{H}(v_2,j_2)$.
\end{lemma}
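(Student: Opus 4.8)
The plan is to exploit the variational characterization of ground states together with the fact that the external perturbation $v\hat\sigma_z + j\hat x$ couples \emph{only} to the internal variables $(\sigma_\psi,\xi_\psi)$. Write $E_0(v,j)$ for the ground-state energy as in \cref{eq:GroundStateEnergy}, and for a state $\psi$ decompose its energy as $\ev*{\hat H(v,j)}{\psi} = \ev*{\hat H_0}{\psi} + v\sigma_\psi + j\xi_\psi$. The key observation is that the last two terms depend on $\psi$ only through the pair $(\sigma_\psi,\xi_\psi)$, so two states sharing the same internal pair have energies that differ by the \emph{same} constant under any external pair.

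First I would write down the two ground-state identities: since $\psi_1$ is a ground state of $\hat H(v_1,j_1)$,
\begin{equation*}
  \ev*{\hat H_0}{\psi_1} + v_1\sigma + j_1\xi = E_0(v_1,j_1),
\end{equation*}
and similarly $\ev*{\hat H_0}{\psi_2} + v_2\sigma + j_2\xi = E_0(v_2,j_2)$. Next I would test $\psi_2$ against $\hat H(v_1,j_1)$ and use that $\psi_2$ shares the internal pair $(\sigma,\xi)$:
\begin{equation*}
  \ev*{\hat H(v_1,j_1)}{\psi_2} = \ev*{\hat H_0}{\psi_2} + v_1\sigma + j_1\xi .
\end{equation*}
By the variational principle this is $\geq E_0(v_1,j_1) = \ev*{\hat H_0}{\psi_1} + v_1\sigma + j_1\xi$, hence $\ev*{\hat H_0}{\psi_2}\geq \ev*{\hat H_0}{\psi_1}$. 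Running the symmetric argument with the roles of $\psi_1,\psi_2$ and of $(v_1,j_1),(v_2,j_2)$ swapped gives $\ev*{\hat H_0}{\psi_1}\geq \ev*{\hat H_0}{\psi_2}$, so in fact $\ev*{\hat H_0}{\psi_1} = \ev*{\hat H_0}{\psi_2}$. Plugging this equality back in shows $\ev*{\hat H(v_1,j_1)}{\psi_2} = E_0(v_1,j_1)$, so $\psi_2$ is a ground state of $\hat H(v_1,j_1)$, and symmetrically $\psi_1$ is a ground state of $\hat H(v_2,j_2)$. One should note that existence of ground states is not even needed as a hypothesis here since it was already established in \cref{subsec:GSProperies} that $E_0(v,j)$ is attained for every $(v,j)\in\RR^2$; the statement only assumes $\psi_1,\psi_2$ are the given ground states.

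I do not expect a genuine obstacle: this is the ``easy half'' of Hohenberg--Kohn, and the only thing to be careful about is bookkeeping the constant shift $v\sigma+j\xi$ consistently and invoking the variational inequality in the correct direction (both directions are needed to upgrade the one-sided energy comparison to an equality). The real work — recovering $(v,j)$ from $(\sigma,\xi)$, i.e.\ HK2 — is deferred to the subsequent result and will instead rely on the hypervirial relations \cref{eq:sigma-xi-j-relation} and \cref{eq:force-balance}, which express $j$ and $v$ explicitly through ground-state expectation values.
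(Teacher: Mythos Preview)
Your argument is correct and is essentially the same as the paper's, just phrased as the standard ``swap and add'' variational comparison rather than the paper's one-line observation that both $\psi_1$ and $\psi_2$ minimize the \emph{same} constrained-search problem $\inf_{\psi\mapsto(\sigma,\xi)}\ev*{\hat H_0}{\psi}$; the content is identical. One small aside: your closing remark that HK2 will rely on the hypervirial relations \cref{eq:sigma-xi-j-relation} and \cref{eq:force-balance} is not how the paper actually proceeds --- the proof of \cref{lemma:HK2} works directly from the eigenvalue equation and the fact that $\hat x$ has no $L^2$ eigenfunctions --- but this does not affect your HK1 argument.
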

In order to show this, first note that \(\sigma_{\psi_1} = \sigma_{\psi_2} = \sigma\) and \(\xi_{\psi_1} = \xi_{\psi_2} = \xi\) by assumption. Then it follows from \cref{eq:GroundStateEnergy} that any ground state $\psi_i$ of $\hat{H}(v_i,j_i)$ is a minimizer of
\begin{equation*}
    E_0(v_i, j_i) = \inf_{ \psi \mapsto (\sigma,\xi)}
        \ev{\hat{H}_0}{\psi} + \sigma v_i+ \xi j_i,\quad i=1,2.
\end{equation*}
But $i=1$ and $i=2$ include exactly the same variational problem, such that any such minimizer $\psi_i$ is a ground state for both, $\hat{H}(v_1,j_1)$ and $\hat{H}(v_2,j_2)$.

Note that this type of argument works for any variant of DFT, as long as the external quantities couple only to the internal density variables~\cite{penz2023structure1}. The second part of the Hohenberg--Kohn theorem, referred to as HK2, is then more geared to the special structure of the problem.

\begin{lemma}[HK2]
    \label{lemma:HK2}
    If two Hamiltonians $\hat{H}(v_1,j_1)$ and $\hat{H}(v_2,j_2)$ share any eigenstate then $v_1 = v_2$ and $j_1=j_2$.
\end{lemma}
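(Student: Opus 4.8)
The plan is to exploit that the external pair $(v,j)$ enters $\hat H(v,j)$ only diagonally, through $v\hat\sigma_z$, and multiplicatively, through $j\hat x$. If $\psi\in Q_0$ is simultaneously an eigenstate of $\hat H(v_1,j_1)$ and $\hat H(v_2,j_2)$, with eigenvalues $E_1$ and $E_2$, then subtracting the two eigenvalue equations and using \cref{eq:FullHamiltonian} leaves
\begin{equation*}
    (v_1 - v_2)\,\hat\sigma_z\psi + (j_1 - j_2)\,\hat x\psi = (E_1 - E_2)\,\psi .
\end{equation*}
Writing $\delta v = v_1-v_2$, $\delta j = j_1 - j_2$, $\delta E = E_1 - E_2$ and passing to the Pauli components of \cref{eq:pauli-representation}, I would read off the two scalar identities
\begin{equation*}
    (\delta j\, x + \delta v - \delta E)\,\psi_+(x) = 0, \qquad (\delta j\, x - \delta v - \delta E)\,\psi_-(x) = 0,
\end{equation*}
valid for all $x\in\RR$ (eigenstates are continuous, indeed analytic, by \cref{subsec:GSProperies}).

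The next step is to argue that neither component $\psi_\pm$ vanishes identically, and here the internal structure of the model enters. If $\psi_+\equiv 0$, then \cref{eq:coupled-ODEs-a} for this eigenstate collapses to $0 = -2t\,\psi_-$, and since $t>0$ this forces $\psi_-\equiv 0$ as well, contradicting that $\psi$ is a nonzero eigenstate; the case $\psi_-\equiv 0$ is handled symmetrically via \cref{eq:coupled-ODEs-b}. Since a continuous function that is not identically zero is nonzero on some open interval, the affine polynomial $x\mapsto \delta j\, x + \delta v - \delta E$ must vanish on an interval, hence vanish identically, giving $\delta j = 0$ and $\delta v = \delta E$. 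Doing the same with the $\psi_-$ identity gives $\delta j = 0$ and $-\delta v = \delta E$. Adding the two scalar relations forces $\delta E = 0$, hence $\delta v = 0$, and $\delta j = 0$ was already in hand, so $v_1 = v_2$ and $j_1 = j_2$.

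I expect the only real content to be the non-vanishing of the two spinor components, which is precisely where the coupling structure of the quantum Rabi Hamiltonian (the nonzero hopping $t$, entering through \cref{eq:coupled-ODEs-a,eq:coupled-ODEs-b}) is genuinely used; everything else reduces to the rigidity of affine polynomials and elementary algebra. It is worth emphasising that, unlike the Coulombic Hohenberg--Kohn theorem, no unique-continuation theorem is needed here: the multiplicative form of the coupling operators makes the argument essentially algebraic, and even the appeal to analyticity of eigenstates can be replaced by the observation that an $L^2$ function annihilated by a nontrivial affine multiplier must vanish almost everywhere.
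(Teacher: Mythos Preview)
Your proof is correct and follows essentially the same route as the paper: subtract the two eigenvalue equations, pass to spinor components, and use that neither $\psi_\pm$ can vanish identically (the paper invokes this via \cref{subsec:GSProperies}, you re-derive it from \cref{eq:coupled-ODEs-a,eq:coupled-ODEs-b}). The only cosmetic difference is that the paper first eliminates $j_1\neq j_2$ by the observation that $\hat x$ has no $L^2$ eigenfunctions and then treats $v$ separately, whereas you handle both at once via the rigidity of the affine polynomial $x\mapsto \delta j\,x \pm \delta v - \delta E$; this is a minor streamlining, not a different argument.
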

To prove this, let the two Hamiltonians \(\hat H(v_1, j_1)\) and \(\hat H(v_2, j_2)\) share the common eigenstate \( \psi \in Q_0\) and denote the respective eigenvalues by \(E_1\) and \(E_2\). Then, by the Schrödinger equation,
\begin{equation*}
    [ (v_1 - v_2) \hat \sigma_z + (j_1 - j_2) \qop ]  \psi
        = (E_1 - E_2)  \psi.
\end{equation*}
By assuming $j_1 \neq j_2$, one obtains that
\begin{equation*}
    \qop \psi_\pm
        = \qty[ \frac{E_1 - E_2}{j_1 - j_2} \mp \frac{v_1 - v_2}{j_1 - j_2}] \psi_\pm.
\end{equation*}
However, the operator \(\qop\) has no square-integrable eigenfunctions apart from the trivial solution $\psi_\pm \equiv 0$. Consequently, $j_1 = j_2$, since we of course cannot have \( \psi \) identically zero. This in turn implies that
\begin{equation*}
    \pm (v_1 - v_2) \psi_\pm = (E_1 - E_2) \psi_\pm.
\end{equation*}
We can multiply these two equations by $\psi_\pm^*$ and integrate over $\qval$ such that
\begin{equation*}
    \pm (v_1 - v_2) \|\psi_\pm\|^2 = (E_1 - E_2) \|\psi_\pm\|^2.
\end{equation*}
In this equation, we can also drop the squares on both sides and get
\begin{equation}\label{eq:HK-v1-v2}
    \pm (v_1 - v_2) \|\psi_\pm\| = (E_1 - E_2) \|\psi_\pm\|.
\end{equation}
(While this does not change anything here, we will need this form for the arguments of the next section.) Now, by what we know about eigenstates from \cref{subsec:GSProperies}, both components have $\|\psi_\pm\| \neq 0$ and the equations above give $v_1 - v_2 = E_1 - E_2 = v_2 - v_1$. So we must have $v_1 = v_2$, in which case also \(E_1 = E_2\) and the Hamiltonians are identical.

A consequence of \cref{lemma:HK2} is that the mapping from \((v, j)\) to any eigenstate \(\psi\) of \(\hat H (v, j)\) is injective.  Since we already know that an eigenstate has always $|\sigma_\psi|<1$, we will remove the $\sigma=\pm 1$ from the following statement for the complete mapping $(v,j) \mapsto (\sigma,\xi)$. We will, as defined in \cref{subsec:HKDickeModel} also for the more general case of the Dicke model, call the $\sigma=\pm 1$ critical, while $\sigma\in(-1,1)$ are regular. If we combine the two results from \cref{lemma:WeakHK} and \cref{lemma:HK2} we get a full Hohenberg--Kohn theorem for the quantum Rabi model.

\begin{theorem}[Hohenberg--Kohn for the Quantum Rabi Model]\label{thrm:HKRabi}
    Any $(\sigma,\xi)\in(-1,1)\times\RR$ that is the density pair of a ground state uniquely determines an external pair $(v,j)\in \RR ^2$. That is, the mapping
    \[\RR ^2 \ni (v,j) \longmapsto (\sigma,\xi) \in (-1,1)\times \RR \]
    is an injection.
\end{theorem}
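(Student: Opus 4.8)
The plan is to obtain \cref{thrm:HKRabi} as an immediate consequence of the two preceding lemmas, \cref{lemma:WeakHK} (HK1) and \cref{lemma:HK2} (HK2). Suppose a regular density pair $(\sigma,\xi)\in(-1,1)\times\RR$ is realized both by a ground state $\psi_1$ of $\hat H(v_1,j_1)$ and by a ground state $\psi_2$ of $\hat H(v_2,j_2)$, so that $\psi_1,\psi_2\mapsto(\sigma,\xi)$. First I would apply HK1: since $\psi_1$ and $\psi_2$ have the same internal variables and are ground states of Hamiltonians differing only in the external pair, \cref{lemma:WeakHK} gives that $\psi_2$ is also a ground state of $\hat H(v_1,j_1)$. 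In particular, $\psi_2\in Q_0$ is then a common eigenstate of $\hat H(v_1,j_1)$ and $\hat H(v_2,j_2)$.

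Second, I would feed this shared eigenstate into HK2. \cref{lemma:HK2} then yields $v_1=v_2$ and $j_1=j_2$, which establishes that the external pair attached to a regular, $v$-representable density pair is unique. To phrase the conclusion as injectivity of the map $\RR^2\ni(v,j)\mapsto(\sigma,\xi)$, I would recall from \cref{subsec:GSProperies} that $\hat H(v,j)$ has a (unique) ground state for every $(v,j)$ and that every eigenstate satisfies $|\sigma_\psi|<1$; hence the map is well defined with image in $(-1,1)\times\RR$, and the argument just given is exactly the statement that two distinct external pairs cannot produce the same ground-state density pair.

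I do not expect a genuine obstacle — the theorem is essentially a corollary — but two points deserve care. The restriction to regular pairs $\sigma\in(-1,1)$ cannot be dropped, because the concluding step of \cref{lemma:HK2} divides through by $\|\psi_\pm\|$ and therefore needs both spinor components to be nonvanishing; for eigenstates this is guaranteed by the analyticity and strict-positivity results of \cref{subsec:GSProperies}, which is also why $\sigma=\pm1$ is simply excluded from the statement rather than treated as a separate case. The other point is logical rather than technical: the theorem must be stated for density pairs that are actually attained by a ground state, since Hohenberg--Kohn says nothing about which $(\sigma,\xi)$ are $v$-representable in the first place — that surjectivity question is handled separately via the Levy--Lieb construction in \cref{sec:LL}.
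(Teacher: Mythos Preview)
Your proposal is correct and follows exactly the paper's approach: the theorem is stated as the combination of \cref{lemma:WeakHK} and \cref{lemma:HK2}, with the restriction to $\sigma\in(-1,1)$ justified by the eigenstate properties from \cref{subsec:GSProperies}. Your additional remarks on why $\sigma=\pm1$ is excluded and why surjectivity is deferred to \cref{sec:LL} mirror the paper's own commentary surrounding the theorem.
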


Note that unlike the original Hohenberg--Kohn theorem~\cite{Hohenberg1964}, \cref{thrm:HKRabi} uniquely determines the external quantities, i.e., not only up to an additive constant. Then as noted in \cref{subsec:variables}, \cref{thrm:HKRabi} establishes the pair $(\sigma,\xi)$ as the internal (density) variables of the system. The surjectivity of this mapping will be the result of the $v$-representability (actually $(v,j)$-representability, but we will stick to the usual DFT terminology here) of all regular density pairs $(\sigma,\xi) \in (-1,1)\times \RR$ and will be demonstrated in \cref{sec:optimizers-gs}.

\subsection{Hohenberg--Kohn Theorem for the Dicke Model}
\label{subsec:HKDickeModel}

In generalizing from the quantum Rabi model to the Dicke model, the Hohenberg--Kohn theorem is one of the complicating factors. To achieve it, as we will see in this section, we have to introduce the useful concept of a \textit{regular} polarization vector $\vsigma$. But first let us briefly state the analogous weak Hohenberg--Kohn result for the Dicke model.

\begin{lemma}[Weak Hohenberg--Kohn, HK1]\label{lemma:HK1Dicke}
    Suppose that $\psi_1,\psi_2\in Q_0$ are ground states of $\hat{H}(\vv _1,j_1)$ and $\hat{H}(\vv _2,j_2)$ respectively, where $(\vv _1,j_1)\in \RR^N \times \RR$ and $(\vv _2,j_2) \in \RR^N\times\RR$ are two external pairs. If both $\psi_1,\psi_2\mapsto(\vsigma,\xi)$, then $\psi_2$ is a ground state of $\hat{H}(\vv _1,j_1)$ and $\psi_1$ is a ground state of $\hat{H}(\vv _2,j_2)$
\end{lemma}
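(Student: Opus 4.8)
The plan is to carry over the proof of \cref{lemma:WeakHK} essentially word for word, since that argument never used that the matter sector is a single two-level system—only that the external quantities $(\vv,j)$ enter $\hat H(\vv,j)=\hat H_0+\vv\cdot\hat\vsigma_z+j\hat x$ linearly and couple to $\psi$ solely through its density pair $(\vsigma_\psi,\xi_\psi)=(\ev{\hat\vsigma_z}{\psi},\ev{\hat x}{\psi})$. Concretely, I would first record that for any admissible state $\psi$ one has $\ev{\hat H(\vv_i,j_i)}{\psi}=\ev{\hat H_0}{\psi}+\vv_i\cdot\vsigma_\psi+j_i\xi_\psi$, so on the slice $\{\psi\in Q_0:\norm{\psi}=1,\ \psi\mapsto(\vsigma,\xi)\}$ the full energy equals the internal energy $\ev{\hat H_0}{\psi}$ plus the \emph{constant} $\vv_i\cdot\vsigma+j_i\xi$.

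Next I would combine the hypothesis $\psi_1,\psi_2\mapsto(\vsigma,\xi)$ with the variational principle. For every $\psi$ on that slice and each $i=1,2$,
\begin{equation*}
    \ev{\hat H_0}{\psi}+\vv_i\cdot\vsigma+j_i\xi=\ev{\hat H(\vv_i,j_i)}{\psi}\geq E_0(\vv_i,j_i)=\ev{\hat H_0}{\psi_i}+\vv_i\cdot\vsigma+j_i\xi,
\end{equation*}
where the last equality holds because $\psi_i$ is a ground state with density pair $(\vsigma,\xi)$. Cancelling the common constant yields $\ev{\hat H_0}{\psi}\geq\ev{\hat H_0}{\psi_i}$ for $i=1,2$, so $\psi_1$ and $\psi_2$ both minimize $\psi\mapsto\ev{\hat H_0}{\psi}$ over that slice; in particular $\ev{\hat H_0}{\psi_1}=\ev{\hat H_0}{\psi_2}$.

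To conclude I would reassemble the energies: since $\psi_1\mapsto(\vsigma,\xi)$,
\begin{equation*}
    \ev{\hat H(\vv_2,j_2)}{\psi_1}=\ev{\hat H_0}{\psi_1}+\vv_2\cdot\vsigma+j_2\xi=\ev{\hat H_0}{\psi_2}+\vv_2\cdot\vsigma+j_2\xi=\ev{\hat H(\vv_2,j_2)}{\psi_2}=E_0(\vv_2,j_2),
\end{equation*}
so $\psi_1$ is a ground state of $\hat H(\vv_2,j_2)$, and swapping the indices $1\leftrightarrow2$ shows $\psi_2$ is a ground state of $\hat H(\vv_1,j_1)$. I do not anticipate a genuine obstacle here: this is the ``soft'' half of the Hohenberg--Kohn theorem and holds for any DFT in which the external couplings pair only with the internal variables. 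The only point needing (minimal) care is that the infimum defining $E_0(\vv_i,j_i)$ is actually attained by a state whose density pair is exactly $(\vsigma,\xi)$—which is precisely what the hypothesis ``$\psi_i$ is a ground state with $\psi_i\mapsto(\vsigma,\xi)$'' supplies—while all the model-dependent effort is deferred to the HK2-type statement for the Dicke model that follows.
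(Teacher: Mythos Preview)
Your argument is correct and follows exactly the same approach as the paper, which simply states that the proof of \cref{lemma:HK1Dicke} is identical to that of \cref{lemma:WeakHK}: the external pair enters only through the constant $\vv_i\cdot\vsigma+j_i\xi$, so both ground states minimize the same internal-energy problem over the density slice and are therefore ground states of both Hamiltonians. Your write-up is more explicit than the paper's, but the logic is the same.
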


The proof of this result is exactly the same as the proof of \cref{lemma:WeakHK}, and is thus a prime example of a case where the generalization from the quantum Rabi model to the Dicke model does not complicate matters. However, this is not entirely true for the generalization of \cref{lemma:HK2}, which requires the following definition of a regular polarization vector. Note that \citet{Bakkestuen2024} employ a different definition of regular $\vsigma$ that is adapted to the proof technique used there. However, these two definitions are equivalent.

\begin{definition}[Regular Polarization]\label{def:RegularDensity-Dicke-new}
    A $\vsigma \in [-1,1]^N$ is called \emph{regular} if for every $\chi \in \RR_+^{2^N}$ that has $\|\chi\|=1$ and $\ev{\hat\vsigma_z}{\chi}=\vsigma$ one has $\{\chi, \hat\sigma_z^1\chi, \ldots, \hat\sigma_z^N\chi\}$ as a set of linear independent vectors. The set of all regular $\vsigma$ is denoted $\mathcal{R}_N$. Any $\vsigma \in [-1,1]^N\setminus \mathcal{R}_N$ is not regular, and is called \emph{critical}. 
\end{definition}

The relevance of this definition becomes immediately clear if we try to prove the second part of the Hohenberg--Kohn theorem for $N\geq 2$ with the same method as in \cref{lemma:HK2} before. We introduce $\chi\in\RR_+^{2^N}$, with components $\chi_{\alpha_1,\dots,\alpha_N} = \|\psi_{\alpha_1,\dots,\alpha_N}\|$. Then $\ev{\hat\vsigma_z}{\chi}=\ev{\hat\vsigma_z}{\psi}=\vsigma$ and since the steps up to \cref{eq:HK-v1-v2} are completely analogously we have
\begin{equation*}
    (\vv _1-\vv _2)\cdot \hat{\vsigma}_z \chi = (E_1 - E_2) \chi.
\end{equation*}
Here, each $\hat{\sigma}_z^j$ in $\hat{\vsigma}_z$ is purely diagonal and just applies the correct sign for every component of $\chi$.
Now assume that $\vsigma$ is a regular polarization vector, then all components $\chi, \hat\sigma_z^1\chi, \ldots, \hat\sigma_z^N\chi$ in the equation above are linearly independent and it directly follows $\vv _1-\vv _2=0$ and $E_1 - E_2=0$ as required. This proves the full Hohenberg--Kohn theorem for all regular $\vsigma\in\mathcal{R}_N$.

\begin{theorem}[Hohenberg--Kohn for the $N$-site Dicke Model]\label{thrn:HKDicke}
    Any $(\vsigma,\xi)\in \mathcal{R}_N\times \RR $ that is the density pair of a ground state uniquely determines an external pair $(\vv ,j)\in \RR ^{N+1}$. That is, the mapping 
    \begin{equation*}
        \RR^{N+1} \ni (\vv ,j) \longmapsto (\vsigma,\xi) \in \mathcal{R}_N\times \RR 
    \end{equation*}
    is an injection.
\end{theorem}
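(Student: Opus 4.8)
The plan is to assemble \cref{thrn:HKDicke} from the two ingredients already prepared for the Dicke model, in exact analogy with how \cref{thrm:HKRabi} followed by combining \cref{lemma:WeakHK,lemma:HK2}: namely, the weak Hohenberg--Kohn result \cref{lemma:HK1Dicke}, together with the linear-independence argument sketched right after \cref{def:RegularDensity-Dicke-new}. Concretely, I would start from two external pairs $(\vv_1,j_1),(\vv_2,j_2)\in\RR^{N+1}$ whose ground states $\psi_1,\psi_2\in Q_0$ both map to the same density pair $(\vsigma,\xi)$ with $\vsigma\in\mathcal{R}_N$; existence of such ground states follows by the same confinement-plus-discrete-spectrum argument used in \cref{subsec:GSProperies}. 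By \cref{lemma:HK1Dicke}, $\psi_1$ is then a ground state of \emph{both} $\hat H(\vv_1,j_1)$ and $\hat H(\vv_2,j_2)$; denoting the two eigenvalues by $E_1,E_2$ and subtracting the Schrödinger equations gives $[(\vv_1-\vv_2)\cdot\hat\vsigma_z+(j_1-j_2)\hat x]\psi_1=(E_1-E_2)\psi_1$.

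The next step is to peel off the ``$j$-part'' exactly as in \cref{lemma:HK2}. Since each $\hat\sigma_z^k$ is diagonal in the $\pm$-labelled components, the displayed identity reads, componentwise, $(j_1-j_2)\hat x\,(\psi_1)_{\valpha}=(E_1-E_2-c_{\valpha})(\psi_1)_{\valpha}$ with $c_{\valpha}=\sum_{k=1}^N(v_{1,k}-v_{2,k})\alpha_k$. If $j_1\neq j_2$, every component $(\psi_1)_{\valpha}$ would be an $L^2$-eigenfunction of multiplication by $x$, hence vanish, forcing $\psi_1=0$ — a contradiction. Thus $j_1=j_2$, and what remains is $c_{\valpha}(\psi_1)_{\valpha}=(E_1-E_2)(\psi_1)_{\valpha}$ for each multi-index $\valpha$.

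Then I would pass to the nonnegative vector $\chi\in\RR_+^{2^N}$ with entries $\chi_{\valpha}=\|(\psi_1)_{\valpha}\|$, mirroring the ``drop the squares'' passage to \cref{eq:HK-v1-v2}: on components where $(\psi_1)_{\valpha}\not\equiv 0$ we get $c_{\valpha}=E_1-E_2$, and on components where $(\psi_1)_{\valpha}\equiv 0$ both sides vanish, so in all cases $c_{\valpha}\chi_{\valpha}=(E_1-E_2)\chi_{\valpha}$, i.e.\ $\sum_{k=1}^N(v_{1,k}-v_{2,k})\hat\sigma_z^k\chi=(E_1-E_2)\chi$ in $\RR^{2^N}$. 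One checks that $\|\chi\|=\|\psi_1\|=1$ and, using diagonality, $\ev{\hat\vsigma_z}{\chi}=\ev{\hat\vsigma_z}{\psi_1}=\vsigma$, so $\chi$ is precisely of the kind quantified over in \cref{def:RegularDensity-Dicke-new}. Since $\vsigma\in\mathcal{R}_N$, the set $\{\chi,\hat\sigma_z^1\chi,\dots,\hat\sigma_z^N\chi\}$ is linearly independent, and the last identity then forces $v_{1,k}=v_{2,k}$ for every $k$ together with $E_1=E_2$. Combined with $j_1=j_2$ this gives $(\vv_1,j_1)=(\vv_2,j_2)$, which is the asserted injectivity.

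The only delicate point — and the one I would take care to write out explicitly — is the reduction to \cref{def:RegularDensity-Dicke-new}: one must verify that the norm vector $\chi$ really meets its hypotheses (unit norm and polarization exactly $\vsigma$) and that its possibly vanishing components cause no harm. This is exactly why that definition is phrased over \emph{all} nonnegative $\chi$ with the given polarization rather than over strictly positive ones, so that — in contrast to the quantum Rabi case in \cref{subsec:GSProperies} — we do not need any strict-positivity statement about Dicke ground states. Everything else is a verbatim transcription of the $N=1$ arguments.
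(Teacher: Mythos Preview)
Your proposal is correct and follows essentially the same route as the paper: combine \cref{lemma:HK1Dicke} with the componentwise argument leading to $(\vv_1-\vv_2)\cdot\hat\vsigma_z\chi=(E_1-E_2)\chi$ for the norm vector $\chi$, then invoke \cref{def:RegularDensity-Dicke-new}. You have spelled out more carefully than the paper does both the $j_1=j_2$ step and the verification that $\chi$ satisfies the hypotheses of the definition, but the underlying argument is identical.
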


To exemplify the notion of regular polarizations, we will now show in three examples, for $N=1,2,3$, how the set of critical polarization vectors looks like.

\begin{example}[$N=1$]
    This is the case of the quantum Rabi model. Having $\sigma\in[-1,1]$ regular demands that all $\chi\in\RR_+^2$ with $\|\chi_+\|^2-\|\chi_-\|^2=\sigma$ have $\{\chi,\hat\sigma_z\chi\}$ linear independent. But $(\chi_+,\chi_-)$ and $(\chi_+,-\chi_-)$ are easily seen to be linear independent whenever both $\chi_\pm\neq 0$. This means the critical $\sigma$ are those that come from $\chi_\pm=0$, exactly $\sigma=\pm 1$. The regular set is then $\mathcal{R}_1=(-1,1)$.
\end{example}

\begin{example}[$N=2$]
    Now $\chi\in\RR_+^4$ and we need to have the three vectors $\{\chi,\hat\sigma_z^1\chi,\hat\sigma_z^2\chi\}$ linear independent. Instead of writing things out in components, we will argue combinatorially. Three out of four non-zero components are enough to have the three vectors linearly independent, consequently we get an critical $\vsigma$ from $\chi$ whenever two or three components vanish. Three vanishing components just leaves four possibilities, $\chi_1=(1,0,0,0)^\top$ and its permutations $\chi_2,\chi_3,\chi_4$, which map to $\vsigma=[\pm 1,\pm 1]$, precisely the four corners of the polarization square $[-1,1]^2$. A linear combination of two $\chi_i$ still has two zero-components and maps to the straight lines that connect two corners. We thus have $\binom{4}{2} = 6$ such lines that represent critical polarizations. The situation is depicted in \cref{fig:SigmaSet2D}.
\end{example}

\begin{figure}
    \centering
    %auto-ignore
\usetikzlibrary{shapes.geometric, arrows}
\usetikzlibrary{3d}

\tdplotsetmaincoords{0}{0}
\def\x{-2.3}
\def\y{2}
\def\z{0}
\def\r{2}
\begin{tikzpicture}[scale=2,tdplot_main_coords]
\tikzstyle{grid}=[thin,color=red,tdplot_rotated_coords]  
  \draw [thick] (0,0,0) -- (2,0,0);
  \draw [thick] (0,0,0) -- (0,2,0);
  \draw [thick] (2,2,0) -- (2,0,0);
  \draw [thick] (2,2,0) -- (0,2,0);
  % \draw[ForestGreen,thick] (0,1,0) -- (2,1,0);
  
  \draw [dashed] (0,0,0) -- (2,2,0);
  \draw [dashed] (0,2,0) -- (2,0,0);
  
  % \draw[draw=none,fill=gray,opacity=0.3] (0,2,0) -- (2,2,0) -- (1,1,0) -- (0,2,0);
  % Pullout 
  
  \coordinate (a) at (\x,\y,\z);
  \coordinate (b) at (\x+\r,\y,\z);
  \coordinate (c) at (\x+\r/2,\y-\r/2,\z);
  
  \coordinate (d) at (\x,\r*0.33,\z);
  \coordinate (e) at (\x+\r,\r*0.33,\z);
  
  % \draw[draw=none,fill=gray,opacity=0.3] (a) -- (b) -- (c) -- (a);
  % \draw[thick] (a) -- (b);
  % \draw[dashed] (a) -- (c) -- (b);
  
  % \draw[ForestGreen,thick] (d) -- (e);

  % \node (N1) [below left of=e, node distance=0.7cm] {\textcolor{ForestGreen}{$\mathcal{R}_1$}};
  % \node (N2) at (1,0.4,0) {$\mathcal{R}_2$};
\end{tikzpicture}
    \caption{The set of regular densities for $N=2$, $\mathcal{R}_2 \subset (-1,1)^2$, is the union of four congruent open triangles.}
    \label{fig:SigmaSet2D}
\end{figure}
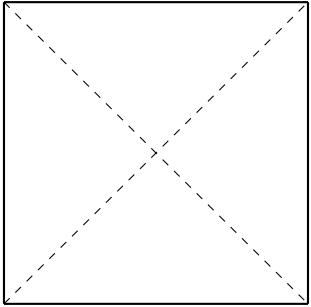

\begin{example}[$N=3$]
    We have $\chi\in\RR_+^8$ and argue as before. A critical $\vsigma$ comes from a $\chi$ that has between five and seven zero-components. On the other hand, four non-zero components are enough to have $\{\chi,\hat\sigma_z^1\chi,\hat\sigma_z^2\chi,\hat\sigma_z^3\chi\}$ linear independent. If $\chi$ has just one non-zero component, it maps to the 8 corners of the polarization cube $[-1,1]^3$. The $\binom{8}{2}=28$ ways how to combine these corners then come from $\chi$ with two non-zero components. These are the $12$ edges of the cube, plus $12$ diagonals on the faces, plus $4$ diagonals in the inside of the cube. Finally, we always take three of the corners of the cube and combine them into $\binom{8}{3}=56$ planes that belong to $\chi$ with three non-zero components. Here one sees that many of these combinations actually yield the same plane, so that only 20 remain: 6 faces, 6 from the diagonals on the faces connecting to the opposite face, and 8 that are formed by three face-diagonals.
    The situation is depicted in \cref{fig:SigmaSet3D}.
\end{example}

\begin{figure}
    \centering
    %auto-ignore
\usetikzlibrary{shapes.geometric, arrows}
\tdplotsetmaincoords{80}{160}
\usetikzlibrary{3d}

\def\x{3.7}
\def\y{0}
\def\z{0}

\def\X{3.7}
\def\Y{0}
\def\Z{-1}

\def\r{2}
\def\h{1}
\begin{tikzpicture}[scale=2,tdplot_main_coords]
\tikzstyle{grid}=[thin,color=red,tdplot_rotated_coords]  
  \draw [thick] (0,0,0) -- (2,0,0);
  \draw [thick] (0,0,0) -- (0,2,0);
  \draw [thick] (2,2,0) -- (2,0,0);
  \draw [thick] (2,2,0) -- (0,2,0);
  
  \draw [thick] (0,0,2) -- (2,0,2);
  \draw [thick] (0,0,2) -- (0,2,2);
  \draw [thick] (2,2,2) -- (2,0,2);
  \draw [thick] (2,2,2) -- (0,2,2);
  
  \draw [thick] (0,0,2) -- (0,0,0);
  \draw [thick] (2,0,2) -- (2,0,0);
  \draw [thick] (0,2,2) -- (0,2,0);
  \draw [thick] (2,2,2) -- (2,2,0);

  \draw [dashed] (0,0,0) -- (2,2,2);
  \draw [dashed] (2,0,0) -- (0,2,2);
  \draw [dashed] (0,2,0) -- (2,0,2);
  \draw [dashed] (0,0,2) -- (2,2,0);
  
  \draw [dashed] (0,0,0) -- (2,0,2);
  \draw [dashed] (2,0,0) -- (0,0,2);
  
  \draw [dashed] (0,0,0) -- (0,2,2);
  \draw [dashed] (0,2,0) -- (0,0,2);
  
  \draw [dashed] (0,0,0) -- (2,2,0);
  \draw [dashed] (0,2,0) -- (2,0,0);
  
  \draw (2,0,0) -- (2,2,2);
  \draw (2,2,0) -- (2,0,2);
  
  \draw (0,2,0) -- (2,2,2);
  \draw (2,2,0) -- (0,2,2);
  
  \draw (0,0,2) -- (2,2,2);
  \draw (2,0,2) -- (0,2,2);
  
  \draw [dashed] (0,1,1) -- (2,1,1);
  \draw [dashed] (1,0,1) -- (1,2,1);
  \draw [dashed] (1,1,0) -- (1,1,2);

\end{tikzpicture}
    \caption{The set of regular polarizations for $N=3$, $\mathcal{R}_3 \subset (-1,1)^3$, is the union of open polyhedra created by 14 planes cutting through the open cube $(-1,1)^3$. These are the 6 planes extending from the diagonals on the faces plus the 8 planes formed by taking three face-diagonals.}
    \label{fig:SigmaSet3D}
\end{figure}
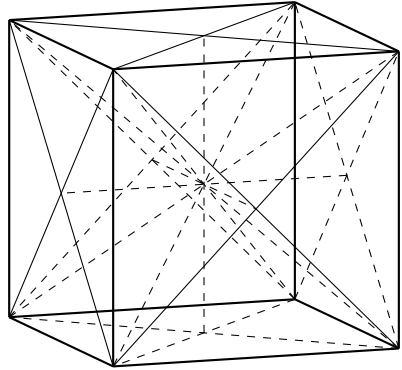

Already from the above examples up to $N=3$ one realizes that for general $N$, the regular set $\mathcal{R}_N$ is the union of disjoint open convex polytopes~\cite{Bakkestuen2024}. In $N=2$ the polytopes are simply triangles, whilst in $N=3$ they are polyhedra.

In summary, given a ground-state density pair $(\vsigma,\xi)$, the uniqueness for external pairs can only be guaranteed for regular $\vsigma$. Nevertheless, we never encountered a non-unique external pair in case of the Dicke model, in contrast to lattice models, where explicit counterexamples to a full Hohenberg--Kohn theorem are known~\cite{penz2021-Graph-DFT}. Note especially that such counterexamples always rely on ground-state degeneracy~\cite{penz2023geometry}, but such a degeneracy was also never observed in our numerical investigations of the Dicke model up to now.

\section{The Levy--Lieb Functional}\label{sec:LL}
\subsection{Definition}

From the investigation of the Hohenberg--Kohn theorems in the preceding section, the internal variables $\sigma \in [-1,1]$ and $\xi \in \RR$ arise as descriptors of the system. This justifies, in analogy to standard DFT, referring to them collectively as a density pair. Consider now the minimization of the internal energy $\psi \mapsto \ev{\hat{H}_0}{\psi}$ under the constraint of normalizable and admissible $\psi$ mapping to the correct density pair, i.e., $\sigma_\psi = \sigma$ and $\xi_\psi = \xi$. This problem naturally gives rise to what is commonly called the \textit{pure-state constrained-search functional}, also known as the \textit{Levy--Lieb functional}, introduced in \cref{sec:DFTBackground} for standard DFT. However, before defining this functional for the quantum Rabi model, let us turn to the question of whether a given density pair $(\sigma,\xi) \in [-1,1]\times \RR$ can even be represented by a state $\psi$.

\begin{theorem}($N$-representability) \label{thrm:N-Rep}
    For every density pair $(\sigma,\xi) \in [-1,1] \cross \RR$ there exists $\psi \in Q_0$ such that $\norm{\psi} = 1$, $\ev{\hat{\sigma}_z}{\psi} = \sigma$, and $\ev{\qop}{\psi} = \xi$.
\end{theorem}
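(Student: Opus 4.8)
The plan is to give a direct constructive proof by exhibiting an explicit normalized product state $\psi = \chi \otimes \phi$, with the spin factor $\chi \in \CC^2$ carrying the prescribed polarization and the photonic factor $\phi \in L^2(\RR)$ carrying the prescribed displacement. For the spin part I would take
\[
    \chi = \begin{pmatrix} \sqrt{(1+\sigma)/2} \\ \sqrt{(1-\sigma)/2} \end{pmatrix},
\]
which is well defined for every $\sigma \in [-1,1]$, has $\abs{\chi_+}^2 + \abs{\chi_-}^2 = 1$, and satisfies $\dua{\chi}{\hat{\sigma}_z\chi} = \tfrac{1+\sigma}{2} - \tfrac{1-\sigma}{2} = \sigma$.

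For the photonic part I would pick a shifted harmonic-oscillator ground state,
\[
    \phi(x) = \qty(\frac{\omega}{\pi})^{1/4} e^{-\omega (x-\xi)^2/2},
\]
which is $L^2$-normalized, has $\int x \abs{\phi}^2 \dd{x} = \xi$, and—being a Gaussian—has all moments finite, so in particular $\norm{\hat{x}\phi} < \infty$ and $\norm{\hat{p}\phi} < \infty$. Setting $\psi(x) = (\chi_+ \phi(x),\, \chi_- \phi(x))^\top$, a one-line computation using \cref{eq:PauliProjectionConstraint} and the definitions of $\sigma_\psi$ and $\xi_\psi$ gives $\norm{\psi} = 1$, $\norm{\psi_\pm}^2 = \abs{\chi_\pm}^2 = (1\pm\sigma)/2$ and hence $\sigma_\psi = \sigma$, and $\xi_\psi = (\abs{\chi_+}^2 + \abs{\chi_-}^2)\int x \abs{\phi}^2 \dd{x} = \xi$. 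Finally, since $\norm{\hat{p}\psi}^2 = \norm{\hat{p}\phi}^2$ and $\norm{\hat{x}\psi}^2 = \norm{\hat{x}\phi}^2$ are both finite, $\psi \in Q_0$, completing the argument.

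I do not expect a real obstacle here: this is the model analogue of Lieb's $N$-representability result, but in the quantum Rabi setting it is elementary because the polarization constraint lives entirely in the finite-dimensional spin factor and the displacement constraint is met by a mere translate of any fixed admissible photon wavefunction. The only two points requiring a moment of care are that the spin coefficients stay real and non-negative across the whole \emph{closed} interval $\sigma \in [-1,1]$ (they do, since $(1\pm\sigma)/2 \geq 0$), and that the chosen $\phi$ genuinely belongs to $Q_0$ rather than merely to $L^2$—which is precisely why one takes a Gaussian (or any Schwartz profile) instead of an arbitrary $L^2$ function. The identical product-state construction, now with $\chi \in \RR_+^{2^N}$ chosen to realize a given $\vsigma$, would also settle the Dicke-model case without additional difficulty.
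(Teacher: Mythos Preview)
Your proof is correct and is essentially identical to the paper's own argument: the paper also exhibits the shifted Gaussian trial state $\psi(x) = (c_+,c_-)^\top (\omega/\pi)^{1/4} e^{-\omega(x-\xi)^2/2}$ with $c_\pm = \sqrt{(1\pm\sigma)/2}$, which is exactly your product state $\chi\otimes\phi$. Your write-up is slightly more explicit about verifying $\psi\in Q_0$, but the construction and reasoning match the paper's proof.
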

Note here, that ``$N$-representability'' is a technical term from the standard DFT terminology and just means that the given constraints can be fulfilled by an $N$-particle state. In the quantum Rabi model, naturally, this $N$ has no further meaning. However, \cref{thrm:N-Rep} also holds for the Dicke model, in which case the $N$ is not to be confused with the number of two-level systems.  We give a simple and constructive proof.

\begin{proof}[Proof of \cref{thrm:N-Rep}]
    Let us fix a density pair $(\sigma,\xi) \in [-1,1] \cross \RR$ and suppose the Gaussian trial state
    \begin{equation} \label{eq:TrialState}
        \psi(\qval) = \begin{pmatrix} c_+ \\ c_- \end{pmatrix}
            \sqrt[4]{\frac{\omega}{\pi}}e^{-\frac{\omega}{2} (\qval - \xi)^2}.
    \end{equation}
    Choosing $c_\pm = \sqrt{(1 \pm \sigma)/2}$ then satisfies all the constraints.
\end{proof}

The state of \cref{eq:TrialState} is not only the prototype for a state with $(\sigma_\psi,\xi_\psi)=(\sigma,\xi)$ but will also reappear as the optimizer for critical $\sigma$ and at zero coupling.

Motivated by \cref{thrm:N-Rep}, let us introduce the \textit{constraint manifold} $\mathcal{M}_{\sigma,\xi}$ collecting all normalized and admissible states that map to a given density pair $(\sigma, \xi) \in [-1,1]\cross \RR$,
\begin{equation*}
    \mathcal{M}_{\sigma,\xi} := \qty{\psi \in Q_0 : \norm{\psi} = 1,\, \sigma_\psi = \sigma,\,   \xi_\psi = \xi }.
\end{equation*}
Then using \cref{thrm:N-Rep,eq:FullHamiltonian} we may for any $(v, j) \in \RR^2$ perform the following partitioning,
\begin{align*}
    E(v,j)
        &= \inf_{\psi \in Q_0} \ev{\hat{H}(v,j)}{\psi} \\[-0.4em]
        &= \inf_{(\sigma,\xi)\in[-1,1] \cross \RR} \qty[  \inf_{\psi \in \mathcal{M}_{\sigma,\xi}} \ev{\hat{H}_0}{\psi} + v\sigma_\psi + j\xi_\psi ] \\
        &= \inf_{(\sigma,\xi)\in[-1,1] \cross \RR} \qty[  F_\mathrm{LL}(\sigma,\xi)  + v\sigma + j\xi ]. \numberthis{\label{eq:LLEnergy}}
\end{align*}
This expression defines the \textit{Levy--Lieb functional} $F_\mathrm{LL}: [-1,1] \cross \RR \to \RR$ by
\begin{equation}\label{eq:DefFLL}
    F_\mathrm{LL}(\sigma,\xi)
        := \inf_{\psi \in \mathcal{M}_{\sigma,\xi}} \ev{\hat{H}_0}{\psi}.
\end{equation}
It then immediately follows that $\abs{F_\mathrm{LL}(\sigma,\xi) } < \infty$. The boundedness below was shown in \cref{eq:lower-bound-ev-of-H}, while the boundedness above is clear from the existence of a trial state. Moreover, for the quantum Rabi model we can prove a wide range of additional properties of $F_\mathrm{LL}$ and its optimizers.

\subsection{Properties of the Levy--Lieb Functional}
We summarize our results in the following theorem.
\begin{theorem}\label{thrm:fllproperties}
    For every density pair $(\sigma,\xi) \in [-1,1]\times \RR $ and any optimizer $\psi$ of $F_\mathrm{LL}(\sigma,\xi)$, the following properties are satisfied.
    \begin{enumerate}
        \item \label{item:fllsymetric}
            $F_\mathrm{LL}(\sigma,\xi) = F_\mathrm{LL}(-\sigma,-\xi)$.
        \item \label{item:flldisplacement}
            For any $\zeta \in \RR$, the displacement rule
            \begin{equation*}
                F_\mathrm{LL}(\sigma,\xi + \zeta)
                    = F_\mathrm{LL}(\sigma,\xi) + \omega^2\zeta \qty( \xi + \frac{\zeta}{2}) + g\sigma\zeta.
            \end{equation*}
            holds, with the special case $\zeta=0$,
            \begin{equation*}
                F_\mathrm{LL}(\sigma,\xi)
                    = F_\mathrm{LL}(\sigma,0)  + g \sigma\xi + \frac{\omega^2}{2}\xi^2.
            \end{equation*}
        \item \label{item:flluniqueoptimiser} An optimizer can always be chosen real and non-negative in both components ($\psi_\pm$).
        \item \label{item:fllvirialrel}
            Any optimizer $\psi$ of $F_\mathrm{LL}(\sigma,0)$ satisfies the virial relation
            % \MP{does not fit to \cref{eq:classical-VR}}
            \begin{equation*}
                \int  \qty(\frac{1}{2}\abs{\psi'}^2 - \frac{\omega^2}{2} \qval^2\abs{\psi}^2)\dd{\qval}
                  = g\int  \qval\abs{\psi_+}^2 \dd{\qval} .
            \end{equation*}
        \item \label{item:fllkinhopeq}
            For any optimizer
            \begin{align*}
                &\int  \qval \abs{\psi_+}^2 \dd{\qval}
                    = \xi -  \int  \qval \abs{\psi_-}^2 \dd{\qval} \\
                    =& - \frac{2 t}{\omega^2} \int  \psi_+'\psi_- \dd{\qval}
                        - \frac{g(1-\sigma^2)}{2\omega^2} + \frac{\xi (1 + \sigma)}{2} .
            \end{align*}
        \item \label{item:fllkinhopineq}
            Any optimizer satisfies the bound
            \begin{equation*}
                 \int \psi_+''\psi_- \leq \frac{\omega^2}{8t}\qty(1-\sigma^2).
            \end{equation*}
    \end{enumerate}
\end{theorem}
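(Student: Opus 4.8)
The plan is to split the six properties into two groups: items \ref{item:fllsymetric}--\ref{item:flluniqueoptimiser} follow from explicit transformations of competitor states that either preserve or controllably shift the constraint manifold $\mathcal M_{\sigma,\xi}$, whereas items \ref{item:fllvirialrel}--\ref{item:fllkinhopineq} come from stationarity of $\psi\mapsto\ev{\hat H_0}{\psi}$ along one-parameter families lying inside $\mathcal M_{\sigma,\xi}$. Throughout I would use the component form of the internal energy from \cref{eq:2HO} at $v=j=0$, namely $\ev{\hat H_0}{\psi}=\sum_{\alpha=\pm}\int(\tfrac12\abs{\psi_\alpha'}^2+\tfrac{\omega^2}{2}x^2\abs{\psi_\alpha}^2)\dd{x}+g\int x(\abs{\psi_+}^2-\abs{\psi_-}^2)\dd{x}-2t\Re\dua{\psi_+}{\psi_-}$, and I would first note that the infimum in \cref{eq:DefFLL} is attained (the harmonic confinement gives a compact resolvent, so weak lower semicontinuity together with the trial state \cref{eq:TrialState} produces a minimizer), making ``any optimizer $\psi$'' a non-vacuous hypothesis.

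For item \ref{item:fllsymetric}, apply the unitary $\hat U\colon\psi(x)\mapsto(\psi_-(-x),\psi_+(-x))^\top$, which combines a spin flip with spatial parity: it maps $\mathcal M_{\sigma,\xi}$ bijectively onto $\mathcal M_{-\sigma,-\xi}$ and leaves each of the four terms above invariant, so passing to the infimum gives the claim. For item \ref{item:flldisplacement}, apply the translation $\hat T_\zeta\colon\psi(x)\mapsto\psi(x-\zeta)$: it fixes $\sigma$, sends $\xi\mapsto\xi+\zeta$, leaves the kinetic and hopping terms unchanged, and a change of variables turns the harmonic and coupling terms into their old values plus exactly $\omega^2\zeta(\xi+\zeta/2)+g\sigma\zeta$; taking the infimum over $\mathcal M_{\sigma,\xi}$ gives the displacement rule, and relabelling $\zeta\to\xi$ at $\xi=0$ gives the stated special case. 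For item \ref{item:flluniqueoptimiser}, given any $\psi\in\mathcal M_{\sigma,\xi}$ pass to $(\abs{\psi_+},\abs{\psi_-})^\top$: the pointwise moduli leave $\norm{\psi_\pm}^2$ and $\abs{\psi}^2$, hence all constraints, unchanged; $\abs{\psi_\pm}\in Q_0$ with $\norm{\abs{\psi_\pm}'}\le\norm{\psi_\pm'}$; the harmonic and coupling terms see only $\abs{\psi_\pm}^2$; and $-2t\int\abs{\psi_+}\abs{\psi_-}\le-2t\Re\dua{\psi_+}{\psi_-}$ since $t>0$. Hence the energy does not increase, the constrained search may be restricted to real non-negative states, and every optimizer gives rise to a real non-negative one.

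For item \ref{item:fllvirialrel}, use the norm-preserving dilation $\psi_\lambda(x)=\sqrt\lambda\,\psi(\lambda x)$, which preserves $\norm\psi$ and $\sigma$ and sends $\xi\mapsto\xi/\lambda$, so $\psi_\lambda\in\mathcal M_{\sigma,0}$ for all $\lambda>0$; writing $\ev{\hat H_0}{\psi_\lambda}=\lambda^2T+\lambda^{-2}H+\lambda^{-1}G+K$ (kinetic, harmonic, coupling, hopping parts) and imposing $\tfrac{\dd}{\dd\lambda}\ev{\hat H_0}{\psi_\lambda}\big|_{\lambda=1}=0$ gives $2T-2H-G=0$, which becomes the stated relation once $\int x\abs{\psi_-}^2=-\int x\abs{\psi_+}^2$ (valid at $\xi=0$) is used to turn $G$ into $2g\int x\abs{\psi_+}^2$. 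For item \ref{item:fllkinhopeq} the first equality is just $\int x\abs\psi^2=\xi$; for the second, use the asymmetric translation $\psi_\epsilon=(\psi_+(\cdot-a_+\epsilon),\psi_-(\cdot-a_-\epsilon))^\top$ with $a_+\norm{\psi_+}^2+a_-\norm{\psi_-}^2=0$, which keeps $\norm\psi$, $\sigma$ and $\xi$ fixed, so that $\tfrac{\dd}{\dd\epsilon}\ev{\hat H_0}{\psi_\epsilon}\big|_{\epsilon=0}=0$ (the kinetic term being translation invariant, the harmonic, coupling and hopping terms each contributing one boundary-free integration by parts) yields a linear relation among $\int x\abs{\psi_\pm}^2$, $\norm{\psi_\pm}^2$ and $\int\psi_+'\psi_-$; specialising to $a_+=(1-\sigma)/2$, $a_-=-(1+\sigma)/2$ and taking $\psi$ real by item \ref{item:flluniqueoptimiser} produces the displayed identity.

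The hard part is item \ref{item:fllkinhopineq}, the only genuine inequality: it cannot come from first-order stationarity, and the ``obvious'' second-order variations (momentum boosts, component-wise squeezes) give conditions that are vacuous precisely because the optimizer is non-negative. My plan is to invoke the Euler--Lagrange characterization---for regular $\sigma\in(-1,1)$ the optimizer carries Lagrange multipliers, hence is an eigenstate of some $\hat H(v,j)$, so the coupled ODEs \cref{eq:coupled-ODEs-a,eq:coupled-ODEs-b} and the force-balance relation \cref{eq:force-balance} are available, while for $\sigma=\pm1$ one component vanishes and both sides are $0$. Substituting $2t\psi_-=p_+\psi_+-\psi_+''$ into $\int\psi_+''\psi_-$, symmetrising with the expression obtained from $2t\psi_+=p_-\psi_--\psi_-''$ (using $\int\psi_+''\psi_-=\int\psi_+\psi_-''$), integrating by parts with $p_\pm''=2\omega^2$ and $\norm{\psi_\pm}^2=(1\pm\sigma)/2$, and eliminating the mixed $\hat\sigma_x$ expectation values via \cref{eq:force-balance}, reduces the claim to the bound $\sum_{\alpha=\pm}(\int p_\alpha\abs{\psi_\alpha'}^2+\norm{\psi_\alpha''}^2)\ge\tfrac{\omega^2}{2}(1+\sigma^2)$, equivalently to controlling the (non-negative) mixed moment $\int x^2\psi_+\psi_-$ from above. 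Establishing this last inequality---where the real content sits, since $\int p_\alpha\abs{\psi_\alpha'}^2$ is not sign-definite---using the shifted harmonic-oscillator structure of $\hat p^2+p_\alpha(\hat x)$ and the localization of the optimizer is, I expect, the most technical step.
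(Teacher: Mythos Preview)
Your treatment of items \ref{item:fllsymetric}--\ref{item:fllkinhopeq} is essentially the paper's: the $\mathcal{PT}$ map, the shift $\hat{\mathcal D}_\zeta$, passage to non-negative components, the dilation $\psi\mapsto\sqrt\lambda\,\psi(\lambda\,\cdot)$, and the asymmetric translation family $\psi_\pm(\cdot+(\sigma\mp1)s)$ (your $a_\pm$ is the same family up to reparametrization) are exactly what the paper uses, with at most cosmetic differences.

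The real miss is item \ref{item:fllkinhopineq}. You call it ``the hard part'' and embark on an Euler--Lagrange reduction that you do not finish; the paper instead gets it in one line from the \emph{second}-order condition of the very family you already introduced for item \ref{item:fllkinhopeq}. Along $s\mapsto\psi^s_\pm(x)=\psi_\pm(x+(\sigma\mp1)s)$ the kinetic and coupling terms are at most linear in $s$, so only two contributions survive in $\dv[2]{s}\ev{\hat H_0}{\psi^s}\big|_{s=0}$: the harmonic term gives $\omega^2(1-\sigma^2)$, and (after the change of variables $u=x+(\sigma-1)s$ so that the hopping integrand becomes $\psi_+(u)\psi_-(u+2s)$ and two integrations by parts) the hopping term gives $-8t\int\psi_+''\psi_-$. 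The optimality inequality $\dv[2]{s}\ev{\hat H_0}{\psi^s}\big|_{s=0}\ge0$ then reads $-8t\int\psi_+''\psi_-+\omega^2(1-\sigma^2)\ge0$, which is exactly the claimed bound. Your remark that ``the obvious second-order variations\ldots\ are vacuous'' overlooked precisely the one family already in hand; the route through the coupled ODEs and \cref{eq:force-balance} is not needed and, as you note yourself, leaves a residual inequality you have not established.
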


Let us now prove these properties and make some remarks on their relevance. For the sake of readability, the proof of \cref{thrm:fllproperties} has been divided up into five smaller sections including the respective discussions.

\subsubsection*[Proof of Theorem~\ref{thrm:fllproperties}.\ref{item:fllsymetric}]{Proof of Theorem~\ref{thrm:fllproperties}.\ref{item:fllsymetric}: \\ \indent Symmetry of the Levy--Lieb Functional}

Take the joint transformation of parity and time conjugation\footnote{In this case we identify the parity transformation by its usual action of reflection through the origin, i.e., $\mathcal{P}:\qval\mapsto -\qval $. The time conjugation, is identified by its usual action of interchanging polarization/spin, i.e., $\mathcal{T}:\sigma\mapsto -\sigma$, even though the theory presented here is time-independent.} $\mathcal{P}\mathcal{T}$, i.e., $\psi_\pm(\qval) \mapsto \tilde{\psi}_\pm(\qval)= \psi_\mp(-\qval) $. The $ \mathcal{P}\mathcal{T}$ transformation is clearly a unitary transformation that yields
\begin{equation*}
    \norm{\tilde{\psi}_\pm}^2 = \frac{1\mp\sigma}{2} \quad \text{and} \quad \xi_{\tilde{\psi}} = - \xi,
\end{equation*}
thus implying that $(\sigma,\xi) \overset{\mathcal{P}\mathcal{T}}{\longmapsto} (-\sigma,-\xi)$. Furthermore, the transformation leaves the quadratic form of the internal Hamiltonian unchanged, i.e., $\ev*{\hat{H}_0}{\tilde{\psi}} = \ev{\hat{H}_0}{\psi}$. It then follows that the Levy--Lieb functional is \textit{symmetric} in the density pair $(\sigma,\xi)$ for all possible pairs, which concludes the proof.

This symmetry simplifies further studies of the the functional, in particular since \crefpart{thrm:fllproperties}{item:flldisplacement} explicitly determines all dependence in the displacement, $\xi$. The symmetry thus implies that it will always be sufficient to investigate the functional for  $\sigma \in [0,1]$ at one fixed value of $\xi$, typically chosen to be zero. This is especially useful in the numerical investigations, as it effectively halves the search space in $\sigma$.

\subsubsection*[Proof of Theorem~\ref{thrm:fllproperties}.\ref{item:flldisplacement}]{Proof of Theorem~\ref{thrm:fllproperties}.\ref{item:flldisplacement}: \\ \indent Displacement of the Levy--Lieb Functional}
\label{sec:Disp-rule}

Consider the shift operator \(\hat{\mathcal{D}}_\zeta\) which displaces the harmonic-oscillator coordinate by \(\zeta \in \RR\). In particular, $\hat{\mathcal{D}}_\zeta$ maps $\psi_\pm(\qval)$ to $\psi_\pm(\qval - \zeta)$, whilst leaving the two-level decomposition unchanged. It then follows that $\|\hat{\mathcal{D}}_\zeta\psi\|^2 = \|\psi\|^2$, $\sigma_{\hat{\mathcal{D}}_\zeta \psi} = \sigma_{\psi}$, and $\xi_{\hat{\mathcal{D}}_\zeta \psi} = \xi_{\psi} + \zeta$. From a direct computation we have
\begin{equation*}
    \ev*{\hat H_0}_{\hat{\mathcal{D}}_\zeta \psi}
        = \ev*{\hat H_0}_\psi + \omega^2 \zeta \qty( \xi_\psi + \frac{\zeta}{2} ) + g \sigma_\psi \zeta.
\end{equation*}
Then, from the definition of the Levy--Lieb functional we obtain
\begin{align*}
    F_\mathrm{LL}(\sigma, & \xi + \zeta)
        = \inf_{\hat{\mathcal{D}}_\zeta \psi \mapsto (\sigma, \xi + \zeta)} \ev*{\hat H_0}_{\hat{\mathcal{D}}_\zeta \psi} \\
        &= \inf_{\psi \mapsto (\sigma, \xi)} \qty{ \ev*{\hat H_0}_\psi + \omega^2 \zeta \qty( \xi_\psi + \frac{\zeta}{2} ) + g \sigma_\psi \zeta } \\
        &= \inf_{\psi \mapsto (\sigma, \xi)} \ev*{\hat H_0}_\psi + \omega^2 \zeta \qty( \xi + \frac{\zeta}{2} ) + g \sigma \zeta \\
        &= F_\mathrm{LL}(\sigma, \xi) + \omega^2 \zeta \qty( \xi + \frac{\zeta}{2} ) + g \sigma \zeta,
\end{align*}
which is the general displacement relation. Note the second to last equality, in which we restrict the minimization to apply to the first term only. The search for minimizers is performed over the space of wavefunctions which map to a specific set of internal variables \((\sigma, \xi)\), which means that all terms but the first are constant during the minimization process. Further note that the above also shows that if $\psi$ is the optimizer of $ F_\mathrm{LL}(\sigma,\xi)$ then $\hat{\mathcal{D}}_\zeta \psi$ is the optimizer of $F_\mathrm{LL}(\sigma, \xi+\zeta)$. Thus, if the optimizer is known at one $\xi$, all other optimizers in $\xi$ can be obtained by displacement. Unfortunately, the dependence in the polarization is not as simple.

The result in \crefpart{thrm:fllproperties}{item:flldisplacement}, which also holds in the generalization to the Dicke model, is important as it always allows us to explicitly extract all dependency on the displacement \(\xi\).

\subsubsection*[Proof of Theorem~\ref{thrm:fllproperties}.\ref{item:flluniqueoptimiser}]{Proof of Theorem~\ref{thrm:fllproperties}.\ref{item:flluniqueoptimiser}: \\ \indent Real \& Positive Optimizers}
The statement follows from the same argument as the real-valuedness and non-negativity of the components of the ground state given in \cref{subsec:GSProperies}. In particular, the argument also holds for optimizers corresponding to the critical points $\sigma = \pm 1$.

While the real-valuedness also holds for the Dicke model~\cite{Bakkestuen2024}, the non-negativity does not straightforwardly follow. This is important, since the result that optimizers of the Levy--Lieb functional are always ground states (\cref{thrm:OptimizerGS}) relies on this feature.

\subsubsection*[Proof of Theorem~\ref{thrm:fllproperties}.\ref{item:fllvirialrel}]{Proof of Theorem~\ref{thrm:fllproperties}.\ref{item:fllvirialrel}: \\ \indent Virial Relation}

Let us consider the usual an-isotropic coordinate scaling of \citet{LevyPerdewIJQC1994}, that is scaling the harmonic-oscillator coordinate only. For $\mu > 0$ define the transformation $\psi \mapsto \psi^\mu$ of an optimizer by $\psi_\pm^\mu(\qval) = \sqrt{\mu} \psi_{\pm}(\mu \qval)$. By direct calculation we find that 
\begin{equation*}
    \norm{\psi_\pm^\mu}^2 = \frac{1\pm\sigma}{2} \quad \text{and} \quad \xi_{\psi^\mu} = \frac{1}{\mu} \xi_\psi,
\end{equation*} 
and also that 
\begin{align*}
    \expval*{\hat{H}_0}_{\psi^\mu} =&\frac{\mu^2}{2} \expval{\hat{p}^2}_{\psi} + \frac{\omega^2}{2\mu^2} \expval{\qop^2}_{\psi} + \frac{g}{\mu} \expval{\hat{\sigma}_z \qop}_{\psi} -t \expval{\hat{\sigma}_x}_{\psi} .
\end{align*}
Let $\psi$ be an optimizer of $F_\mathrm{LL}(\sigma,0)$, then $\sigma=\sigma_\psi=\sigma_{\psi^\mu}$ and $\xi=\xi_\psi=\xi_{\psi^\mu}=0$ are left unchanged by the scaling. The stationarity condition on the internal energy, \( \dv{\mu} \ev{\hat{H}_0}{\psi^\mu} |_{\mu=1} = 0,\) implies that 
\begin{align*}
    \int \abs{\psi'}^2 - \omega^2 \int \qval^2\abs{\psi}^2 \dd{\qval} = g \int \qval\qty(\abs{\psi_+}^2 - \abs{\psi_-}^2)\dd{\qval}.
\end{align*}
Then, if we recall that $\xi_\psi = \int \qval (\abs{\psi_+}^2 + \abs{\psi_-}^2)\dd{\qval}  =0$, we obtain that 
\begin{align*}
    \int \abs{\psi'}^2 - \omega^2 \int \qval^2\abs{\psi}^2 \dd{\qval} = 2 g \int \qval \abs{\psi_+}^2 ,
\end{align*}
from which \crefpart{thrm:fllproperties}{item:fllvirialrel} follows. The same result was already achieved in \cref{eq:classical-VR} in the context of the hypervirial theorem. This virial relation will be of importance when analyzing the adiabatic connection for the Levy--Lieb functional in \cref{sec:AC}.

\subsubsection*[Proofs of Theorem~\ref{thrm:fllproperties}.(\ref{item:fllkinhopeq}-\ref{item:fllkinhopineq})]{Proofs of Theorem~\ref{thrm:fllproperties}.(\ref{item:fllkinhopeq}-\ref{item:fllkinhopineq}): \\ \indent Relations for the Kinetic Hopping}
Given an optimizer $\psi$, consider the transformation $\psi \mapsto \psi^s$ for some $s \in \RR$ and fixed $\sigma \in [-1,1]$ given by
\[
    \psi_\pm(\qval) \longmapsto \psi^s_\pm (\qval) = \psi_\pm (\qval + (\sigma \mp 1) s).
\]
It then follows from direct calculation that the transformation leaves the constraints unchanged, in particular $\sigma_{\psi^s} = \sigma_\psi = \sigma$ and $\xi_{\psi^s} = \xi_{\psi} = \xi$. Let us then consider $\ev*{\hat{H}_0}_{\psi^s}$, the expectation value of the internal Hamiltonian with respect to the state \(\psi^s\), which contains the following terms
\begin{align*}
    \ev{\hat{p}^2}_{\psi^s} &= \ev{\hat{p}^2}_{\psi},
    \\
    \ev{\qop^2}_{\psi^s}
        &= \ev{\qop^2}_{\psi} + 2 s \qty(\ev{\hat{\sigma}_z \qop}_{\psi} - \sigma \xi)
            + s^2 (1-\sigma^2), % s^2 (1+\sigma^2) - 2 \sigma^2,
    \\
    \ev{\hat{\sigma}_x}_{\psi^s}
        &= 2\int  \psi_+(\qval + (\sigma - 1)s) \psi_-(\qval + (\sigma + 1)s) \dd{\qval},
    \\
    \ev{\hat{\sigma}_z \qop}_{\psi^s}
        &= \ev{\hat{\sigma}_z\qop}_{\psi} + s(1-\sigma^2),
\end{align*}
where we already used the result from \crefpart{thrm:fllproperties}{item:flluniqueoptimiser} that the optimizer can be chosen real. The stationarity condition \(  \dv{s} \ev*{\hat{H}_0}_{\psi^s} |_{s=0} = 0\) then implies that
\begin{equation*}
    4 t \int  \psi_+'\psi_-
        = \omega^2 \sigma \xi
            -  \omega^2 \ev*{\hat{\sigma}_z \qop}_{\psi} 
            - g (1-\sigma^2),
\end{equation*}
using integration by parts. Then, using again that $\int  \qval (\abs{\psi_+}^2 + \abs{\psi_-}^2)\dd{\qval} = \xi$ we have that $\ev{\hat{\sigma}_z \qop}_{\psi}  = 2 \int  \qval \abs{\psi_+}^2 \dd{\qval} - \xi$ such that
\begin{equation*}
    4 t \int  \psi_+'\psi_-
        = \omega^2 \xi (1 + \sigma) - g (1-\sigma^2)
            - 2 \omega^2 \int  \qval \abs{\psi_+}^2 \dd{\qval}.
\end{equation*}
\crefpart{thrm:fllproperties}{item:fllkinhopeq} then follows from a slight rearrangement. Note that the same result also follows from using \cref{eq:virial-sigmaz-p,eq:sigma-xi-j-relation} together.

Finally, in order to show \crefpart{thrm:fllproperties}{item:fllkinhopineq}, we consider the second-order condition
\[ \left. \dv[2]{s} \ev*{\hat{H}_0}_{\psi^s} \right|_{s=0} \geq 0. \]
Then only two terms will contribute, and it follows from integration by parts that
\begin{equation*}
    -4t\int  \psi_+''\psi_-  + \frac{\omega^2}{2} (1 - \sigma^2) \geq 0.
\end{equation*}

\subsection{Optimizers are Ground States and \texorpdfstring{$v$}{v}-Representability}\label{sec:optimizers-gs}
In this section we discuss the relation between optimizers of the Levy--Lieb functional and eigenstates for the Schrödinger equation of the quantum Rabi model. Importantly, we will show that optimizers are even ground states and show full $v$-representability if $\sigma\neq\pm 1$. But first, the question arises whether the infimum can always be attained, i.e., if an optimizer even exists. Fortunately, this question can be answered positively.

\begin{theorem}\label{thrm:ExistenceOptimizer}
    For every density pair $(\sigma,\xi) \in [-1,1] \cross \RR$, there exists an optimizer $\psi \in \mathcal{M}_{\sigma,\xi}$ such that
    \begin{equation*}
        F_\mathrm{LL}(\sigma,\xi) = \ev{\hat{H}_0}{\psi},
    \end{equation*}
    i.e., the infimum in \cref{eq:DefFLL} is a minimum.
\end{theorem}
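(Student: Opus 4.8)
The plan is to run the \emph{direct method in the calculus of variations}. By the displacement rule of \crefpart{thrm:fllproperties}{item:flldisplacement} it is enough to produce an optimizer at $\xi = 0$: once $\psi$ optimizes $F_\mathrm{LL}(\sigma,0)$, the shifted state $\hat{\mathcal D}_\zeta\psi$ optimizes $F_\mathrm{LL}(\sigma,\zeta)$ for any $\zeta\in\RR$, as already noted in that proof. So fix $\sigma\in[-1,1]$ and pick a minimizing sequence $\psi_n\in\mathcal M_{\sigma,0}$ with $\ev{\hat H_0}{\psi_n}\to F_\mathrm{LL}(\sigma,0)$, which is finite by the lower bound \cref{eq:lower-bound-ev-of-H} and the existence of a trial state.

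The first step is to show that $\{\psi_n\}$ is bounded in the form domain $Q_0$. Reading off the completed-square identity \cref{eq:2HO} with $v=j=0$, the two shifted-oscillator contributions obey
\[
  \sum_{\alpha\in\{\pm\}}\int\Bigl[\tfrac12|\psi_{n,\alpha}'|^2 + V_\alpha|\psi_{n,\alpha}|^2\Bigr]\dd x
  = \ev{\hat H_0}{\psi_n} + 2t\,\Re\dua{\psi_{n,+}}{\psi_{n,-}} + \frac{g^2}{2\omega^2},
\]
and the middle term is bounded in modulus by $t$ via Cauchy--Schwarz. Since each $V_\alpha$ is a shifted harmonic well $\tfrac{\omega^2}{2}(x \pm g/\omega^2)^2$, this bounds $\|\psi_{n,\alpha}'\|$ and $\|(\hat x \pm g/\omega^2)\psi_{n,\alpha}\|$ uniformly in $n$, hence $\|\hat p\psi_n\|$ and $\|\hat x\psi_n\|$ are uniformly bounded, i.e.\ $\{\psi_n\}$ is bounded in $Q_0$.

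Next, extract a subsequence (not relabelled) with $\psi_n\rightharpoonup\psi$ weakly in $Q_0$. Because the confining potential makes $\hat p^2+\omega^2\hat x^2$ have compact resolvent (the fact already invoked in \cref{subsec:GSProperies}), the embedding $Q_0\hookrightarrow\mathcal H$ is compact, so $\psi_n\to\psi$ strongly in $\mathcal H = L^2(\RR,\CC^2)$. Strong $L^2$-convergence gives at once $\|\psi\|=1$ and $\sigma_\psi = \lim_n\sigma_{\psi_n} = \sigma$. For the displacement constraint, weak convergence in $Q_0$ plus the uniform bound on $\|\hat x\psi_n\|$ yields $\hat x\psi_n\rightharpoonup\hat x\psi$ in $\mathcal H$, and pairing with the strongly convergent $\psi_n$ gives $\xi_{\psi_n} = \ev{\hat x}{\psi_n}\to\ev{\hat x}{\psi} = \xi_\psi$, so $\xi_\psi = 0$. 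Thus $\psi\in\mathcal M_{\sigma,0}$. It remains to check lower semicontinuity of $\ev{\hat H_0}{\cdot}$ along the sequence: $\tfrac12\|\psi'\|^2$ and $\tfrac{\omega^2}{2}\|\hat x\psi\|^2$ are convex and strongly continuous, hence weakly lower semicontinuous on $Q_0$; the linear-in-$x$ coupling $g\,\ev{\hat\sigma_z\hat x}{\cdot}$ converges along the sequence by the same weak--strong pairing; and $-2t\,\Re\dua{\psi_+}{\psi_-}$ is continuous for strong $L^2$-convergence. Therefore $\ev{\hat H_0}{\psi}\le\liminf_n\ev{\hat H_0}{\psi_n} = F_\mathrm{LL}(\sigma,0)$, while $\ev{\hat H_0}{\psi}\ge F_\mathrm{LL}(\sigma,0)$ by \cref{eq:DefFLL}; hence equality holds and $\psi$ is an optimizer. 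Undoing the displacement reduction gives an optimizer for arbitrary $\xi$.

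The only genuinely delicate point I anticipate is the preservation of the constraints in the limit --- in particular that $\ev{\hat x}{\psi_n}$ does not drift off, which would fail for a merely $L^2$-bounded minimizing sequence and would break both the $\xi$-constraint and the lower semicontinuity of the coupling term. This is precisely where the harmonic confinement is indispensable: it is what upgrades the weak $Q_0$-bound to strong $\mathcal H$-convergence (equivalently, tightness of the minimizing sequence), after which everything else is routine lower-semicontinuity bookkeeping.
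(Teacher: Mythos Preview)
Your argument is correct and is precisely the direct-method proof the paper points to: the paper defers the full argument to \cite[Theorem~3.4]{Bakkestuen2024} but explicitly flags that ``the trapping nature of the harmonic-oscillator potential'' replaces the density-constraint compactness used in standard DFT, which is exactly the mechanism you exploit via the compact embedding $Q_0\hookrightarrow\mathcal H$. The reduction to $\xi=0$ via the displacement rule is a harmless convenience (its proof in \crefpart{thrm:fllproperties}{item:flldisplacement} works at the level of infima and does not presuppose optimizers), and the remaining weak--strong bookkeeping is sound.
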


The proof, also for the more general case of the Dicke model, can be found in \citet[Theorem 3.4]{Bakkestuen2024}. Note that within the standard formulation of DFT, the proof of the analogous statement, e.g.\ \citet[Theorem~3.3]{Lieb1983}, relies on the density constraint on the wavefunction to obtain the necessary convergence of the optimizing sequence\footnote{The argument of Lieb in standard DFT has also analogous counterparts in paramagnetic CDFT~\cite{Laestadius2014,Kvaal2021}.}. In the setting of the quantum Rabi model and the Dicke model, this feature can be obtained by the trapping nature of the harmonic-oscillator potential.

For regular values of $\sigma$, the following optimality result is obtained in the language of Lagrange multipliers by use of the first and second order criticality conditions on the quadratic form $\ev{\hat{H}_0}{\psi}$. The use of Lagrange multipliers comes naturally, since we have three constraints to fulfill within the variational problem of \cref{eq:DefFLL}: $\|\psi\|=1$, $\sigma_\psi=\sigma$, and $\xi_\psi=\xi$. The corresponding Lagrange multipliers are then $E$, $v$, and $j$. For the full argument, see the proof of Theorem~3.7 in \citet{Bakkestuen2024}, where the result is shown to hold for all regular $\sigma$ in the generalization to the Dicke model, or Theorem~3.18 in the same reference for a specialization to the quantum Rabi model.

\begin{theorem} \label{thrm:OptimalityRegular}
    Let $(\sigma,\xi)\in (-1,1)\times \RR $ and suppose that $\psi \in \mathcal{M}_{\sigma,\xi}$ is an optimizer of $ F_\mathrm{LL}(\sigma,\xi)$. Then there exist unique Lagrange multipliers $E,v,j\in \RR $ such that $\psi$ satisfies the Schrödinger equation (in the strong sense)
    \begin{equation}\label{eq:SE}
        \hat{H}(v,j) \psi = E \psi, 
    \end{equation}
    as well as the second-order condition
    \begin{equation*}
        \ev{\hat{H}(v,j)}{\chi} \geq E \norm{\chi}^2
    \end{equation*}
    for all $\chi$ in the the tangent space of $\mathcal{M}_{\sigma,\xi}$ at $\psi$, characterized by $\chi\in Q_0$, $\braket{\psi}{\chi} = 0$, $\mel{\psi}{\hat{\sigma}_z}{\chi} = 0$, and $\mel{\psi}{\qop}{\chi} = 0$.
    Furthermore, $\psi$ has the internal energy
    \begin{equation*}
        F_\mathrm{LL}(\sigma,\xi) = \ev{\hat{H}_0}{\psi} = E - v\sigma - j\xi.
    \end{equation*}
\end{theorem}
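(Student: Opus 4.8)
The plan is to read \cref{eq:DefFLL} as the minimization of the internal energy $\psi\mapsto\ev{\hat H_0}{\psi}$ on the Hilbert space $Q_0$ subject to the three \emph{quadratic} constraints $\norm{\psi}^2=1$, $\sigma_\psi=\sigma$ and $\xi_\psi=\xi$, and then to run the classical Lagrange-multiplier argument for constrained extrema. Existence of a minimizer $\psi$ is already available from \cref{thrm:ExistenceOptimizer}, so the first real task is the constraint qualification at $\psi$: the three constraint differentials, which up to the factor $2$ are the vectors $\psi$, $\hat\sigma_z\psi$ and $\hat x\psi$ in $\mathcal H$, must be linearly independent. This is exactly where $\sigma\in(-1,1)$ is used. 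Normalization forces $\norm{\psi_\pm}^2=(1\pm\sigma)/2>0$, so both components are nonzero in $L^2$; hence $\psi=(\psi_+,\psi_-)^\top$ and $\hat\sigma_z\psi=(\psi_+,-\psi_-)^\top$ are not proportional, and an identity $a\psi+b\hat\sigma_z\psi+c\hat x\psi=0$ would force $(a+b+cx)\psi_+(x)=0$ and $(a-b+cx)\psi_-(x)=0$ for a.e.\ $x$, which, since $\hat x$ has no $L^2$-eigenfunctions, gives $c=0$ and then $a=b=0$. (This is the $N=1$ case of \cref{def:RegularDensity-Dicke-new}; for $\sigma=\pm1$ one component vanishes, $\hat\sigma_z\psi=\pm\psi$, and the qualification fails, which is precisely why those points are excluded.) With linear independence in hand, the constraint map $\psi\mapsto(\norm{\psi}^2,\sigma_\psi,\xi_\psi)$ is a submersion near $\psi$, so $\mathcal M_{\sigma,\xi}$ is locally a smooth submanifold of $Q_0$, with tangent space at $\psi$ given by the orthogonality relations stated in the theorem.

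Next I would extract the first-order condition. Since $\psi$ is a local minimizer of the internal energy on this submanifold, there is a triple $(E,v,j)\in\RR^3$ (the signs of $v,j$ fixed to match \cref{eq:FullHamiltonian}) for which the Lagrangian $\psi'\mapsto\ev{\hat H(v,j)}{\psi'}-E\norm{\psi'}^2$ is stationary at $\psi$, i.e.\
\begin{equation*}
    \langle\varphi,(\hat H(v,j)-E)\psi\rangle = 0 \quad\text{for all }\varphi\in Q_0,
\end{equation*}
understood through the closed quadratic form of $\hat H(v,j)$ (the real part drops out on replacing $\varphi$ by $i\varphi$). This is the Schrödinger equation in the weak sense; the first representation theorem for the form of the self-adjoint, bounded-below operator $\hat H(v,j)$ then places $\psi$ in its operator domain and upgrades this to $\hat H(v,j)\psi=E\psi$ in the strong sense, consistent with the analyticity of eigenstates already established in \cref{subsec:GSProperies}. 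Uniqueness of $(E,v,j)$ is immediate from the constraint qualification: subtracting two such relations yields $(E-E')\psi+(v-v')\hat\sigma_z\psi+(j-j')\hat x\psi=0$, so all differences vanish.

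For the second-order condition I would use that a local minimizer on a $C^1$-submanifold makes the Hessian of the Lagrangian positive semidefinite on the tangent space; since that Lagrangian is itself a quadratic form, its Hessian is simply $2(\hat H(v,j)-E)$, giving $\ev{\hat H(v,j)}{\chi}\ge E\norm{\chi}^2$ for every $\chi\in Q_0$ with $\braket{\psi}{\chi}=\mel{\psi}{\hat\sigma_z}{\chi}=\mel{\psi}{\hat x}{\chi}=0$ — and here the complex condition $\braket{\psi}{\chi}=0$ already excludes the spurious gauge direction $\chi=i\psi$. Finally, the energy identity comes from pairing $\hat H(v,j)\psi=E\psi$ with $\psi$: this gives $\ev{\hat H(v,j)}{\psi}=E$, and since $\psi$ is an optimizer we have $\ev{\hat H_0}{\psi}=F_\mathrm{LL}(\sigma,\xi)$, so $F_\mathrm{LL}(\sigma,\xi)=\ev{\hat H(v,j)}{\psi}-v\sigma-j\xi=E-v\sigma-j\xi$.

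I expect the main obstacle to be not the algebra but making this infinite-dimensional Lagrange calculus rigorous on the form domain $Q_0$: one must check that the constraint map, in particular the moment $\psi\mapsto\xi_\psi$ built from the unbounded operator $\hat x$, is $C^1$ as a map $Q_0\to\RR^3$; that the implicit-function/submanifold and tangent-space descriptions remain valid in this setting; and that the weak Euler--Lagrange equation genuinely promotes to a strong eigenvalue equation. These points, together with the parallel treatment of the Dicke model, are carried out in \cite[Theorems~3.7 and~3.18]{Bakkestuen2024}, whose argument I would follow.
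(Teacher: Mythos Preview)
Your proposal is correct and follows essentially the same approach as the paper: the paper explicitly frames the result as a Lagrange-multiplier argument with the three constraints $\|\psi\|=1$, $\sigma_\psi=\sigma$, $\xi_\psi=\xi$ yielding multipliers $E,v,j$, and defers the full technical details (constraint qualification for regular $\sigma$, submanifold structure, weak-to-strong upgrade) to \cite[Theorems~3.7 and~3.18]{Bakkestuen2024}, exactly as you do. Your identification of the constraint qualification with the regularity condition $\sigma\in(-1,1)$ and of the main obstacle as the infinite-dimensional smoothness on $Q_0$ matches the paper's discussion precisely.
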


\begin{figure}
    \centering
    %auto-ignore
\usetikzlibrary{shapes.geometric, arrows}
\usetikzlibrary{3d}
\begin{tikzpicture}[sphere segment/.style args={%
phi from #1 to #2 and theta from #3 to #4 and radius #5}{insert path={%
 plot[variable=\x,smooth,domain=#2:#1] 
 (xyz spherical cs:radius=#5,longitude=\x,latitude=#3)
 -- plot[variable=\x,smooth,domain=#3:#4] 
 (xyz spherical cs:radius=#5,longitude=#1,latitude=\x)
 --plot[variable=\x,smooth,domain=#1:#2] 
 (xyz spherical cs:radius=#5,longitude=\x,latitude=#4)
 -- plot[variable=\x,smooth,domain=#4:#3] 
 (xyz spherical cs:radius=#5,longitude=#2,latitude=\x)}}]

   \draw[thin,sphere segment={phi from -20 to 70 and theta from -10 to 30 and radius 3}] ;

  \def\dx{-2};
  \def\dy{3.5};
  
  \draw[thick] ({0.8+\dx},{-1.5+\dy}) .. controls ({2.2+\dx},{0.8+\dy}) and ({2+\dx},{-2+\dy}) .. ({4+\dx},{-1+\dy});
  \draw[->,>=stealth] ({2.3+\dx},{-3+\dy}) -- node[left]{$\psi$} ({2.8+\dx},{-1.2+\dy});
  \draw[->,>=stealth] ({2.8+\dx},{-1.2+\dy}) -- node[below]{$\chi$} ({2.8+1.8*0.7+\dx},{-1.2-0.5*0.7+\dy});
  \fill ({2.81+\dx},{-1.17+\dy}) circle[radius=2pt];
  \node at ({2.2+\dx},{-1+\dy}) [left] {$\mathcal{M}_{\sigma,\xi}$};
  \node at ({1+\dx},{-.5+\dy}) [left] {$Q_0$};
  
\end{tikzpicture}
    \caption{The constraint manifold $\mathcal{M}_{\sigma,\xi}$ inside the set of admissible wavefunctions $Q_0$ with an optimizer $\psi$ and a vector $\chi$ from the tangent space.}
    \label{fig:manifold}
\end{figure}

The situation is sketched in Fig.~\ref{fig:manifold}. Note that while a constraint manifold can be analogously formulated within the standard formulation of DFT, the full result from above cannot be obtained since the density constraint does not give rise to a well-defined tangent space due to problems with differentiability~\cite{Lammert2007}.

Additionally, a similar result can also be formulated for the critical values of $\sigma$, which in the case of the quantum Rabi model are simply the endpoints $\sigma = \pm 1$.

\begin{theorem}\label{thrm:OptimalityIrregular}
    Let $\sigma = \pm 1 $ and $\xi \in \RR $. Suppose that $\psi \in \mathcal{M}_{\sigma,\xi}$ is an optimizer of $ F_\mathrm{LL}(\sigma,\xi)$. Then $\psi_\mp \equiv 0$ and there exists an $n \in \mathbb{N}_0$ such that $\tilde{\psi}_\pm (\qval) = \psi_\pm(\qval + \xi)$ satisfies the harmonic-oscillator equation 
    \begin{equation}\label{eq:OptimalityIrregularSchrodinger}
        -\frac{1}{2}\tilde{\psi}_\pm '' + \frac{\omega^2}{2} \qval^2 \tilde{\psi}_\pm = \omega\qty(n + \frac{1}{2}) \tilde{\psi}_\pm
    \end{equation}
    in the strong sense. The $\psi$ has the internal energy
    \begin{equation}\label{eq:OptimalityIrregularEnergy}
        \ev{\hat{H}_0}{\psi} = \omega\qty(n + \frac{1}{2}) \pm g \xi + \frac{\omega^2}{2}\xi^2.
    \end{equation}
\end{theorem}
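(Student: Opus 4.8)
The plan is to exploit that at the irregular points $\sigma = \pm 1$ the two spin channels decouple entirely, so that $F_\mathrm{LL}(\pm 1,\xi)$ collapses to a shifted one-dimensional harmonic-oscillator problem whose minimizer can be written down explicitly.

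First I would settle the claim $\psi_\mp \equiv 0$: this is immediate from the normalization identity \cref{eq:PauliProjectionConstraint}, since at $\sigma = \pm 1$ it gives $\norm{\psi_\mp}^2 = (1\mp\sigma)/2 = 0$. With $\psi_\mp = 0$ the kinetic hopping term $-t\ev{\hat\sigma_x}{\psi} = -2t\Re\dua{\psi_+}{\psi_-}$ vanishes, so $\ev{\hat H_0}{\psi}$ reduces to the single-component functional $\tfrac12\norm{\psi_\pm'}^2 + \tfrac{\omega^2}{2}\norm{x\psi_\pm}^2 \pm g\int x\abs{\psi_\pm}^2\,\dd{x}$, to be minimized over normalized $\psi_\pm \in Q_0$ subject to the displacement constraint $\int x\abs{\psi_\pm}^2 = \xi$. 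Next I would remove the displacement using \crefpart{thrm:fllproperties}{item:flldisplacement}: the shifted state $\hat{\mathcal{D}}_{-\xi}\psi$, whose only nonvanishing component is $\tilde\psi_\pm(x) = \psi_\pm(x+\xi)$, is an optimizer of $F_\mathrm{LL}(\pm 1, 0)$; for that state $\ev{\hat x}{\tilde\psi} = 0$, the coupling term drops, and $\tilde\psi_\pm$ is then seen to minimize the bare oscillator energy $\ev{\tfrac12\hat p^2 + \tfrac{\omega^2}{2}\hat x^2}{\tilde\psi_\pm}$ among normalized one-component states constrained only by $\int x\abs{\tilde\psi_\pm}^2 = 0$.

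The main step is to identify $\tilde\psi_\pm$ with a harmonic-oscillator eigenstate $\phi_n$ and to do so in the strong sense. The quickest route is to observe that the oscillator ground state $\phi_0$, the centered Gaussian appearing in \cref{eq:TrialState}, has vanishing first moment $\int x\abs{\phi_0}^2 = 0$ and hence lies in the feasible set, so it already realizes the unconstrained infimum $\omega/2$; since that infimum is attained only at the ground state, $\tilde\psi_\pm = \phi_0$, which solves \cref{eq:OptimalityIrregularSchrodinger} with $n = 0$. To obtain the ``there exists $n$'' statement directly, one may instead introduce Lagrange multipliers $E,\mu$ for the two constraints $\norm{\tilde\psi_\pm} = 1$ and $\int x\abs{\tilde\psi_\pm}^2 = 0$: the optimizer is then a weak, and — by the same linear-ODE/regularity argument used for the eigenstates in \cref{subsec:GSProperies} — strong and analytic solution of $-\tfrac12\tilde\psi_\pm'' + \tfrac{\omega^2}{2}x^2\tilde\psi_\pm + \mu x\tilde\psi_\pm = E\tilde\psi_\pm$; completing the square exhibits $\tilde\psi_\pm$ as a shifted eigenstate $\phi_n(\,\cdot + \mu/\omega^2)$, and imposing $\int x\abs{\tilde\psi_\pm}^2 = 0$ forces $\mu = 0$, leaving $\tilde\psi_\pm = \phi_n$ a strong solution of \cref{eq:OptimalityIrregularSchrodinger} for some $n \in \mathbb{N}_0$ (with minimality fixing $n=0$). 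I expect the two slightly delicate points to be (i) upgrading the weak critical-point relation to a strong, analytic solution, which is handled by invoking the oscillator regularity already established in \cref{subsec:GSProperies}, and (ii) verifying that the drift multiplier $\mu$ must vanish on account of the displacement constraint.

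Finally, the energy relation \cref{eq:OptimalityIrregularEnergy} follows by undoing the shift: since $\psi$ is an optimizer, $\ev{\hat H_0}{\psi} = F_\mathrm{LL}(\pm 1,\xi)$, and by \crefpart{thrm:fllproperties}{item:flldisplacement} this equals $F_\mathrm{LL}(\pm 1, 0) + \tfrac{\omega^2}{2}\xi^2 \pm g\xi = \omega(n+\tfrac12) + \tfrac{\omega^2}{2}\xi^2 \pm g\xi$.
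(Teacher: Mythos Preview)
Your proposal is correct and aligns with the paper's approach: the paper does not give a self-contained proof here but states that the argument, as in the regular case (\cref{thrm:OptimalityRegular}), rests on the criticality conditions for the quadratic form $\ev{\hat H_0}{\psi}$, deferring details to \cite[Theorem~3.18]{Bakkestuen2024}. Your route (b) via Lagrange multipliers $(E,\mu)$ for the single-component problem is precisely this criticality argument, and your identification of the two delicate points (regularity upgrade and vanishing of the drift multiplier $\mu$ from the displacement constraint) is accurate.

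Your alternative route (a) is worth noting as a genuine simplification over what the paper indicates: because the unconstrained oscillator minimizer $\phi_0$ already satisfies the displacement constraint $\int x\abs{\phi_0}^2=0$, the constrained and unconstrained problems share the same unique minimizer, and no Lagrange-multiplier machinery is needed at all to conclude $\tilde\psi_\pm=\phi_0$ (hence $n=0$). This bypasses both delicate points and delivers the statement of the theorem (with the sharper conclusion $n=0$, which the paper records separately as \cref{corollary:OptimizerIrreg}) in one stroke. The Lagrange-multiplier route buys slightly more, namely that every \emph{critical} point of the constrained functional is an oscillator eigenstate, which is closer to the spirit of the theorem's ``there exists $n\in\mathbb N_0$'' phrasing.
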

The proof of this results relies, similarly to the proof of \cref{thrm:OptimalityRegular}, on the criticality condition of the quadratic form $\ev{\hat{H}_0}{\psi}$. The full derivation can be found in \citet[Theorem~3.18]{Bakkestuen2024}, but note here that this result does not immediately generalize to the Dicke model. In fact, the optimality conditions for the critical $\sigma$'s can also be written down for the general Dicke model, but they are significantly more complicated and do not take the simple form of \cref{eq:OptimalityIrregularSchrodinger,eq:OptimalityIrregularEnergy}. Moreover, the following result follows immediately from \cref{thrm:OptimalityIrregular}, in particular from the harmonic-oscillator solution of \cref{eq:OptimalityIrregularSchrodinger} where, quite obviously, we can limit ourselves to $n=0$ for the optimizer that is thus unique. 
\begin{corollary}\label{corollary:OptimizerIrreg}
    For any $\xi \in \RR $ and critical $\sigma$, the Gaussian trial state from \cref{eq:TrialState}, with $c_+=1,c_-=0$ if $\sigma = +1$ and $c_+=0,c_-=1$ if $\sigma = -1$, is the unique optimizer of $F_\mathrm{LL}(\pm 1,\xi)$. The corresponding values of the Levy--Lieb functional are   
    \[F_\mathrm{LL}(\pm 1,\xi) = \frac{\omega}{2} \pm g\xi + \frac{\omega^2}{2} \xi^2 .\]
\end{corollary}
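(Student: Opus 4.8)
The plan is to read off the result from \cref{thrm:OptimalityIrregular} by pinning down the free quantum number $n$. \Cref{thrm:ExistenceOptimizer} guarantees that $F_\mathrm{LL}(\pm1,\xi)$ is attained by some $\psi\in\mathcal{M}_{\pm1,\xi}$, and \cref{thrm:OptimalityIrregular} then tells us that any such optimizer has $\psi_\mp\equiv0$, that $\tilde\psi_\pm(x)=\psi_\pm(x+\xi)$ solves the harmonic-oscillator eigenvalue equation \eqref{eq:OptimalityIrregularSchrodinger} for some $n\in\mathbb{N}_0$, and that its internal energy equals $\omega(n+\tfrac12)\pm g\xi+\tfrac{\omega^2}{2}\xi^2$ by \eqref{eq:OptimalityIrregularEnergy}. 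So the whole task reduces to showing $n=0$ and identifying the $n=0$ eigenfunction.

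To force $n=0$, I would plug the Gaussian trial state of \cref{eq:TrialState} with $c_+=1,\,c_-=0$ (resp.\ $c_+=0,\,c_-=1$) into $\psi\mapsto\ev{\hat H_0}{\psi}$ as a variational competitor. Taken in this trial state, the needed expectation values are elementary for a shifted oscillator ground state: $\ev{\hat p^2}=\omega/2$, $\ev{\hat x^2}=\tfrac1{2\omega}+\xi^2$, $\ev{\hat\sigma_x}=0$ (one component vanishes), and $\ev{\hat\sigma_z\hat x}=\pm\xi$, giving an internal energy of $\tfrac\omega2\pm g\xi+\tfrac{\omega^2}{2}\xi^2$. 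Hence $F_\mathrm{LL}(\pm1,\xi)\le\tfrac\omega2\pm g\xi+\tfrac{\omega^2}{2}\xi^2$; comparing this with the energy $\omega(n+\tfrac12)\pm g\xi+\tfrac{\omega^2}{2}\xi^2$ of an arbitrary optimizer and using $\omega>0$ forces $n=0$, which simultaneously yields the asserted value of $F_\mathrm{LL}(\pm1,\xi)$.

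For uniqueness I would use that the ground-state eigenspace of $-\tfrac12\partial_x^2+\tfrac{\omega^2}{2}x^2$ is one-dimensional, spanned by $x\mapsto(\omega/\pi)^{1/4}e^{-\omega x^2/2}$. Thus $\tilde\psi_\pm$ is a scalar multiple of this Gaussian; the normalization $\|\psi\|=1$ together with the real--nonnegative choice from \crefpart{thrm:fllproperties}{item:flluniqueoptimiser} fixes the multiple, so $\psi$ coincides with the trial state \eqref{eq:TrialState} (up to an irrelevant global phase).

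There is no genuinely hard step: the substance is entirely supplied by \cref{thrm:OptimalityIrregular}. The two points deserving care are (i) excluding $n\ge1$, which is precisely where the explicit upper bound from the trial state is used rather than the variational definition alone, and (ii) phrasing uniqueness modulo a global phase, or equivalently relative to the fixed real--nonnegative normalization.
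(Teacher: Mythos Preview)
Your argument is correct and follows essentially the same route as the paper: both derive the corollary directly from \cref{thrm:OptimalityIrregular} by observing that the minimal energy among the admissible quantum numbers is attained at $n=0$, and that the $n=0$ harmonic-oscillator eigenspace is one-dimensional. The paper dispatches the $n=0$ step with a ``quite obviously'', whereas you spell out the upper bound via the trial state explicitly; this is a cosmetic difference only.
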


Can we say something similar for the regular values of $\sigma$? Indeed the optimizers of $ F_\mathrm{LL}(\sigma,\xi)$ are always unique and not only solve the Schrödinger equation but constitute its \emph{ground-state solution}. This important result does unfortunately not generalize fully to the Dicke model. There, we may only conclude that the optimizer is a low-lying eigenstate, in particular at most the $(N+1)$-th lowest eigenstate (see \citet[Theorem~3.9]{Bakkestuen2024}).

\begin{theorem}\label{thrm:OptimizerGS}
    For every density pair $(\sigma, \xi) \in (-1,1)\times\RR $ there exists a unique real-valued and strictly positive optimizer $\psi \in \mathcal{M}_{\sigma,\xi}$ of $F_\mathrm{LL}(\sigma, \xi)$ that is the (non-degenerate) ground-state solution of \cref{eq:SE}.
\end{theorem}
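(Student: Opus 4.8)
The ingredients are almost all in place, so the plan is simply to assemble them. By \cref{thrm:ExistenceOptimizer} an optimizer of $F_\mathrm{LL}(\sigma,\xi)$ exists; by \crefpart{thrm:fllproperties}{item:flluniqueoptimiser} it can be taken real-valued and non-negative in both components; and by \cref{thrm:OptimalityRegular} such an optimizer $\psi$ solves $\hat H(v,j)\psi = E\psi$ in the strong sense for some unique multipliers $E,v,j\in\RR$. The remaining work is to (i) upgrade non-negativity of $\psi$ to strict positivity, (ii) identify $\psi$ with the unique ground state of $\hat H(v,j)$, and (iii) deduce uniqueness of the optimizer by a standard variational comparison.

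For step (i): since $(\sigma,\xi)\in(-1,1)\times\RR$ is regular, \cref{eq:PauliProjectionConstraint} gives $\norm{\psi_\pm}^2 = (1\pm\sigma)/2 > 0$, so neither component vanishes identically. The positivity argument of \cref{subsec:GSProperies} — which uses only analyticity of eigenstates, the coupled second-order system \cref{eq:coupled-ODEs-a,eq:coupled-ODEs-b}, and non-negativity, and nowhere the ground-state property — then applies verbatim to show $\psi_\pm(x) > 0$ for all $x\in\RR$. For step (ii): by the positivity-improving property of the imaginary-time evolution established in \cref{subsec:GSProperies}, $\hat H(v,j)$ has a non-degenerate ground state $\psi_0$ that is strictly positive in both components. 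Since $\psi$ is an eigenstate of $\hat H(v,j)$ with $\psi > 0$ and $\psi_0 > 0$, we have $\braket{\psi_0}{\psi} = \dua{\psi_{0,+}}{\psi_+} + \dua{\psi_{0,-}}{\psi_-} > 0$, whereas an eigenstate in an eigenspace orthogonal to the ground state would satisfy $\braket{\psi_0}{\psi} = 0$ — a contradiction. Hence $\psi$ lies in the one-dimensional ground-state eigenspace, so $\psi = \psi_0$ (the phase fixed by both being real and positive), $E = E_0(v,j)$, and $\psi$ is the non-degenerate ground-state solution of \cref{eq:SE}.

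For step (iii): let $\varphi\in\mathcal M_{\sigma,\xi}$ be arbitrary. Then $\ev{\hat H_0}{\varphi} = \ev{\hat H(v,j)}{\varphi} - v\sigma - j\xi \geq E_0(v,j) - v\sigma - j\xi = \ev{\hat H_0}{\psi} = F_\mathrm{LL}(\sigma,\xi)$, with equality precisely when $\varphi$ is a ground state of $\hat H(v,j)$, hence — by non-degeneracy — precisely when $\varphi = \psi_0$. Thus $\psi_0$ is the unique optimizer, which completes the proof. As a by-product this yields the $v$-representability claim announced after \cref{thrm:HKRabi}: every regular density pair $(\sigma,\xi)\in(-1,1)\times\RR$ is the ground-state density pair of $\hat H(v,j)$.

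The one genuinely non-formal point is step (ii), i.e.\ bridging the gap between a \emph{constrained} critical point and a \emph{global} ground state: the Euler--Lagrange equation of \cref{thrm:OptimalityRegular} together with its second-order condition only controls the quadratic form on the codimension-three tangent space of $\mathcal M_{\sigma,\xi}$, which is far weaker than minimality over all of $Q_0$. The whole argument leans on the Perron--Frobenius/positivity-improving structure of the quantum Rabi Hamiltonian, which forces any non-negative eigenstate with non-trivial components to coincide with the unique positive ground state. This is also exactly where the Dicke generalization fails — there non-negativity of optimizers is not available (cf.\ the remark following \crefpart{thrm:fllproperties}{item:flluniqueoptimiser}), so one can only confine the optimizer to the lowest $N+1$ eigenstates, as in \cite{Bakkestuen2024}.
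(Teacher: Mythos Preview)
Your proof is correct and follows essentially the same approach as the paper's, which simply says the optimizer ``has all the properties that we showed for ground states in \cref{subsec:GSProperies}, in particular that it is unique.'' You have unpacked that terse sentence into explicit steps (i)--(iii), in particular making precise the Perron--Frobenius-type argument that a strictly positive eigenstate must coincide with the strictly positive non-degenerate ground state, and adding the variational comparison to close the uniqueness claim for \emph{all} optimizers---both points that the paper leaves implicit.
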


\begin{proof}
    Firstly, we invoke \cref{thrm:ExistenceOptimizer} to show that an optimizer $\psi$ exists and \crefpart{thrm:fllproperties}{item:flluniqueoptimiser} to have it real and with non-negative components. Furthermore, the optimizer $\psi$ satisfies the Schrödinger equation in the strong sense by \cref{thrm:OptimalityRegular}. This means it has all the properties that we showed for ground states in \cref{subsec:GSProperies}, in particular that it is unique.
\end{proof}

The preceding theorem has tremendous consequences. Firstly, since every regular density pair is a ground-state solution for an Hamiltonian with some $(v,j)\in\RR^2$, we achieved a full $v$-representability result. Secondly, since this ground state is always non-degenerate, there is no need to ever rely on mixed states. This means that the Lieb functional $F_\mathrm{L}$~\cite{Lieb1983}, that is equal to the constrained search over ensembles, will give the same result. In other words, the Levy--Lieb functional and the Lieb functional are the same. Finally, since the Lieb functional is always convex, it implies that $F_\mathrm{LL}$ is also convex. Then since the external quantities $(v,j)$ from the subdifferential $-\underline\partial F(\sigma,\xi)$ are unique as Lagrange multipliers (\cref{thrm:OptimalityRegular}), the functional is differentiable by a basic result from convex analysis (see, e.g., Theorem 25.1 in \citet{rockafellar-book}). We collect those results in the following corollary.

\begin{corollary}\label{corollary:OptimizerGS}
    Consider a regular density pair $(\sigma, \xi) \in (-1,1)\times\RR$. Then the following holds:
    \begin{enumerate}
        \item\label{item:vRep} ($v$-representability) The $(\sigma, \xi)$ is uniquely pure-state $v$-representable.
        \item\label{item:FLL-FL-equiv} (equivalence of functionals) $F_\mathrm{LL}(\sigma,\xi) = F_\mathrm{L}(\sigma,\xi)$.
        \item\label{item:deriv} (differentiability) The $F_\mathrm{LL}$ is differentiable at $(\sigma,\xi)$ and $(v,j) = -\nabla F_\mathrm{LL}(\sigma,\xi)$ is its representing external potential pair.
    \end{enumerate}
\end{corollary}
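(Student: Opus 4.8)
The plan is to read off all three items essentially for free from \cref{thrm:OptimizerGS}, together with the optimality statement of \cref{thrm:OptimalityRegular} and the Hohenberg--Kohn lemma \cref{lemma:HK2}, leaving only a short convex-analytic wrap-up. For the $v$-representability claim I would invoke \cref{thrm:OptimizerGS} to obtain the unique real, strictly positive optimizer $\psi\in\mathcal{M}_{\sigma,\xi}$ of $F_\mathrm{LL}(\sigma,\xi)$, which is moreover the non-degenerate ground state of $\hat H(v,j)$ for the Lagrange multipliers $(v,j)\in\RR^2$ furnished (uniquely) by \cref{thrm:OptimalityRegular}. Then $(\sigma,\xi)=(\sigma_\psi,\xi_\psi)$ is exactly the ground-state density pair of $\hat H(v,j)$, hence pure-state $v$-representable; the representing state is unique because the ground state is non-degenerate, and the representing pair $(v,j)$ is unique by \cref{thrm:HKRabi}.

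For the equality $F_\mathrm{LL}=F_\mathrm{L}$, I would combine the two standard bounds. Since every pure state is (the diagonal of) an ensemble, $F_\mathrm{L}\le F_\mathrm{LL}$ holds throughout $[-1,1]\times\RR$. For the reverse inequality at a regular pair I would use that $v$-representability saturates the Fenchel--Young inequality: with $\psi$ and $(v,j)$ as above,
\[
    F_\mathrm{LL}(\sigma,\xi)=\ev{\hat H_0}{\psi}=E_0(v,j)-v\sigma-j\xi\le\sup_{(v',j')\in\RR^2}\big(E_0(v',j')-v'\sigma-j'\xi\big)=F_\mathrm{L}(\sigma,\xi),
\]
so the two functionals agree at $(\sigma,\xi)$. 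Because every regular pair is $v$-representable, the same computation gives $F_\mathrm{LL}=F_\mathrm{L}$ on the whole open convex strip $(-1,1)\times\RR$, which is what I would feed into the differentiability step.

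For differentiability I would note that $F_\mathrm{L}$ is convex (being a Legendre--Fenchel conjugate), hence so is $F_\mathrm{LL}$ on the open convex strip $(-1,1)\times\RR$, where the two coincide. Since $(\sigma,\xi)$ is an interior point of the effective domain, the subdifferential $\underline{\partial}F_\mathrm{LL}(\sigma,\xi)$ is nonempty; it contains $-(v,j)$ by the saturated inequality above, and I would show it is in fact a singleton: any further subgradient $-(v',j')$ would, on reversing the displayed computation, make $\psi$ a ground state of $\hat H(v',j')$ as well, forcing $(v',j')=(v,j)$ by \cref{lemma:HK2}. A convex function whose subdifferential at an interior point of its domain reduces to a single element is differentiable there with that element as its gradient (\cite[Theorem~25.1]{rockafellar-book}), so $F_\mathrm{LL}$ is differentiable at $(\sigma,\xi)$ with $\nabla F_\mathrm{LL}(\sigma,\xi)=-(v,j)$. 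The one genuinely delicate point is this singleton claim, i.e.\ passing from uniqueness of the \emph{Lagrange multipliers} in \cref{thrm:OptimalityRegular}---a statement about the constrained variational problem---to uniqueness of the convex \emph{subgradient}; the Hohenberg--Kohn argument of \cref{lemma:HK2} is exactly the required bridge, and everything else is bookkeeping that follows immediately from results already in hand.
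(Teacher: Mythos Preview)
Your proposal is correct and follows essentially the same route as the paper: $v$-representability from \cref{thrm:OptimizerGS} and \cref{thrm:OptimalityRegular}, equality $F_\mathrm{LL}=F_\mathrm{L}$ from $v$-representability, and differentiability from convexity plus a singleton subdifferential via \cite[Theorem~25.1]{rockafellar-book}. The only notable difference is that the paper simply asserts that the subdifferential elements \emph{are} the Lagrange multipliers and invokes their uniqueness from \cref{thrm:OptimalityRegular} directly, whereas you explicitly argue the converse inclusion (any subgradient forces $\psi$ to be a ground state for the corresponding Hamiltonian) and close it with \cref{lemma:HK2}; your version is a bit more careful on precisely the point you flagged as delicate, but the underlying argument is the same.
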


Here, \cref{item:vRep} shows that the mapping $ (-1,1)\times\RR  \ni  (\sigma, \xi) \mapsto (v,j) \in \RR ^2$ is a bijection. This already includes the Hohenberg--Kohn result of injectivity from \cref{thrm:HKRabi}. Yet, if $\sigma=\pm 1$ then $(\sigma,\xi)$ is not $v$-representable. This is also a significantly stronger result than for other settings where $v$-representability is not available. \Cref{item:FLL-FL-equiv} also holds for the critical endpoints at $\sigma=\pm 1$, where the proof, alongside a more detailed proof also for the other points, can be found in \citet{Bakkestuen2024}. For $|\sigma|\to 1$ the derivative from \cref{item:deriv} must go to infinity, $|v|\to\infty$, since the derivative of a convex function must always be monotone.

\subsection[Zero Coupling Functional]{The Levy--Lieb Functional at Zero Coupling}\label{sec:zero-coupling}
For the purposes of the later discussion on the adiabatic connection in \cref{sec:AC}, let us consider the special case of $g=0$. In analogy to the usual quantum chemistry language this can also be referred to as the Kohn--Sham model of the system as it describes a non-interacting (or rather uncoupled) variant of the model at hand. In this case the internal Hamiltonian simplifies to 
\begin{equation}\label{eq:HZeroCoupling}
    \hat{H}_0 = \frac{1}{2} \hat{p}^2 + \frac{1}{2}\omega^2 \qop^2
        - t \hat{\sigma}_x,
\end{equation}
which then naturally is diagonal in the eigenbasis of $\hat{\sigma}_x$. 

Recall from \cref{subsec:GSProperies} that we may approximate the expectation value of \cref{eq:HZeroCoupling} from below by the ground state of the harmonic oscillator and estimate the kinetic coupling term by the Cauchy--Schwarz inequality, such that
\[\ev{\hat{H}_0}{\psi} \geq \frac{\omega}{2} - t\sqrt{1- \sigma^2}. \]
This is of course a lower bound on $F_\mathrm{LL}^0 $ as well, in particular on $F^0_\mathrm{LL}(\sigma,0)$. Note that we use a zero superscript to denote the Levy--Lieb functional in the special case of $g=0$. 
We will generalize this notation further in the next section, where we 
study variable interaction strengths. 
Then by the displacement rule, \crefpart{thrm:fllproperties}{item:flldisplacement}, 
\begin{equation} \label{eq:FLL0LowerBound}
    F^0_\mathrm{LL}(\sigma,\xi) \geq \frac{\omega}{2} - t\sqrt{1- \sigma^2} + \frac{\omega^2}{2}\xi^2.
\end{equation}

Suppose again the Gaussian trial state, \cref{eq:TrialState}, with the coefficients $c_\pm = \sqrt{(1\pm \sigma)/2}$, which as previously shown satisfies the constraints. For this particular choice of trial state, $\psi$, we find that 
\begin{align*}
    \ev{\hat{p}^2}_{\psi} = \frac{\omega}{2}, \quad 
    \ev{\qop^2}_{\psi} = \frac{1}{2\omega} + \xi^2, \quad
    \ev{\hat{\sigma}_x}_{\psi} = \sqrt{1-\sigma^2}.
\end{align*}
Combining this with the constraints, we find that the trial state has the internal energy
\begin{equation*}
    \ev*{\hat{H}_0}_{\psi} = \frac{\omega}{2} -t \sqrt{1-\sigma^2}  + \frac{\omega^2}{2} \xi^2
\end{equation*}
which would be an upper bound on the ground state energy if the trial state is not a ground-state. However, it is in fact exactly equal the lower bound \cref{eq:FLL0LowerBound}, thus implying that the trial state, \cref{eq:TrialState}, is indeed the optimizer of the Levy--Lieb functional at zero coupling, and by \cref{thrm:OptimizerGS} we can formulate the following result. 
\begin{theorem}[Optimizer at Zero Coupling]\label{thrm:FLL0}
    For $g=0$ and $(\sigma,\xi) \in [-1,1]\times\RR $, the Gaussian trial state, \cref{eq:TrialState}, with coefficients $c_\pm = \sqrt{(1\pm \sigma)/2}$ is the optimizer of the Levy--Lieb functional. For all pairs  $(\sigma,\xi) \in [-1,1]\times\RR $ it holds
    \begin{equation*}
        F^0_\mathrm{LL}(\sigma,\xi) = \frac{\omega}{2} - t\sqrt{1- \sigma^2} + \frac{\omega^2}{2}\xi^2.
    \end{equation*}
\end{theorem}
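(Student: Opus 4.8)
The plan is a sandwich argument: derive a lower bound on $F^0_\mathrm{LL}$ that is attained exactly by the explicit Gaussian trial state, so the two must coincide and the trial state is optimal. First I would invoke the displacement rule \crefpart{thrm:fllproperties}{item:flldisplacement}, which at $g=0$ reads $F^0_\mathrm{LL}(\sigma,\xi) = F^0_\mathrm{LL}(\sigma,0) + \tfrac{\omega^2}{2}\xi^2$ and moreover transports the optimizer by the shift $\hat{\mathcal{D}}_\xi$; it therefore suffices to analyze $F^0_\mathrm{LL}(\sigma,0)$ and reinstate $\xi$ at the end.

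For the lower bound I would take an arbitrary admissible $\psi \in \mathcal{M}_{\sigma,0}$ and estimate $\ev{\hat H_0}{\psi}$ termwise as in \cref{subsec:GSProperies}: the photonic part $\tfrac12\norm{\hat p\psi}^2 + \tfrac{\omega^2}{2}\norm{\hat x\psi}^2$ is at least the harmonic-oscillator ground-state energy $\tfrac{\omega}{2}$ since the QHO spectrum starts at $\omega/2$ and $\norm{\psi}=1$, while the hopping term satisfies $-t\ev{\hat\sigma_x}{\psi} = -2t\Re\dua{\psi_+}{\psi_-} \ge -2t\norm{\psi_+}\norm{\psi_-} = -t\sqrt{1-\sigma^2}$ by Cauchy--Schwarz together with \cref{eq:PauliProjectionConstraint}. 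This yields $F^0_\mathrm{LL}(\sigma,0)\ge \tfrac{\omega}{2}-t\sqrt{1-\sigma^2}$, hence \cref{eq:FLL0LowerBound} in general.

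Next I would insert the Gaussian trial state of \cref{eq:TrialState} with $c_\pm=\sqrt{(1\pm\sigma)/2}$, which by the proof of \cref{thrm:N-Rep} lies in $\mathcal{M}_{\sigma,\xi}$, and compute the three relevant expectation values $\ev{\hat p^2}{\psi}=\tfrac{\omega}{2}$, $\ev{\hat x^2}{\psi}=\tfrac{1}{2\omega}+\xi^2$, and $\ev{\hat\sigma_x}{\psi}=2\Re(c_+^*c_-)=2c_+c_-=\sqrt{1-\sigma^2}$, which are routine Gaussian integrals plus the spinor identity for $\hat\sigma_x$. Assembling them gives $\ev{\hat H_0}{\psi}=\tfrac{\omega}{2}-t\sqrt{1-\sigma^2}+\tfrac{\omega^2}{2}\xi^2$, exactly the lower bound. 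Thus the infimum in \cref{eq:DefFLL} is attained by this state and equals the asserted value. For the irregular endpoints $\sigma=\pm1$ the construction degenerates to $c_\mp=0$, consistent with \cref{corollary:OptimizerIrreg} at $g=0$; for regular $\sigma\in(-1,1)$ the Gaussian trial state is real and strictly positive, so it is \emph{the} optimizer by \cref{thrm:OptimizerGS}.

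I do not anticipate a genuine obstacle. The only point of substance is that the lower bound is not merely a bound but is \emph{simultaneously} saturated in both inequalities — the oscillator ground state and Cauchy--Schwarz — and that a single $\psi$, namely the product of the QHO ground-state Gaussian with a constant spinor, realizes equality in both at once; the explicit computation in the previous step certifies this, so no separate saturation analysis is needed. The one bit of bookkeeping is to factor out the $\xi$-dependence cleanly via the displacement rule before performing the Gaussian integrals.
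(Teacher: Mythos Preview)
Your proposal is correct and follows essentially the same sandwich argument as the paper: bound $\ev{\hat H_0}{\psi}$ from below via the QHO ground-state energy plus Cauchy--Schwarz on the hopping term, then verify that the Gaussian trial state of \cref{eq:TrialState} achieves this bound. The only cosmetic difference is that you invoke the displacement rule first to reduce to $\xi=0$ whereas the paper derives the bound at $\xi=0$ and then extends it by the displacement rule before testing the trial state at general $\xi$; your added remark on uniqueness via \cref{thrm:OptimizerGS} is also present in the paper, just placed immediately before the theorem statement rather than inside the proof.
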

This result will be of importance when we turn to the discussion of the adiabatic connection of the Levy--Lieb functional in \cref{sec:AC}.

Furthermore, since the value of the Levy--Lieb functional is explicitly known in the zero coupling regime, it allows us to calculate the ground-state energy directly using  \cref{eq:LLEnergy}. By virtue of the unique $v$-representability, \crefpart{corollary:OptimizerGS}{item:vRep}, the infimum in \cref{eq:LLEnergy} is indeed a minimum for all regular densities such that the energy can be readily calculated as direct minimization over $(\sigma,\xi) \in (-1,1) \times \RR $. However, as noted in \crefpart{corollary:OptimizerGS}{item:deriv}, the functional is differentiable for regular polarizations. Therefore, we may for each $(\sigma,\xi) \in (-1,1)\times \RR $ calculate the corresponding external pair directly as the derivative of $F_\mathrm{LL}^0 (\sigma,\xi)$. In particular, 
\begin{align*}
    -\pdv{\sigma} F_\mathrm{LL}^0 (\sigma,\xi) &= -t\frac{\sigma}{\sqrt{1-\sigma^2}} = v(\sigma), \\
    -\pdv{\xi} F_\mathrm{LL}^0 (\sigma,\xi) &= -\omega^2 \xi = j(\xi) .       
\end{align*}    
These expressions define the explicit one-to-one Hohenberg--Kohn mapping in this simplified setting. Inverting these expressions, the optimizers of \cref{eq:LLEnergy} are 
\begin{equation*}
    \sigma(v) = - \frac{v}{\sqrt{v^2 + t^2}} \quad \text{and} \quad \xi(j) = - \frac{j}{\omega^2}.
\end{equation*}
Here, the expression for the displacement is the virial relation \cref{eq:sigma-xi-j-relation} again, in the special case of $g=0$. By insertion into \cref{eq:LLEnergy} we have
\begin{align*}
    E^0(v,j)  &= F_\mathrm{LL}^0 (\sigma(v),\xi(j)) + v\sigma(v) + j \xi(j) \\
            &= \frac{\omega}{2} - \sqrt{v^2 + t^2} - \frac{j^2}{2\omega^2}.           
\end{align*}
Note that for the zero-coupling conjugate pair $(F_\mathrm{LL}^0,E^0)$, the Fenchel--Young inequality $E^0 - F_\mathrm{LL}^0\leq v \sigma  + j \xi $ is then fully saturated at the optimizers, as required. This is also directly seen to be true for the extended version of $F_\mathrm{LL}^0$ defined on the whole $\mathbb R^2$ by setting the functional value equal to $+\infty$ whenever $|\sigma|>1$ (i.e., when a density pair is non $N$-representable) or using the Lieb recipe $\sup_{(v,j)\in \mathbb R^2} (E^0(v,j) - v \sigma  - j \xi )$ for every $(\sigma,\xi)\in \mathbb R^2$. We can make one more comment within the framework of convex analysis. For a conjugate pair like $(F_\mathrm{LL}^0,E^0)$, we know that the optimality condition can be expressed as\footnote{Note that this relation is true in general for a conjugate pair. However, for simplicity, in this work we only state this for $\lambda=0$ where all expressions are explicitly known.}
\begin{equation*}
    -(v,j) \in \underline{\partial} F_\mathrm{LL}^0(\sigma,\xi) \,\, \iff \,\, 
    (\sigma,\xi) \in \overline{\partial} E^0(v,j).
\end{equation*}
The same information is encoded in $F_\mathrm{LL}^0$ and $E^0$. In this simplified setting we can interpret the calculations above that took us from $F_\mathrm{LL}^0$ to $E^0$: by first computing the differentials $-\underline{\partial} F_\mathrm{LL}^0$ we can invert these expressions, or said differently, reflect the expressions along $\sigma  = v$ and $\xi  = j$. This mirroring operations give the elements of $\overline{\partial} E^0$ geometrically. Integrating these differentials takes us to the energy expression $E^0$ (up to a constant). In the case of the extended universal functional, that assumes the value $+\infty$, the vertical asymptotes of the differentials are reflected to horizontal ones. The process is illustrated in \cref{fig:conjugate-pair-geometry}. In summary, the zero coupling case, that will paradigmatically serve as a Kohn--Sham system, thus allows for an explicit form for the mapping to external pairs, the ground-state energy, and the universal functional.
\begin{figure}
    \centering
    \includegraphics{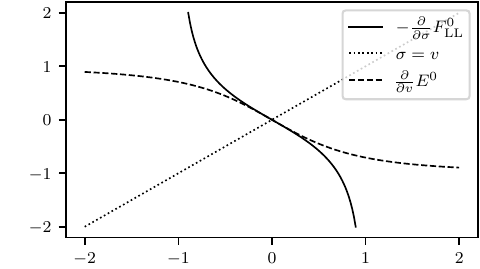}
    \caption{At $\lambda=0$ we can explicitly illustrate how $F_\mathrm{LL}^0$ and $E^0$ encode the same information, in a plot inspired by the work of \citet{Helgaker2022bookchap}. Reflection of the differential $-\pdv{\sigma} F_\mathrm{LL}^0$ along $\sigma =v$ allows us to obtain $\pdv{v} E^0$.}
    \label{fig:conjugate-pair-geometry}
\end{figure}

\section{The Adiabatic Connection}\label{sec:AC}
\subsection{Integral Representation of the Universal Functional}
For the study of the adiabatic In within this DFT formulation of the quantum Rabi model, let us introduce $\lambda\in \RR $ as a scaling for the coupling constant in a similar fashion as in standard DFT, discussed in \cref{sec:DFTBackground}. In this section, let us indicate the dependence on the $\lambda$ by a superscript,  which for the internal Hamiltonian entails, 
\begin{equation*}
    \hat{H}_0^\lambda = \frac{1}{2} \hat{p}^2 + \frac{\omega^2}{2} \qop^2 - t \hat{\sigma}_x + \lambda g \hat{\sigma}_z \qop .
\end{equation*}
The Levy--Lieb functional, denoted $F_\mathrm{LL}^{\lambda}(\sigma,\xi)$, is then given by (for a given $\lambda$)
\begin{equation*}
    F_\mathrm{LL}^\lambda(\sigma,\xi)
        = \inf_{\psi \in \mathcal{M}_{\sigma,\xi}} \ev{\hat{H}_0^\lambda}{\psi} 
\end{equation*}
Thus, $\lambda=0$ gives the uncoupled system while $\lambda=1$ corresponds to the previously considered Hamiltonian with light-matter coupling. Moreover, to show that the function $\RR  \ni \lambda \mapsto F^{\lambda}_\mathrm{LL}(\sigma, \xi)$ is concave for every fixed pair $(\sigma,\xi) \in [-1,1]\times \RR $, choose some $\lambda_1,\lambda_2 \in \RR $ and $s\in[0,1]$. It then follows that 
\begin{align*}
    &F^{s\lambda_1 + (1-s)\lambda_2}_\mathrm{LL}(\sigma, \xi) = \inf_{\psi\mapsto(\sigma,\xi)} \ev{\hat{H}_0^{s\lambda_1 + (1-s)\lambda_2}}{\psi} \\
    &= \inf_{\psi\mapsto(\sigma,\xi)} \ev{s\hat{H}_0^{\lambda_1} + (1-s)\hat{H}_0^{\lambda_2} }{\psi} \\
    &\geq s\inf_{\psi\mapsto(\sigma,\xi)} \ev{\hat{H}_0^{\lambda_1}}{\psi} + (1-s) \inf_{\psi\mapsto(\sigma,\xi)} \ev{\hat{H}_0^{\lambda_2} }{\psi} \\
    &= sF^{\lambda_1}_\mathrm{LL}(\sigma, \xi) + (1-s)F^{\lambda_2}_\mathrm{LL}(\sigma, \xi).
\end{align*}

To obtain the adiabatic connection, we will first calculate the superdifferential of the function $\lambda \mapsto F_\mathrm{LL}^{\lambda}(\sigma,\xi)$ that we defined in \cref{eq:F-superdiff}. Let $\psi^\lambda \in \mathcal{M}_{\sigma,\xi}$ be the optimizer of $F_\mathrm{LL}^\lambda(\sigma,\xi)$ and $\psi^{\lambda'} \in \mathcal{M}_{\sigma,\xi}$ be the optimizer of $F_\mathrm{LL}^{\lambda'}(\sigma,\xi)$. It then follows from the variational principle that 
\begin{align*}
    F_\mathrm{LL}^{\lambda'}(\sigma,\xi) &= \ev*{\hat{H}^{\lambda'}_0}{\psi^{\lambda'}} \leq \ev*{\hat{H}^{\lambda'}_0}{\psi^{\lambda}} \\
    &= \ev*{\hat{H}^{\lambda}_0}{\psi^{\lambda}} + (\lambda'-\lambda) \ev*{g\hat{\sigma}_z\qop}{\psi^{\lambda}} \\
    &= F_\mathrm{LL}^{\lambda}(\sigma,\xi) + (\lambda'-\lambda) g\ev*{\hat{\sigma}_z\qop}{\psi^{\lambda}}.
\end{align*}
We thus have the following result that echoes the standard DFT result from \cref{eq:F-superdiff-trace-W}.
\begin{lemma}\label{lemma:SubdiffElement}
    For every fixed pair $(\sigma,\xi)\in[-1,1]\times \RR $ and $\lambda \in \RR $
    \begin{equation*}
        g \ev{\hat{\sigma}_z \qop}{\psi^\lambda} \in  \overline{\partial}_\lambda F_{\mathrm{LL}}^\lambda(\sigma, \xi),
    \end{equation*}
    where $\psi^\lambda \in \mathcal{M}_{\sigma,\xi}$ is the unique real-valued and strictly positive optimizer of $F_{\mathrm{LL}}^\lambda(\sigma, \xi)$.
\end{lemma}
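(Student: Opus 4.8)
The plan is to obtain the stated inclusion directly from the variational definition of $F_{\mathrm{LL}}^\lambda$, without invoking any differentiability: the number $g\ev{\hat\sigma_z\hat x}{\psi^\lambda}$ will turn out to be exactly the slope of a global affine upper bound (in the variable $\lambda'$) that touches the concave curve $\lambda'\mapsto F_{\mathrm{LL}}^{\lambda'}(\sigma,\xi)$ at $\lambda'=\lambda$, which is precisely the defining property of the superdifferential in \cref{eq:F-superdiff}. First I would secure the existence and uniqueness of the real-valued, strictly positive optimizer $\psi^\lambda$: since $\hat H_0^\lambda$ in \cref{eq:H0lambda} is again a quantum Rabi internal Hamiltonian, merely with coupling parameter $\lambda g\in\RR$ in place of $g$, \cref{thrm:ExistenceOptimizer} gives existence of an optimizer and \cref{thrm:OptimizerGS} gives uniqueness, real-valuedness and strict positivity for regular $\sigma\in(-1,1)$; the two endpoint cases $\sigma=\pm1$ are covered by \cref{corollary:OptimizerIrreg}, where the optimizer is the displaced harmonic-oscillator ground state, positive in its single nonvanishing component.

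The key step is the trial-state estimate. Fix $\lambda$ and let $\lambda'\in\RR$ be arbitrary. Since $\psi^\lambda\in\mathcal M_{\sigma,\xi}$, it is an admissible trial state in \cref{eq:DefFLLlambda} at coupling $\lambda'$, so $F_{\mathrm{LL}}^{\lambda'}(\sigma,\xi)\le\ev*{\hat H_0^{\lambda'}}{\psi^\lambda}$. Using the manifest linearity $\hat H_0^{\lambda'}=\hat H_0^{\lambda}+(\lambda'-\lambda)\,g\hat\sigma_z\hat x$ together with the fact that $\psi^\lambda$ optimizes at coupling $\lambda$,
\begin{equation*}
    \ev*{\hat H_0^{\lambda'}}{\psi^\lambda} = F_{\mathrm{LL}}^{\lambda}(\sigma,\xi) + (\lambda'-\lambda)\,g\ev{\hat\sigma_z\hat x}{\psi^\lambda}.
\end{equation*}
Combining the two displays gives $F_{\mathrm{LL}}^{\lambda'}(\sigma,\xi)\le F_{\mathrm{LL}}^{\lambda}(\sigma,\xi)+g\ev{\hat\sigma_z\hat x}{\psi^\lambda}\,(\lambda'-\lambda)$ for every $\lambda'\in\RR$, which is exactly the condition in \cref{eq:F-superdiff} with $f=g\ev{\hat\sigma_z\hat x}{\psi^\lambda}$; hence this number belongs to $\overline\partial_\lambda F_{\mathrm{LL}}^\lambda(\sigma,\xi)$.

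There is essentially no hard obstacle here — this is the quantum-Rabi analogue of \cref{eq:F-superdiff-trace-W}, and the computation is already carried out in the paragraph preceding the lemma. The only points needing a word of care are: (i) the expectation $\ev{\hat\sigma_z\hat x}{\psi^\lambda}$ is finite and well defined, which holds because $\psi^\lambda\in Q_0$ gives $\norm{\hat x\psi^\lambda}<\infty$ while $\hat\sigma_z$ is bounded; and (ii) the estimate is symmetric in the sign of $\lambda'-\lambda$, so it controls left and right increments simultaneously, consistently with the concavity of $\lambda\mapsto F_{\mathrm{LL}}^\lambda(\sigma,\xi)$ established just above (which already guarantees $\overline\partial_\lambda F_{\mathrm{LL}}^\lambda\neq\emptyset$). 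I would also note that one could instead proceed via a Hellmann--Feynman/envelope argument, but the trial-state bound is preferable since it requires neither differentiability of $\lambda\mapsto F_{\mathrm{LL}}^\lambda$ at the point in question nor uniqueness of the optimizer in order to produce a valid superdifferential element.
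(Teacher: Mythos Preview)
Your proposal is correct and follows essentially the same argument as the paper: use the optimizer $\psi^\lambda$ as a trial state for $F_{\mathrm{LL}}^{\lambda'}$ and exploit the affine dependence of $\hat H_0^{\lambda'}$ on $\lambda'$ to obtain the defining inequality of the superdifferential. Your additional remarks on existence/uniqueness of $\psi^\lambda$ and finiteness of $\ev{\hat\sigma_z\hat x}{\psi^\lambda}$ are welcome bookkeeping but do not change the approach.
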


We have been careful here not to assume that $\lambda\mapsto F_{\mathrm{LL}}^\lambda(\sigma, \xi)$ is differentiable, but actually the non-degeneracy of ground states and the differentiability of $F_{\mathrm{LL}}^\lambda(\sigma, \xi)$ with respect to $\sigma$ and $\xi$ from \crefpart{corollary:OptimizerGS}{item:deriv} is a strong indication that the functional is also differentiable with respect to $\lambda$.

Before applying the Newton--Leibniz formula, \cref{eq:NewtonLeibniz},  to obtain an integral representation of $F_\mathrm{LL}^\lambda$, let us first employ the displacement rule, \crefpart{thrm:fllproperties}{item:flldisplacement}, 
\begin{equation*}
    F_\mathrm{LL}^\lambda(\sigma,\xi) = F_\mathrm{LL}^\lambda(\sigma,0)  + \lambda g \sigma\xi + \frac{\omega^2}{2}\xi^2.
\end{equation*}
The Newton--Leibniz formula applied to $F_\mathrm{LL}^\lambda(\sigma,0)$ using \cref{lemma:SubdiffElement} as the choice for the element of the superdifferential then yields
\begin{equation}\label{eq:AC1}
    \begin{aligned}
        F_\mathrm{LL}^\lambda(\sigma,\xi) =& \, F_\mathrm{LL}^0(\sigma,0) + \lambda g \sigma\xi + \frac{\omega^2}{2}\xi^2  \\
        &+  g\int_0^\lambda \ev{\hat{\sigma}_z \qop}{\varphi^\nu} \dd{\nu},
    \end{aligned}
\end{equation}
where $\varphi^\nu \in \mathcal{M}_{\sigma,0}$ denotes the optimizer of $F_\mathrm{LL}^\nu(\sigma,0)$. Importantly, as the $\varphi^\nu$ do not depend on $\xi$, the above integrand is also independent of $\xi$. Moreover, recall from  \crefpart{thrm:fllproperties}{item:fllkinhopeq} that the integrand in \cref{eq:AC1} can be rewritten as 
\begin{align*}
    \ev{\hat{\sigma}_z\qop}{\varphi^\nu} &= 2 \int \qval\abs{\varphi_+^\nu}^2\dd{\qval} \\
    &= - \frac{4 t}{\omega^2} \int {\varphi^\nu_+}'\varphi_-^\nu \dd{\qval} - \frac{\nu g(1-\sigma^2)}{\omega^2} .
\end{align*}
Using this form, along with the explicit expression $F^0_\mathrm{LL}(\sigma,0)= \frac{\omega}{2} - t\sqrt{1- \sigma^2}$ from \cref{thrm:FLL0}, we obtain the following result.

\begin{theorem}\label{thrm:AC}
    For every $\lambda\in \RR _+$ the Levy--Lieb functional along the adiabatic connection, $F_\mathrm{LL}^\lambda : [-1,1]\times \RR  \to \RR $, is given by 
    \begin{equation*}
        \begin{aligned}
             F_\mathrm{LL}^\lambda(\sigma,\xi) =&\; \frac{\omega}{2} - t\sqrt{1- \sigma^2} 
              + \frac{\omega^2}{2}\xi^2 + \lambda g \sigma\xi \\
              &-\frac{ \lambda^2  g^2 }{2\omega^2} (1-\sigma^2)
              + I^\lambda(\sigma),
              %+ T_\mathrm{c}^\lambda(\sigma).
        \end{aligned}
    \end{equation*}
    where the only non-explicit contribution is
    \begin{equation*}
        I^\lambda(\sigma) := -\frac{4t g}{\omega^2} \int_0^\lambda \int  {\varphi^\nu_+}'\varphi_-^\nu \dd{\qval}\dd{\nu}.
    \end{equation*}
\end{theorem}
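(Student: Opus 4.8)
The plan is to read off the claimed formula by stitching together four ingredients already in place: the displacement rule \crefpart{thrm:fllproperties}{item:flldisplacement}, the Newton--Leibniz formula \cref{eq:NewtonLeibniz}, the explicit zero-coupling value of \cref{thrm:FLL0}, and the kinetic-hopping identity \crefpart{thrm:fllproperties}{item:fllkinhopeq}. First I would use the displacement rule to peel off all $\xi$-dependence, writing $F_\mathrm{LL}^\lambda(\sigma,\xi) = F_\mathrm{LL}^\lambda(\sigma,0) + \lambda g\sigma\xi + \tfrac{\omega^2}{2}\xi^2$, so that the whole task reduces to determining the one-variable concave function $\lambda \mapsto F_\mathrm{LL}^\lambda(\sigma,0)$.

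Next I would apply \cref{eq:NewtonLeibniz} to this function, anchored at the base value $F_\mathrm{LL}^0(\sigma,0) = \tfrac{\omega}{2} - t\sqrt{1-\sigma^2}$ from \cref{thrm:FLL0}, taking as the integrand the superdifferential selection $g\ev{\hat\sigma_z\hat x}{\varphi^\nu}$ supplied by \cref{lemma:SubdiffElement}, where $\varphi^\nu \in \mathcal{M}_{\sigma,0}$ is the optimizer of $F_\mathrm{LL}^\nu(\sigma,0)$ (unique for regular $\sigma$ by \cref{thrm:OptimizerGS}, and the Gaussian of \cref{corollary:OptimizerIrreg} at $\sigma = \pm 1$). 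This is exactly \cref{eq:AC1}. Into that integrand I would substitute the $\xi = 0$ case of \crefpart{thrm:fllproperties}{item:fllkinhopeq}, namely $\ev{\hat\sigma_z\hat x}{\varphi^\nu} = -\tfrac{4t}{\omega^2}\int {\varphi^\nu_+}'\varphi_-^\nu\,\dd x - \tfrac{\nu g(1-\sigma^2)}{\omega^2}$, after which the $\nu$-integral splits into the elementary piece $-\tfrac{g^2(1-\sigma^2)}{\omega^2}\int_0^\lambda \nu\,\dd\nu = -\tfrac{\lambda^2 g^2}{2\omega^2}(1-\sigma^2)$ and the remaining non-explicit double integral, which is precisely $I^\lambda(\sigma)$. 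Collecting these with the displacement terms yields the stated expression.

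The step that genuinely needs care, and the one I expect to be the main obstacle, is justifying the Newton--Leibniz representation for a function that is merely concave (hence differentiable off an at most countable set) with this particular choice of integrand. The point to argue is that $\nu \mapsto g\ev{\hat\sigma_z\hat x}{\varphi^\nu}$ is a bounded measurable selection of $\overline{\partial}_\nu F_\mathrm{LL}^\nu(\sigma,0)$ on each compact $\nu$-interval: by \cref{eq:F-superdiff} this superdifferential is squeezed between the left and right derivatives of $\lambda \mapsto F_\mathrm{LL}^\lambda(\sigma,0)$, both monotone and hence locally bounded and measurable, so any selection inherits these properties; and a finite concave function on $\RR$ is locally Lipschitz, hence absolutely continuous, so integrating its derivative---which agrees with any superdifferential selection except at countably many points---recovers the function. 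The only remaining box to tick is the irregular case $\sigma = \pm 1$, where $\varphi_\mp^\nu \equiv 0$ forces $I^\lambda(\pm 1) = 0$ and the formula collapses to the explicit value already recorded in \cref{corollary:OptimizerIrreg} with $g$ replaced by $\lambda g$, so nothing new is needed there.
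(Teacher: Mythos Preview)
Your proposal is correct and follows essentially the same route as the paper: displacement rule to reduce to $\xi=0$, Newton--Leibniz via \cref{lemma:SubdiffElement} yielding \cref{eq:AC1}, then the identity \crefpart{thrm:fllproperties}{item:fllkinhopeq} at $\xi=0$ together with the explicit zero-coupling value from \cref{thrm:FLL0}. Your additional care in justifying the Newton--Leibniz step (local Lipschitz continuity of concave functions and measurability of the superdifferential selection) and your explicit check of the irregular endpoints $\sigma=\pm 1$ go slightly beyond what the paper spells out, but the argument is the same.
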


\begin{figure}[ht]
  \centering
  \includegraphics{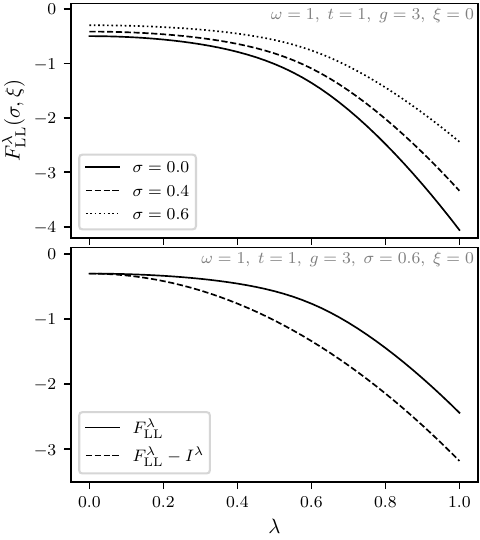}
  \caption{The Levy--Lieb functional as a function of \(\lambda\) for \(\omega = 1\), \(t = 1\), \(g = 3\), and \(\xi = 0\). It is obtained numerically as the convex conjugate of the energy functional. The upper panel shows the value of the functional in $\lambda$ for three values of \(\sigma\) and $\xi=0$. The lower panel shows the value of the functional in $\lambda$ for \(\sigma = 0.6\) and $\xi=0$ (solid line) alongside the explicit terms in the expression for \(F_\mathrm{LL}\) in \cref{thrm:AC} (dashed line), thus depicting the dependence of \(F_\mathrm{LL}\) on the non-explicit correlation energy \(I^\lambda\).}
  \label{fig:the-analytic-components-of-adiabatic-connection}
\end{figure}

In order to visualize the adiabatic connection, we computed the Levy--Lieb functional in $\lambda$ for different values of $\sigma$, see \cref{fig:the-analytic-components-of-adiabatic-connection} (top panel).  Moreover, we compared the full Levy--Lieb functional $F_\mathrm{LL}^\lambda(\sigma,\xi)$ to the explicitly known terms by computing $F_\mathrm{LL}^\lambda(\sigma, \xi) - I^\lambda(\sigma)$, see \cref{fig:the-analytic-components-of-adiabatic-connection} (bottom panel). This shows, as expected, that the Levy--Lieb functional is concave and decreases with $\lambda$. Additionally, as \cref{fig:scaling-of-kinetic-correlation-function-in-lambda} shows more clearly, the non-explicit term $I^\lambda(\sigma)$ is in a positive contribution to the total Levy--Lieb functional that is growing in $\lambda$.

\begin{figure}[ht]
  \centering
  \includegraphics{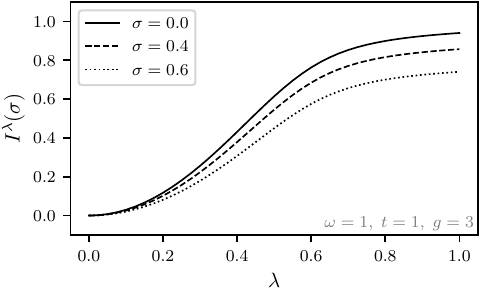}
  \caption{The non-explicit contribution \(I^\lambda(\sigma)\) to the adiabatic connection as a function of \(\lambda\) for three values of \(\sigma\) using \(\omega = 1\), \(t = 1\), and \(g = 3\).}
  \label{fig:scaling-of-kinetic-correlation-function-in-lambda}
\end{figure}

The careful Reader might wonder why we first employed the displacement rule for $F_\mathrm{LL}^\lambda$ and then used the adiabatic connection that connects the zero coupling to any value of $\lambda$ for $F_\mathrm{LL}^\lambda(\sigma,0)$. This is just to obtain the simplest possible expression, since following this route the optimizers are determined at $\xi=0$. Using instead the integral representation of $F_\mathrm{LL}^\lambda(\sigma,\xi)$ (i.e., never invoking the displacement rule), one instead obtains
\begin{equation}\label{eq:ACpsilambda}
    \begin{aligned}
        F_\mathrm{LL}^\lambda(\sigma,\xi) &= F_\mathrm{LL}^0(\sigma,\xi) + \lambda g \sigma\xi \\
        &+ g \int_0^\lambda ( \ev{\hat{\sigma}_z \qop}{\psi^\nu} - \sigma\xi )\dd{\nu} ,
    \end{aligned}
\end{equation}
where $\psi^\nu \in \mathcal{M}_{\sigma,\xi}$ denotes the optimizer of $F_\mathrm{LL}^\nu(\sigma,\xi)$. 
Consequently, we have
\begin{equation}\label{eq:DisplacedCoupling}
    \ev{\hat{\sigma}_z \qop}{\psi^\lambda}=  \sigma\xi + \ev{\hat{\sigma}_z \qop}{\varphi^\lambda}, 
\end{equation}
relating the optimizers at an arbitrary $\xi$ to those at $\xi=0$. This relation is also directly seen from the fact noted in \cref{sec:Disp-rule} that if $\psi$ is the optimizer of $ F_\mathrm{LL}(\sigma,\xi)$ then $\hat{\mathcal{D}}_\zeta \psi$ is the optimizer of $F_\mathrm{LL}(\sigma, \xi+\zeta)$.

Within this basic QEDFT formulation, we thus are able to formulate an almost explicit form (i.e., in terms of the ``density'' variables only) of the adiabatic connection, where the only non-explicit term is  $I^\lambda(\sigma)$, which further depends on the optimizers $\varphi^\nu$ for $\nu \in [0,\lambda]$. This is in contrast to the standard DFT setting in which the adiabatic connection remains entirely non-explicit, see~\cref{eq:ACDFT}.

\subsection{Correlation Contributions}\label{subsec:ExchangeCorrelation}
To continue the study of $F_\mathrm{LL}^\lambda$, we will divide it into different contributions. We begin by noting that the expression given in \cref{eq:ACpsilambda} is in full analogy with the adiabatic connection in standard DFT (see \cref{eq:ACDFT}). Motivated by this structure, let us identify the direct-coupling term as 
\begin{equation}\label{eq:DirectTerm}
       D(\sigma,\xi) := g\sigma\xi.
\end{equation} 
In analogy to standard DFT we identify the exchange-correlation term
\begin{equation}\label{eq:ExchangeCorrelation}
    \lambda G^\lambda(\sigma) := F_\mathrm{LL}^\lambda(\sigma,\xi) - F_\mathrm{LL}^0(\sigma,\xi) - \lambda D(\sigma, \xi).
\end{equation}
Naïvely, this term should also depend on $\xi$, however,  using \cref{eq:ACpsilambda}, we find that for a fixed  $g\in\RR$ and $\lambda >0$, 
\begin{align*}
    \lambda G^\lambda(\sigma) &=  \int_0^\lambda ( g\ev{ \hat{\sigma}_z \qop}{\psi^\nu} - D(\sigma,\xi ))\dd{\nu} \\
    &= g \int_0^\lambda \ev{\hat{\sigma}_z \qop}{\varphi^\nu} \dd{\nu},\numberthis{\label{eq:CorrelationFunctional}}
\end{align*}
where the last equality follows from \cref{eq:DisplacedCoupling}. It is thus clear that the term $G^\lambda$ is independent of $\xi$.
Since this expectation value of the coupling is also the integrand in the adiabatic connection, \cref{eq:AC1}, it is interesting to compare this plot to other DFT settings, where an unproven conjecture says that such adiabatic-connection curves must always be convex, see e.g.\ \citet{crisostomo2023seven} (Section~3). While this conjecture was formulated for usual particle interactions, it clearly does not hold in case of the quantum Rabi model as shown in \cref{fig:expectation-value-of-the-coupling} (top panel).

\begin{figure}[ht]
  \centering
  \includegraphics{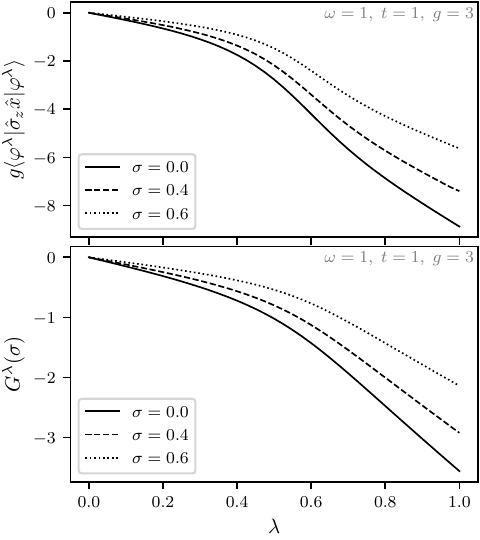}
  \caption{The upper panel shows the expectation value of the coupling, \(\ev*{g \hat \sigma_z \qop}\), for the optimizer \(\varphi^\lambda\) at different coupling strengths \(\lambda\) for three choices of \(\sigma\). The lower panel shows the correlation functional \(G^\lambda(\sigma)\) as a function of \(\lambda\), also for three different values of \(\sigma\). The other parameters are set to \(\omega = 1\), \(t = 1\), and \(g = 3\) for both plots.}
  \label{fig:expectation-value-of-the-coupling}
\end{figure}

However, if we instead write out \cref{eq:ExchangeCorrelation} in terms of the expectation values with respect to $\varphi^\lambda$ and $\varphi^0$ (using first the displacement rule), we obtain an alternative characterization,
\begin{equation*}
    \lambda G^\lambda(\sigma) = P_\mathrm{c}^\lambda(\sigma) + T_\mathrm{c}^\lambda(\sigma) + \lambda W_\mathrm{c}^\lambda(\sigma).
\end{equation*}
Here, we defined the photonic correlation term,
\begin{equation*}
    P_\mathrm{c}^\lambda(\sigma) := \frac{1}{2} \ev*{\hat{p}^2}{\varphi^\lambda} + \frac{\omega^2}{2} \ev*{\qop^2}{\varphi^\lambda} - \frac{\omega}{2},
\end{equation*}
the kinetic correlation term, 
\begin{equation*}
   T_\mathrm{c}^\lambda(\sigma) := (-t)\ev*{\hat{\sigma}_x}{\varphi^\lambda} - (-t)\ev*{\hat{\sigma}_x}{\varphi^0},
\end{equation*}
and the correlation from the coupling,
\begin{equation*}
  \lambda W_\mathrm{c}^\lambda(\sigma) :=  \lambda g \ev*{\hat{\sigma}_z\qop}{\varphi^\lambda}.
\end{equation*}
Then, using \cref{thrm:AC}, or alternatively  \crefpart{thrm:fllproperties}{item:fllkinhopeq} and \cref{thrm:FLL0}, we find that 
\begin{equation*}
    \lambda G^\lambda(\sigma) = I^\lambda(\sigma) - \frac{\lambda^2 g^2}{2\omega^2} (1-\sigma^2) = \int_0^\lambda W_\mathrm{c}^\nu (\sigma) \dd{\nu}.
\end{equation*}
This result corresponds to the perhaps surprising representation of the total exchange-correlation energy in terms of an integral using only the interaction operator in the integrand in standard DFT (while the kinetic correlation energy is still accounted for). 
We can then express the Levy--Lieb functional at $\lambda$ as  
\begin{align*}
    & F_\mathrm{LL}^\lambda(\sigma,\xi) \\
    &= \frac{\omega}{2} - t\sqrt{1-\sigma^2} + \frac{\omega^2 \xi^2}{2} + \lambda g \sigma \xi + \lambda G^\lambda(\sigma) \\
    &= P^0(\xi) + T^0(\sigma)  + \lambda D(\sigma,\xi) + 
        P_\mathrm{c}^\lambda(\sigma) + T_\mathrm{c}^\lambda(\sigma) + \lambda W^\lambda_\mathrm{c}(\sigma).
\end{align*}
Here, $P^0(\xi) = (\omega + \omega^2 \xi^2)/2$ is the zero-coupling photon energy and $T^0(\sigma) = -t\sqrt{1-\sigma^2}$ is the zero-coupling kinetic energy. The total kinetic functional then is 
\begin{equation*}
    T^\lambda(\sigma) := T^0(\sigma) + T_\mathrm{c}^\lambda(\sigma) = -t \ev*{\hat{\sigma}_x}{\varphi^\lambda}.
\end{equation*}
Moreover, recall from  \crefpart{thrm:fllproperties}{item:fllvirialrel}, that
\begin{align*}
    \nu g \ev{\hat{\sigma}_z\qop}{\varphi^\nu} &= 2\nu g \int  \qval\abs{\varphi_+^\nu}^2\dd{\qval} \\
    &= \int \qty(\abs{{\varphi^\nu}'}^2 - \omega^2 \qval^2\abs{\varphi^\nu}^2 )\dd{\qval},
\end{align*}
as $\varphi^\nu$ is the optimizer of $F_\mathrm{LL}^\nu(\sigma,0)$.
By insertion into \cref{eq:CorrelationFunctional}, we thus obtain the alternative characterization 
\begin{equation*}
    \lambda G^\lambda(\sigma) = \int_0^\lambda \frac{1}{\nu} \int\qty(\abs{{\varphi^\nu}'}^2 - \omega^2 \qval^2 \abs{\varphi^\nu}^2 )\dd{\qval} \dd{\nu}.
\end{equation*}

In order to continue the discussion of $G^\lambda(\sigma)$, recall Perdew's definition of exchange energy in \cref{sec:DFTBackground}. Since there is no such thing as a high-density limit in the setting of the quantum Rabi model, we rely on \cref{eq:DefExhange}, where the exchange energy is given as the right derivative of the adiabatic functional at zero coupling minus the direct-coupling term, \cref{eq:DirectTerm},
\begin{align*}
    E_\mathrm{x}(\sigma,\xi) := \lim_{\lambda \to 0^+} \frac{F_\mathrm{LL}^{\lambda}(\sigma,\xi) - F_\mathrm{LL}^{0}(\sigma,\xi)}{\lambda} - D(\sigma,\xi).
\end{align*}
Let the right derivative with respect to $\lambda$ be denoted by $\partial_\lambda^+$. Using \cref{thrm:AC}, we have that
\begin{align*}
    \partial_\lambda^+  F_\mathrm{LL}^{\lambda}(\sigma,\xi) = g \sigma\xi - \frac{\lambda g^2}{\omega^2} (1-\sigma^2) +  \partial_\lambda^+ I^\lambda(\sigma). 
\end{align*}
However, by the definition of $I^\lambda(\sigma)$,
\begin{align*}
    \partial_\lambda^+ I^\lambda(\sigma) =& - \partial_\lambda^+ \qty(\frac{4t g}{\omega^2} \int_0^\lambda \int {\varphi^\nu_+}'\varphi_-^\nu \dd{\qval}\dd{\nu})\\
    =&-\frac{4t g}{\omega^2} \int  {\varphi^\lambda_+}'\varphi_-^\lambda \dd{\qval},
\end{align*}
where we recall that $\varphi^\lambda$ is the optimizer of $F_\mathrm{LL}^\lambda(\sigma,0)$. Then, by setting $\lambda=0$, as required for the definition of the exchange energy, we can readily use the optimizer at zero coupling, \cref{thrm:FLL0}, for $\xi=0$. We thus obtain that 
\begin{align*}
    \left.\partial_\lambda^+ I^\lambda(\sigma) \right|_{\lambda=0} =& -\frac{4t g}{\omega^{3/2}} \frac{1-\sigma^2}{2\sqrt{\pi}} \int u e^{-u^2} \dd{u} =0,
\end{align*}
since the integrand is odd. Consequently, all terms of the functional $G^\lambda(\rho)$ are correlation terms, as summarized in the following theorem.

\begin{theorem}\label{thrm:XC}
    For every density pair $(\sigma, \xi) \in [-1,1]\times \RR $ the \textit{exchange energy} is zero,
    \begin{equation*}
        E_\mathrm{x}(\sigma,\xi) = 0,
    \end{equation*}
    and 
    the \textit{correlation energy} is 
    \begin{align*}
        \lambda G^\lambda(\sigma) 
        =  -\frac{\lambda^2 g^2}{2\omega^2} (1-\sigma^2) +  I^\lambda(\sigma) .
    \end{align*}
\end{theorem}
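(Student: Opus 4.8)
The statement essentially gathers together results already assembled in the preceding paragraphs, so the plan is to reassemble those pieces cleanly and to isolate the one analytic point that genuinely needs justification. I would split the proof into two parts, mirroring the two assertions of the theorem.

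For the correlation-energy formula I would argue purely algebraically. By its definition in \cref{eq:ExchangeCorrelation} together with the direct term \cref{eq:DirectTerm}, $\lambda G^\lambda(\sigma) = F_\mathrm{LL}^\lambda(\sigma,\xi) - F_\mathrm{LL}^0(\sigma,\xi) - \lambda g\sigma\xi$. Substituting the explicit expression for $F_\mathrm{LL}^\lambda$ from \cref{thrm:AC} and that for $F_\mathrm{LL}^0$ from \cref{thrm:FLL0}, the common terms $\frac{\omega}{2} - t\sqrt{1-\sigma^2} + \frac{\omega^2}{2}\xi^2$ cancel, as does $\lambda g\sigma\xi$, leaving exactly $\lambda G^\lambda(\sigma) = -\frac{\lambda^2 g^2}{2\omega^2}(1-\sigma^2) + I^\lambda(\sigma)$. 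Since $G^\lambda$ is already known to be $\xi$-independent from \cref{eq:CorrelationFunctional}, the cancellation of the $\xi$-dependent terms here is a consistency check rather than an extra input.

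For $E_\mathrm{x}(\sigma,\xi) = 0$ I would compute the right derivative $\partial_\lambda^+ F_\mathrm{LL}^\lambda(\sigma,\xi)$ at $\lambda = 0$ directly from \cref{thrm:AC}. The polynomial-in-$\lambda$ part contributes $g\sigma\xi$ at $\lambda = 0$ (the derivative of the $\lambda^2$ term is proportional to $\lambda$ and vanishes there), so everything reduces to evaluating $\partial_\lambda^+ I^\lambda(\sigma)$ at $\lambda=0$. Writing $I^\lambda(\sigma) = -\frac{4tg}{\omega^2}\int_0^\lambda\! \big(\int {\varphi^\nu_+}'\varphi^\nu_-\,\dd{x}\big)\dd{\nu}$ and differentiating in $\lambda$, this equals $-\frac{4tg}{\omega^2}\int {\varphi^0_+}'\varphi^0_-\,\dd{x}$, where $\varphi^0$ is the zero-coupling optimizer of $F_\mathrm{LL}^0(\sigma,0)$. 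By \cref{thrm:FLL0} this is the Gaussian $\varphi^0_\pm(x) = c_\pm(\omega/\pi)^{1/4}e^{-\omega x^2/2}$ with $c_\pm = \sqrt{(1\pm\sigma)/2}$, whence ${\varphi^0_+}'\varphi^0_- \propto x\,e^{-\omega x^2}$ is odd and integrates to zero over $\RR$. Therefore $\partial_\lambda^+ F_\mathrm{LL}^\lambda(\sigma,\xi)$ at $\lambda=0$ equals $g\sigma\xi$, and inserting this into the definition of the exchange energy (the right derivative of $\lambda\mapsto F_\mathrm{LL}^\lambda$ at $0$ minus $D(\sigma,\xi) = g\sigma\xi$, cf.\ \cref{eq:DefExhange}) gives $E_\mathrm{x}(\sigma,\xi) = g\sigma\xi - g\sigma\xi = 0$.

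The only step that is not mere bookkeeping is the differentiation of $I^\lambda$ in $\lambda$: to invoke the fundamental theorem of calculus I need the map $\nu \mapsto \int {\varphi^\nu_+}'\varphi^\nu_-\,\dd{x}$ to be right-continuous at $\nu = 0$, which I would obtain from continuity of the family of optimizers $\varphi^\nu$ near zero — a consequence of the non-degeneracy and regularity of the ground state established in \cref{subsec:GSProperies} and \cref{thrm:OptimizerGS}, together with the explicit identification of $\varphi^0$. With that in place the right derivative exists (and equals the two-sided derivative there), so the argument closes. I expect this continuity/regularity input to be the main, albeit mild, obstacle.
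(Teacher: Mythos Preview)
Your proposal is correct and follows essentially the same route as the paper: substitute \cref{thrm:AC} and \cref{thrm:FLL0} into the definition \cref{eq:ExchangeCorrelation} to obtain the correlation formula, and for $E_\mathrm{x}=0$ differentiate the adiabatic-connection expression at $\lambda=0$, reducing to $\partial_\lambda^+ I^\lambda(\sigma)|_{\lambda=0}$, which vanishes because the Gaussian optimizer $\varphi^0$ from \cref{thrm:FLL0} makes the integrand ${\varphi^0_+}'\varphi^0_-$ odd. Your explicit flag about right-continuity of $\nu\mapsto\int{\varphi^\nu_+}'\varphi^\nu_-\,\dd{x}$ at $\nu=0$ is a point the paper passes over silently, so you are in fact slightly more careful there.
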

The absence of exchange energy is not a surprising feature for the quantum Rabi model, and neither would be for the Dicke model, since it is two different components, light and matter, that are coupled here and ``exchange'' only exists between identical, fermionic particles. It is thus clear that the term $G^\lambda(\sigma)$ is purely a correlation term. This also motivates our notation $\lambda G^\lambda = P_\mathrm{c}^\lambda + T_\mathrm{c}^\lambda  + \lambda W_\mathrm{c}^\lambda$.

\subsection{Bounds on Correlation}\label{subsec:SC}
From \cref{thrm:XC} we note that the only non-explicit term in the correlation energy $\lambda G^\lambda (\sigma)$ is $I^\lambda(\sigma)$.  In order to obtain bounds on the correlation energy, a further analysis of this term is warranted.

To establish a lower bound on the Levy--Lieb functional, suppose that $\psi\in Q_0$ is the ground-state solution of $\hat{H}^\lambda(v,j)$ with displacement $\xi\in \mathbb{R}$  and polarization $\sigma \in (-1,1)$. Then by \cref{eq:sigma-xi-j-relation}, $-j = \omega^2\xi + \lambda g \sigma$. Recall from \cref{sec:QRabiModel}, in particular the step just before \cref{eq:lower-bound-ev-of-H}, that 
\begin{align*}
    &\ev*{\hat{H}^\lambda(v,j)}{\psi} \\
        &\quad \geq \frac{\omega}{2} + \sigma \left( v - \frac{j \lambda g}{\omega^2}\right) - t \sqrt{1 - \sigma^2} - \frac{j^2 + \lambda^2 g^2}{2 \omega^2} \\
        &\quad = \frac{\omega}{2} - t \sqrt{1 - \sigma^2} + \frac{\lambda^2 g^2 (\sigma^2 - 1)}{2 \omega^2} - \frac{\omega^2}{2} \xi^2 + v \sigma \\
        &\quad = \frac{\omega}{2} - t \sqrt{1 - \sigma^2} + \frac{\lambda^2 g^2 (\sigma^2 - 1)}{2 \omega^2} + \frac{\omega^2}{2} \xi^2 \\
            &\quad \quad + \lambda g \sigma \xi + v \sigma + j \xi.
\end{align*}
However, since $\psi\mapsto (\sigma,\xi)$ then 
\begin{equation*}
    \ev*{\hat{H}_0^\lambda}{\psi} = \ev*{\hat{H}^\lambda (v,j)}{\psi} - v\sigma - j \xi
\end{equation*}
and we obtain the following lower bound for the Levy--Lieb functional
\begin{equation*}
    F_\mathrm{LL}^\lambda(\sigma,\xi) \geq \frac{\omega}{2} - t\sqrt{1-\sigma^2} + \frac{\omega^2}{2} \xi^2  + \lambda g \sigma \xi - \frac{\lambda^2 g^2}{2\omega^2}(1-\sigma^2). 
\end{equation*}
By taking again \cref{eq:TrialState} as a trial state for $F_\mathrm{LL}(\sigma,\xi)$, we also obtain the upper bound 
\begin{equation*}
  F_\mathrm{LL}^\lambda(\sigma, \xi)
    \leq \frac{\omega}{2}
      + \frac{\omega^2}{2} \xi^2
      - t \sqrt{1 - \sigma^2}
      + \lambda g \sigma \xi.
\end{equation*}
Combining these bounds,  we immediately have 
\begin{equation}\label{eq:I-estimate}
    0 \leq I^\lambda(\sigma) \leq \frac{\lambda^2 g^2}{2\omega^2}(1-\sigma^2).
\end{equation}
For the correlation energy this estimate takes the form of a Lieb--Oxford bound\footnote{We remind the reader that the Lieb--Oxford bound\cite{LiebOxford} (of standard DFT) states that the indirect Coulomb energy of a normalized wavefunction $\psi$ is bounded below by $-C\int_{\mathbb R^3}\rho_\psi^{4/3}$, $C>0$. The bound presented here is similar in spirit to the one-dimensional version~\cite{LaestadiusFaulstich2020}.}.

\begin{proposition}
    The correlation energy satisfies the Lieb--Oxford-type bound 
    \begin{equation*}
       0 \geq  \lambda G^\lambda (\sigma) \geq - \frac{\lambda^2 g^2}{2\omega^2} (1-\sigma^2).
    \end{equation*}
\end{proposition}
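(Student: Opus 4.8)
The plan is to obtain the proposition as an immediate rearrangement of the two-sided estimate \cref{eq:I-estimate} on the non-explicit term $I^\lambda(\sigma)$, using the representation of the correlation energy already established in \cref{thrm:XC}. Nothing new has to be proven; the statement is a one-line corollary of material that precedes it.

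First I would invoke \cref{thrm:XC}, which gives the decomposition
\begin{equation*}
    \lambda G^\lambda(\sigma) = -\frac{\lambda^2 g^2}{2\omega^2}(1-\sigma^2) + I^\lambda(\sigma).
\end{equation*}
Adding $\frac{\lambda^2 g^2}{2\omega^2}(1-\sigma^2)$ to each of the three parts of the claimed chain $0 \ge \lambda G^\lambda(\sigma) \ge -\frac{\lambda^2 g^2}{2\omega^2}(1-\sigma^2)$ shows, via this identity, that the proposition is equivalent to $\frac{\lambda^2 g^2}{2\omega^2}(1-\sigma^2) \ge I^\lambda(\sigma) \ge 0$, which is exactly \cref{eq:I-estimate}. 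Concretely: from $I^\lambda(\sigma) \ge 0$ one reads off the lower bound $\lambda G^\lambda(\sigma) \ge -\frac{\lambda^2 g^2}{2\omega^2}(1-\sigma^2)$, and from $I^\lambda(\sigma) \le \frac{\lambda^2 g^2}{2\omega^2}(1-\sigma^2)$ one reads off the upper bound $\lambda G^\lambda(\sigma) \le 0$. That completes the argument.

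The only genuine work sits upstream, in the derivation of \cref{eq:I-estimate} itself, which is obtained by sandwiching $F_\mathrm{LL}^\lambda(\sigma,\xi)$ between a matching lower bound — coming from the ground-state energy estimate \cref{eq:lower-bound-ev-of-H} together with the Maxwell-type relation \cref{eq:sigma-xi-j-relation} that pins $j$ to $\sigma$ and $\xi$ — and an upper bound from evaluating $\ev{\hat H_0^\lambda}{\psi}$ on the Gaussian trial state \cref{eq:TrialState}, then subtracting off the explicit terms in \cref{thrm:AC}. Since those ingredients are already in place, there is no real obstacle here. I would present the proposition simply as the restatement it is, perhaps noting (as in the footnote attached to \cref{eq:I-estimate}) that the factor $(1-\sigma^2)$ plays the role of the density-dependent prefactor familiar from Lieb--Oxford-type estimates, and that the bound is saturated from above at $\sigma = \pm 1$, where $G^\lambda$ vanishes.
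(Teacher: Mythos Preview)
Your proposal is correct and follows exactly the paper's route: the proposition is stated immediately after \cref{eq:I-estimate} as the translation of that two-sided bound on $I^\lambda(\sigma)$ into a bound on $\lambda G^\lambda(\sigma)$ via the decomposition from \cref{thrm:XC}. The only minor inaccuracy is bibliographic --- the Lieb--Oxford footnote is attached to the proposition rather than to \cref{eq:I-estimate} --- but the mathematical content and ordering match the paper precisely.
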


Furthermore, in order to get an alternative upper bound for $F_\mathrm{LL}^\lambda(\sigma,\xi)$, suppose another Gaussian trial state
\begin{equation}\label{eq:LambdaTrialState}
    \psi_\pm^\lambda (\qval) = \sqrt{\frac{1\pm \sigma}{2}} \sqrt[4]{\frac{\omega}{\pi}} e^{-\frac{\omega}{2} \qty(\qval - \xi - \frac{\lambda g}{\omega^2}\qty(\sigma \mp 1) )^2},
\end{equation}
which satisfies the necessary constraints
\begin{align*}
    \norm{\psi_\pm^\lambda }^2 = \frac{1\pm\sigma}{2} \quad \text{and} \quad \ev*{\qop}{\psi^\lambda } = \xi.
\end{align*}
By the displacement rule, \crefpart{thrm:fllproperties}{item:flldisplacement}, we may for the sake of simplicity restrict ourselves to the case of $\xi =0$. Then by a direct calculation, we obtain that
\begin{align*}
    &\ev{\hat{p}^2}_{\psi^\lambda} = \frac{\omega}{2}, \quad
        \ev{\qop^2}_{\psi^\lambda} = \frac{1}{2 \omega} + \frac{\lambda^2 g^2}{\omega^4}(1-\sigma^2), \\
    &\ev{\hat{\sigma}_z\qop}_{\psi^\lambda} = - \frac{\lambda g}{\omega^2} (1-\sigma^2), \quad \\
    &\ev{\hat{\sigma}_x}_{\psi^\lambda} = \sqrt{1-\sigma^2}e^{-\frac{\lambda^2g^2}{\omega^3}}.
\end{align*}
Consequently, we have
\begin{align*}
    F_\mathrm{LL}^\lambda(\sigma,0) \leq \frac{\omega}{2} - \frac{\lambda^2 g^2}{2\omega^2} (1-\sigma^2) - t\sqrt{1-\sigma^2} e^{-\frac{\lambda^2g^2}{\omega^3}}.
\end{align*}
However, this upper bound is very similar to the lower bound given above. In particular, the bounds differ only by the exponential $\exp(-\lambda^2 g^2 / \omega^3)$, a factor that rapidly goes to zero as $\lambda$ increases. Conversely, when $\lambda \ll \omega^{3/2}/g$ the exponential is almost one, implying that the upper bound is almost equal to the lower bound. Thus, for $\lambda \ll \omega^{3/2}/g$ the trial state, \cref{eq:LambdaTrialState}, is almost the optimizer of $F_\mathrm{LL}^\lambda(\sigma,\xi)$

By use of the new upper bound, an alternative estimate of ``kinetic'' type follows,
\begin{equation}\label{eq:I-estimate-kinetic}
    0 \leq I^\lambda(\sigma) \leq t \sqrt{1-\sigma^2} \qty(1-e^{-\frac{\lambda^2g^2}{\omega^3}}).
\end{equation}
We can then clearly see that the trial state \cref{eq:LambdaTrialState} is the exact optimizer in the cases \(\sigma = \pm 1\), \(\lambda g = 0\), and \(t = 0\) as well as \(\omega \to \infty\). The first two of these observations are in accordance with the results in \cref{sec:LL}, that the trial state \cref{eq:TrialState} is exact in two cases, in the zero coupling case, \cref{thrm:FLL0}, and for critical polarizations, \cref{corollary:OptimizerIrreg}.

\subsection{Approximate Correlation}\label{subsec:ACresults}
Having established some analytical bounds on the non-explicit correlation term, let us further investigate the term numerically. In particular, \cref{fig:scaling-of-kinetic-correlation-function-in-sigma} shows the non-explicit correlation term as a function of the polarization $\sigma$ for selected values of $\lambda$. A numerical investigation for a decreasing sequence of $\lambda$ shows that $I^\lambda(\sigma)$ indeed approaches zero from above as $\lambda \to 0$, as required by the estimates \cref{eq:I-estimate,eq:I-estimate-kinetic} and as also suggested by \cref{fig:scaling-of-kinetic-correlation-function-in-lambda}. Moreover, we remark that the shape of $I^\lambda(\sigma)$ plotted in $\sigma$ closely resembles the line segment of an ellipse, see \cref{fig:scaling-of-kinetic-correlation-function-in-sigma}. 

\begin{figure}[ht]
  \centering
  \includegraphics{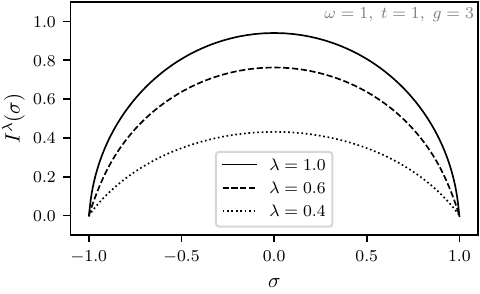}
  \caption{The non-explicit correlation term \(I^\lambda(\sigma)\) as a function of \(\sigma\) for three values of \(\lambda\) using \(\omega = 1\), \(t = 1\), and \(g = 3\).}
  \label{fig:scaling-of-kinetic-correlation-function-in-sigma}
\end{figure}

Motivated by this resemblance we make the following ansatz. Suppose three functions (over the parameters $\lambda$ and $t$) $a$, $b$, and $d$ such that 
\begin{equation*}
    I^\lambda(\sigma) \approx \omega \qty[ b(\lambda,t) \sqrt{1-\frac{\sigma^2}{a(\lambda,t)^2}} - d(\lambda,t)].
\end{equation*}
Then by the constraint $I^\lambda(\pm 1) = 0 $, we have that 
\begin{equation*}
    d(\lambda,t) = b(\lambda, t) \sqrt{1 - \frac{1}{a(\lambda, t)^2}}.   
\end{equation*}
Equipped with this form, let us formulate the following conjecture.

\begin{conjecture}\label{conjecture:Ellipse}
    For $\sigma \in [-1,1]$ the non-explicit correlation functional is of the approximate form
    \begin{equation*}
        I^\lambda(\sigma) \approx \omega \frac{ b(\lambda,t)}{a(\lambda,t)} \qty[ \sqrt{a(\lambda,t)^2 - \sigma^2} -  \sqrt{a(\lambda,t)^2 - 1}].
    \end{equation*}
    Here $a$ and $b$ are functions of the parameters $\lambda$ and $t$.
\end{conjecture}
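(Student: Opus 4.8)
The plan is to treat \cref{conjecture:Ellipse} not as an exact identity but as the assertion that the true $I^\lambda(\sigma)$ lies within a tightly controlled one-parameter family of ellipse arcs, and then to pin down that arc. The first step is to collect the exact constraints $I^\lambda$ is already known to satisfy: it is even in $\sigma$ (inherited from the $\mathcal{PT}$-symmetry of $F_\mathrm{LL}$, \crefpart{thrm:fllproperties}{item:fllsymetric}, via \cref{thrm:AC}), it vanishes at the irregular endpoints $\sigma=\pm1$ (\cref{corollary:OptimizerIrreg}), it vanishes at $\lambda=0$, and it is sandwiched by the two rigorous bounds \cref{eq:I-estimate} and \cref{eq:I-estimate-kinetic}. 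The key observation that makes the ansatz natural is that these two bounds are themselves the two \emph{degenerate} limits of the proposed family: letting $a\to\infty$ with $\omega b/a^{2}$ held fixed collapses the arc to the parabola $\tfrac{\lambda^{2}g^{2}}{2\omega^{2}}(1-\sigma^{2})$, while letting $a\to 1^{+}$ collapses it to the semicircle $\propto\sqrt{1-\sigma^{2}}$. So the conjecture is really the statement that $I^\lambda$ is interpolated by a genuine (non-degenerate) member of this family, with $a=a(\lambda,t)$ decreasing from $+\infty$ at $\lambda=0$.

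Second, I would fix $a$ and $b$ perturbatively. Expanding the optimizer $\varphi^\nu$ around the zero-coupling Gaussian of \cref{thrm:FLL0} in powers of $\nu g$ — using the Euler--Lagrange system of \cref{thrm:OptimalityRegular}, i.e.\ the coupled ODEs with Lagrange multipliers $(E,v,j)$ — yields the leading behaviour of $\int {\varphi^\nu_+}'\varphi^\nu_-\,\dd x$ in $\nu$, hence of $I^\lambda(\sigma)$ in $\lambda$, as an explicit function of $\sigma$. Matching this against the small-$\lambda$, small-$\sigma$ Taylor expansion of the conjectured arc fixes $a(\lambda,t)$ and $b(\lambda,t)$ to the computed orders; comparison with the $O(\lambda^{2})$ coefficient from \cref{eq:I-estimate} and the exponential in \cref{eq:I-estimate-kinetic} then gives the large-$\lambda$ asymptotics of $a,b$. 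The boundary cases noted after \cref{eq:I-estimate-kinetic} ($t=0$, $\lambda g=0$, $\omega\to\infty$), where the arc and $I^\lambda$ degenerate in the same way, serve as consistency checks.

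Third, to go beyond asymptotics I would try to close a relation in $\sigma$ (or in $\lambda$) for the expectation value $\ev{\hat\sigma_z\hat x}{\varphi^\nu}=2\int x|\varphi^\nu_+|^{2}\,\dd x$, which by \crefpart{thrm:fllproperties}{item:fllkinhopeq} is an affine function of $\int{\varphi^\nu_+}'\varphi^\nu_-\,\dd x$, the integrand of $I^\lambda$. Combining the force-balance relation \cref{eq:force-balance} and the virial identity \cref{eq:virial-sigmaz-p} with a Feynman--Hellmann differentiation of $F_\mathrm{LL}^\nu(\sigma,0)$ in $\sigma$ — using that $-\partial_\sigma F_\mathrm{LL}^\nu$ is the representing potential from \crefpart{corollary:OptimizerGS}{item:deriv} — one may hope to obtain a first-order ODE in $\sigma$ whose solution is exactly the arc. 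If the ODE fails to close exactly, the fallback is to bound its defect and thereby upgrade \cref{conjecture:Ellipse} to a rigorous statement with an explicit error term.

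The main obstacle is that the optimizers $\varphi^\nu$ for regular $\sigma\in(-1,1)$ and $\nu>0$ have no elementary closed form: the ODEs of \cref{thrm:OptimalityRegular} are those of the quantum Rabi model, whose eigenfunctions involve (bi)confluent Heun functions, so an exact proof would need a genuine hidden structure forcing the arc shape rather than a brute-force solution. I expect \cref{conjecture:Ellipse} is \emph{not} literally exact, and that the realistic target is a theorem of the form ``$I^\lambda(\sigma)$ equals the arc up to an error that is $O(\lambda^{3})$ as $\lambda\to0$ and uniformly small in $\sigma$,'' obtained by combining the perturbative expansion with the two degenerate bounds already in hand.
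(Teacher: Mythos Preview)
The paper does not prove \cref{conjecture:Ellipse} at all. It is presented precisely as a conjecture: the authors observe from \cref{fig:scaling-of-kinetic-correlation-function-in-sigma} that the graph of $\sigma\mapsto I^\lambda(\sigma)$ resembles an ellipse arc, write down the ansatz, impose the single exact constraint $I^\lambda(\pm 1)=0$ to eliminate one parameter, and then support the resulting two-parameter form only by numerical least-squares fitting of $a(\lambda,t)$ and $b(\lambda,t)$ on a grid. The subsequent ``properties'' of $a$ and $b$ (the five limits listed after the conjecture) are read off from \cref{fig:ellipse-params-in-lambda,fig:ellipse-params-in-t}, not derived. So there is no argument in the paper for you to compare against.

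Your plan is therefore not a reconstruction of the paper's reasoning but a genuine research programme that goes well beyond it. Several of your structural observations are correct and sharper than anything the paper states: that $I^\lambda$ is even in $\sigma$ follows from \crefpart{thrm:fllproperties}{item:fllsymetric} at $\xi=0$ together with \cref{thrm:AC}; that the parabolic bound \cref{eq:I-estimate} and the square-root bound \cref{eq:I-estimate-kinetic} are the $a\to\infty$ (with $b/a^2$ fixed) and $a\to 1^+$ degenerations of the ellipse family is a nice way to frame the conjecture as an interpolation; and your expectation that the statement is only approximate, with the realistic target being a bound on the defect, matches the paper's own ``$\approx$'' and its remark that the largest fit residual is about $0.05$. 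The perturbative step you outline would indeed fix the small-$\lambda$ behaviour and is the natural next move the paper does not take. The ODE-closure idea in your third step is speculative and you flag it as such; given that the Rabi eigenfunctions are of Heun type, an exact closure is unlikely, so the error-bound fallback is the right thing to aim for.
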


To further investigate \cref{conjecture:Ellipse}, let us numerically calculate $I^\lambda(\sigma)$ for many combinations of the parameters $\lambda$ and $t$, for $\lambda,t\in[0,3]$ at $\omega=g=1$. Then by performing a parameter fitting, we obtain $a$ and $b$ as functions of parameters $\lambda$ and $t$ as shown in \cref{fig:ellipse-params-in-lambda,fig:ellipse-params-in-t}. Importantly, the parameter fitting shows that \cref{conjecture:Ellipse} fits very well with the numerical simulations. In fact, the largest standard deviation in the parameter fitting is only 0.05 (arising in $b$).

\begin{figure}[htb]
    \centering
    \includegraphics{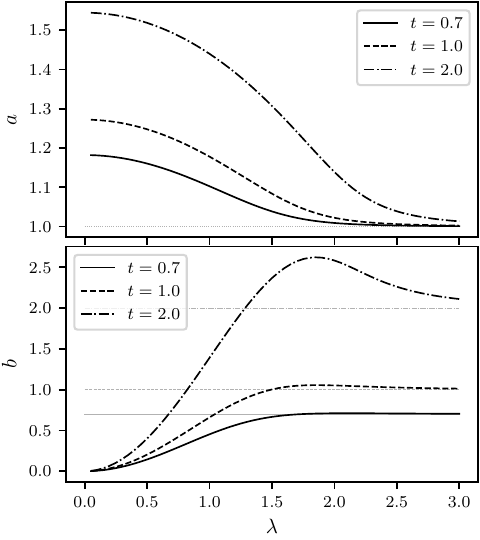}
    \caption{The parameter fitting of the functions $a$ and $b$ in $\lambda$ for three select values of $t$ at  \(\omega = g = 1\).}
    \label{fig:ellipse-params-in-lambda}
\end{figure}

\begin{figure}[htb]
    \centering
    \includegraphics{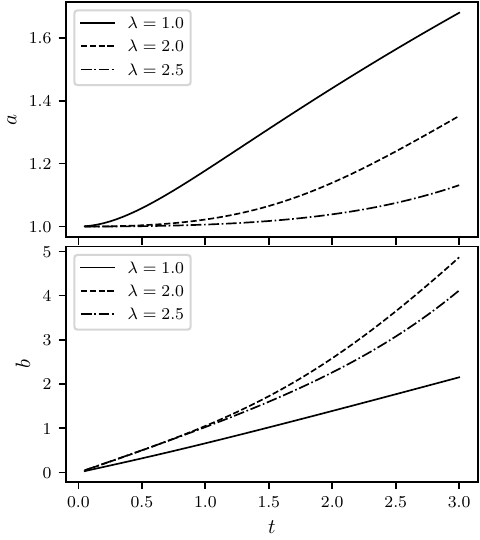}
    \caption{The parameter fitting of the functions $a$ and $b$ in $t$ for three select values of $\lambda$ at  \(\omega = g = 1\).}
    \label{fig:ellipse-params-in-t}
\end{figure}

From \cref{fig:ellipse-params-in-lambda,fig:ellipse-params-in-t}, we learn the following about the functions $a$ and $b$.
\begin{enumerate}
    \item $\lim\limits_{\lambda\to \infty} a(\lambda,t) = 1$.
    \item $\lim\limits_{\lambda\to \infty} b(\lambda,t) = t$
    \item $\lim\limits_{\lambda\to 0} b(\lambda,t) = 0$ (required by \cref{eq:I-estimate,eq:I-estimate-kinetic})
    \item  $\lim\limits_{t\to 0} a(\lambda,t) = 1$ 
    \item  $\lim\limits_{t\to 0} b(\lambda,t) = 0$ 
\end{enumerate}
This shows, in particular, that in the strictly correlated regime ($\lambda=\infty$) the non-explicit part of the correlation functional is
\begin{equation*}
    I^\infty(\sigma) = t\sqrt{1-\sigma^2},
\end{equation*}
i.e., saturates the upper bound of \cref{eq:I-estimate-kinetic}. This implies that near the strictly correlated regime, the Levy--Lieb functional is 
\begin{equation*}
    F_\mathrm{LL}^\mathrm{sc} (\sigma,\xi) \approx \frac{\omega}{2} + \frac{\omega^2}{2}\xi^2 + \lambda g \sigma\xi - \lambda^2 \frac{g^2 }{2\omega^2} (1-\sigma^2).
\end{equation*}
Thus in the strictly correlated regime, there are no kinetic contributions. This is not unexpected, since the coupling term and the displacement operator are unbounded operators whilst the kinetic term is bounded by $|t|$.

\section{Photon-free Approximation}
\label{sec:pf}

\subsection{Effective Potential}
\label{subsec:eff-pot}

In order to derive the \textit{photon-free} approximation, recall the hypervirial relation of \cref{eq:force-balance}. This relation can be established  separately for a non-coupled auxiliary system at $\lambda=0$ ($g=0$) with ground-state wavefunction $\phi$, which was studied in \cref{sec:zero-coupling}, and the fully coupled system at $\lambda=1$ with the ground state $\psi$. In these two cases, \cref{eq:force-balance} gives
\begin{subequations}\label{eq:FB-full-aux}
    \begin{align}
        &t\sigma + g\ev{\hat\sigma_x\qop}{\psi} + v\ev{\hat\sigma_x}{\psi} = 0, \\
        &t\sigma + v_\s\ev{\hat\sigma_x}{\phi} = 0.
    \end{align}
\end{subequations}
As it is usual in DFT, the potential $v_\s$ for the auxiliary system (which, again in analogy to standard DFT would be called Kohn--Sham system) is chosen such that the value of $\sigma$ agrees for both systems. This is surely possible if $\sigma\neq\pm 1$ because of the $v$-representability result from \crefpart{corollary:OptimizerGS}{item:vRep}. In a similar fashion, the values for $\xi$ can be matched between the two systems with a choice of $j_\s$ that follows from the same corollary. However, the choice is also directly visible from the exact hypervirial relation \cref{eq:sigma-xi-j-relation}, which gives
\begin{subequations}
    \begin{align} \label{eq:j-relation}
        &j = -\omega^2\xi - g\sigma \\
        &j_\s = -\omega^2\xi.
    \end{align}
\end{subequations}
Consequently, the $\xi$ from the coupled system can always exactly be reproduced by choosing the above value for $j_s$ in the uncoupled system.
Now, let us in analogy to standard DFT define the direct-coupling and exchange-correlation potential as $\vdxc = v_\s - v$. From subtraction of Eqs.~\eqref{eq:FB-full-aux} we find that
\begin{equation}\label{eq:vdxc}
    \vdxc = \frac{t\sigma+g\ev{\hat\sigma_x\qop}{\psi}}{\ev{\hat\sigma_x}{\psi}} - \frac{t\sigma}{\ev{\hat\sigma_x}{\phi}}.
\end{equation}
This means any approximation of the ground-state $\psi$ of the coupled system will approximate the $\psi$-dependent parts of \cref{eq:vdxc}. This gives a functional approximation to use for the Kohn--Sham system. The arguably simplest approximation is $\psi=\phi$, the mean-field approximation, where $\ev{\hat\sigma_x\qop}{\phi} = \ev{\hat\sigma_x}{\phi}\ev{\qop}{\phi} = \ev{\hat\sigma_x}{\phi}\xi$, since the system is uncoupled and thus matter and photon parts factorize. On the other hand, this approximation misses correlation effects. From \cref{eq:vdxc}, we then have
\begin{equation*}
    \vdx = \vd = g\xi = \pdv{\sigma}D(\sigma,\xi).
\end{equation*}
This is just the direct-coupling part, that originates from the coupling term of the Hamiltonian $g\hat\sigma_z\qop \approx D(\sigma,\xi) = g\sigma\xi = \vd\sigma$. This means the exchange-only part vanishes, $\vx=0$, as was already seen in \cref{thrm:XC}.

We are, however, not limited to this level of approximation and can include some correlation information by using the adiabatic approximation on the level of quantum fluctuations. For a general operator we set $\hat A = \ev*{\hat A} + \Delta\hat A$, where $\Delta\hat A$ describes the (operator-valued) fluctuations around a mean value that are assumed to be ``small'', especially with respect to variation in time (adiabatic approximation). For the displacement operator this means following \cref{eq:dt-x} that $\Delta\qop \approx -\frac{g}{\omega^2} \Delta\hat\sigma_z$ and consequently that
\begin{equation}\label{eq:x-pf}
    \qop \approx \qop_\pf = \xi - \frac{g}{\omega^2}(\hat\sigma_z-\sigma).
\end{equation}
The same relation can be derived directly from \cref{eq:dt-x} by setting $\dtime\hat p=0$ and eliminating $j$ by inserting the exact relation from \cref{eq:j-relation}. With the help of \cref{eq:x-pf} the photon degree-of-freedom represented by $\qop$ can be effectively replaced by matter quantities and we can get an approximate ``photon-free'' formulation of the problem. This program was already put into effect in QEDFT and forms the basis for one of the first functionals for the Pauli--Fierz Hamiltonian in dipole approximation~\cite{schafer2021making}. Inserting this photon-free approximation into \cref{eq:vdxc} and setting $\psi=\phi$ yields the functional
\begin{equation}\label{eq:vdxc-eff}
    \vdc^\pf = g\xi + \frac{g^2}{\omega^2}\sigma - \underbrace{\frac{g^2}{\omega^2}\frac{\ev{\hat\sigma_x\hat\sigma_z}{\phi}}{\ev{\hat\sigma_x}{\phi}}}_{=0}.
\end{equation}
We realize quickly that the last term vanishes, since $\hat\sigma_x\hat\sigma_z=-i\hat\sigma_y$ and $\ev{\hat\sigma_y}=0$ for any eigenstate by \cref{eq:ev-sigmay}. This is beneficial since $i\hat\sigma_y$ is skew-adjoint and would thus have imaginary expectation values. We have thus derived an effective potential for a photon-free (uncoupled) system that aims at reproducing the same polarization $\sigma$ in both systems. A more detailed perturbation-theory analysis including the higher photon states shows that a factor $\eta_\mathrm{c} < 1$ should be introduced to take the matter-photon correlation into account, where $\eta_\mathrm{c}\to 1$ in the strong-coupling regime~\cite{lu2024electron}.
\begin{equation*}
    \vdc^{\pf,\eta_\mathrm{c}} = g\xi + \eta_\mathrm{c}\frac{g^2}{\omega^2}\sigma
\end{equation*}
While we will not enter the theoretical details here, a numerical demonstration shows that this leads to a high level of agreement for $\sigma$ values that are not too close to the boundary and that the linear approximation gets more accurate for strong coupling, see \cref{fig:v_hxc_lambda}. The dependence of the correlation factor $\eta_\mathrm{c}$ on $g$ is further illustrated in \cref{fig:eta_vs_lambda}.

\begin{figure}[ht]
    \centering
    \includegraphics{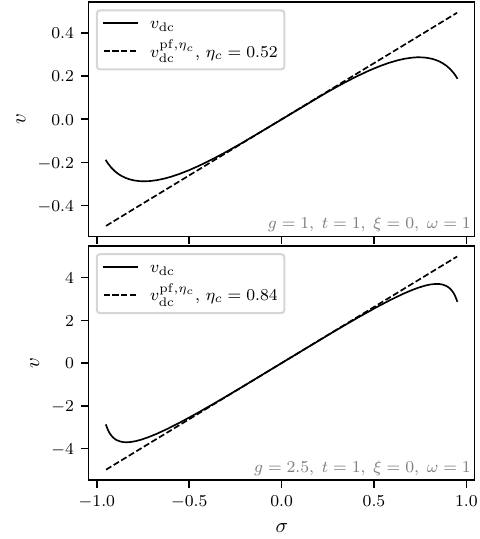}
  \caption{
  Comparison of the exact direct-coupling and correlation potential \( v_{\mathrm{dc}} \) and the photon-free approximation with correlation factor \( \eta_\mathrm{c} \), \( v_{\mathrm{dc}}^{\mathrm{pf}, \eta_\mathrm{c}} \), plotted as functions of the matter density \( \sigma \) for coupling constants \(g = 1 \) and \(g = 2.5 \). The approximation includes the correlation factor \( \eta_\mathrm{c} \) computed for each \(g \) from the derivative at $\sigma=0$. Parameters are set to \( \omega=1, t = 1, \xi = 0 \). The plots demonstrate the effectiveness of the photon-free approximation in reproducing the exact potential, especially not too close to the boundary and at higher coupling strengths.}
  \label{fig:v_hxc_lambda}
\end{figure}

\begin{figure}[ht]
  \centering
  \includegraphics{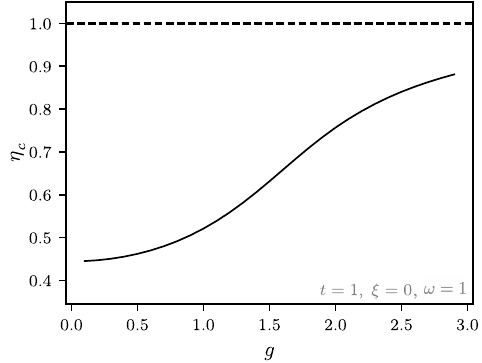}
  \caption{Dependence of the correlation factor \( \eta_\mathrm{c} \) on the coupling strength \(g \) for the quantum Rabi model. The plot illustrates how \( \eta_\mathrm{c} \) approaches 1 as \(g \) increases, indicating that the photon-free approximation with correlation factor \( \eta_c \) becomes exact in the strong-coupling limit for $\sigma$ not too close to the boundary. The calculations are performed with parameters \( \omega = 1, t = 1, \xi = 0 \). Different numerical cutoffs for maximal photon number are used for various ranges of \(g \) to ensure numerical accuracy.}
  \label{fig:eta_vs_lambda}
\end{figure}

We can check that this result fully matches the expression for the Levy--Lieb functional along the adiabatic connection from \cref{thrm:AC}. First take the difference between full and zero coupling to get the direct-coupling and correlation energy,
\begin{align*}
    E_\mathrm{dc}(\sigma,\xi) &= F_\mathrm{LL}^1(\sigma,\xi)-F_\mathrm{LL}^0(\sigma,\xi) \\
    &= g\sigma\xi -\frac{g^2}{2\omega^2}(1-\sigma^2) + I^1(\sigma).
\end{align*}
Since the potential corresponds to the negative differential of the respective functional, \cref{eq:F-subdiff}, we can directly use differentiation to get the external pair that encodes direct coupling and correlation for $\sigma\in(-1,1)$,
\begin{align*}
    &\vdc = v_\s - v = \frac{\partial E_\mathrm{dc}}{\partial\sigma} = g\xi + \frac{g^2}{\omega^2}\sigma + \frac{\partial I^1}{\partial\sigma},\\
    &j_\mathrm{d} = j_\s - j = \frac{\partial E_\mathrm{dc}}{\partial\xi} = g\sigma.
\end{align*}
This fits exactly to the previous results.
Note that \citet{Novokreschenov2023} recently gave an approximation for the effective potential based on diagrammatic expansion for the quantum Rabi model and the Dicke model.

The effective potential derived in this section aims at reproducing the value of the polarization $\sigma$ from the fully coupled system in a auxiliary system without light part. We can further ask how well (or if at all) other quantities can be reproduced as well, mainly the ground-state energy or the expectation value of parts of the Hamiltonian. This will be investigated in the next section by defining a whole photon-free Hamiltonian instead of just an effective potential.

\subsection{Photon-free Hamiltonian}
The previous, adiabatic approximation for operators allowed us to replace the photonic degree-of-freedom $\qop$ by pure matter quantities in \cref{eq:x-pf} and derive a correlation approximation. But we can go one step further and not only substitute this potential into the uncoupled Hamiltonian, but also provide a photon-free expression for the field energy. For this, remember that the ladder operator is
\begin{equation*}
    \hat a = \sqrt{\frac{\omega}{2}} \left( \qop + \frac{i}{\omega}\hat p \right).
\end{equation*}
We take the photon-free approximation for $\qop$ from \cref{eq:x-pf}. For the momentum operator we simply choose $\hat p_\pf = 0$ since we have $\hat p = \dtime\qop \approx 0$ from \cref{eq:dt-x} and the adiabatic approximation. This means we get $\hat a_\pf = \hat a_\pf^\dagger = \sqrt{\frac{\omega}{2}}\qop_\pf$, which is now a self-adjoint operator (since there are no more photons to create or annihilate). The photon energy is thus approximated by
\begin{equation*}
    \omega\left(\hat a^\dagger\hat a + \frac{1}{2}\right) \approx \omega\left(\hat a_\pf^2 + \frac{1}{2}\right) = \frac{\omega^2}{2}\qop_\pf^2 + \frac{\omega}{2}.
\end{equation*}
By insertion of \cref{eq:x-pf} for $\qop_\pf$ and using $\xi = -(g\sigma+j)/\omega^2$ from \cref{eq:j-relation}, we get a photon-free version of the quantum Rabi Hamiltonian of \cref{eq:FullHamiltonian},
\begin{equation*}
    \hat H_\pf(v,j) = -t\hat\sigma_x + \left( v - \frac{gj}{\omega^2}\right)\hat\sigma_z - \frac{j^2+g^2}{2\omega^2} + \frac{\omega}{2}.
\end{equation*}
We notice that the only difference to the Hamiltonian of the two-level system alone are the shift $-gj/\omega^2$ in the potential $v$, exactly what we derived before as $\vdc^\pf$ in \cref{eq:vdxc-eff} if we set $j = -\omega^2\xi-g\sigma$ again, and an overall shift in the energy. The spectra of the Hamiltonians and the difference in the expectation values of $\hat\sigma_z$ in the ground state are compared in \cref{fig:spectrum-photon-free}. Note that while the ground-state energy agrees only for small coupling, the expectation values of $\hat\sigma_z$ will also match again in the strong-coupling limit, where $\vdc^\pf$ gives the exact direct-coupling and correlation potential as discussed in \cref{subsec:eff-pot}.

\begin{figure}[ht]
  \centering
  \includegraphics{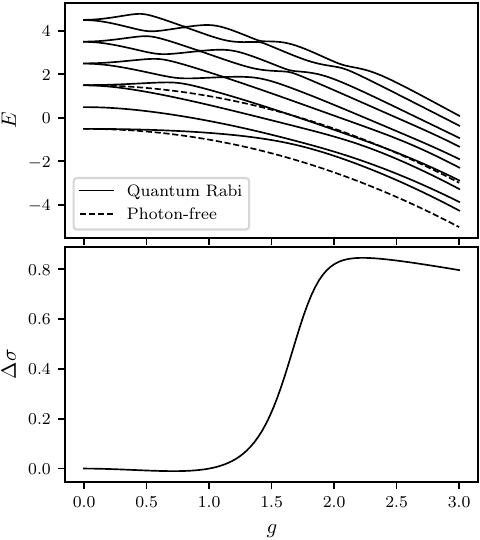}
  \caption{The upper plot shows the spectrum of the quantum Rabi Hamiltonian $\hat H(v,j)$ with $\omega=1,t=1,v=0.1, j=0.1$ at different couplings $g$ together with the spectrum of its photon-free approximation $\hat H_\pf(v,j)$ (dashed lines). The lower panel compares the expectation values of $\hat\sigma_z$ between the ground state $\psi$ of $\hat H(v,j)$ and $\psi_\pf$ of $\hat H_\pf(v,j)$, $\Delta\sigma=\ev{\hat\sigma_z}{\psi}-\ev{\hat\sigma_z}{\psi_\pf}$.}
  \label{fig:spectrum-photon-free}
\end{figure}

Finally, we point out that the photon-free approximation contains an entirely new ingredient when applied to the Dicke model. There, the effective potential at one site will include information about the polarization at other sites and thus amounts to an effective interaction between the two-level systems.

\section{Conclusions}\label{sec:Conclusions}
In summary, this work presents a thorough convex-analytical treatment of ground-state DFT applied to the most basic QED models, in particular the quantum Rabi model and the Dicke model. In the quantum Rabi model, not only can we achieve full $v$-representability for all polarizations (except $\sigma=\pm 1$) and displacements, but we are also able to derive quite explicit expressions for the adiabatic connection and the photon-free approximation. While many of the discussed features for a QEDFT will generalize from the minimal setting of the quantum Rabi model to more complex settings like the (multi-mode) Dicke model~\cite{Bakkestuen2024}, we do not have all results assured. Especially, it is an open question if the properties of the ground state from \cref{subsec:GSProperies} still hold for the Dicke model, and, as a derived property, if $v$-representability holds for the Dicke model.

With the quantum Rabi and Dicke models on solid grounds, the developed methodologies to analyze the properties of ground states and the corresponding QEDFT formulations can be extended to more challenging setups. While ultimately a characterization of the full (continuous) Pauli--Fierz problem would be desirable, already small modifications of the Dicke model might change the physics substantially. This is specifically so if we re-interpret the Dicke Hamiltonian in the length-gauge~\cite{rokaj2018light, schaefer2020relevance}. The length gauge is a unitary transformation of the Pauli--Fierz Hamiltonian in Coulomb gauge under the long-wavelength approximation, and mixes light and matter degrees of freedom in its basic coordinates~\cite{Ruggenthaler2023}. In this case, the Dicke Hamiltonian gets modified and already includes direct dipole-dipole interactions between the different two-level systems. This type of interaction is distinct from the usual Coulomb interaction and is currently assumed to be responsible for many of the observed effects in polaritonic chemistry for macroscopic ensembles of molecules under vibrational strong coupling~\cite{schnappinger2023,sidler2024unraveling}. Investigating this setting for the zero-temperature case as well as in thermal equilibrium is an obvious and very timely next step. Here, an important physical difference between the usual electronic DFT and many problems addressed with QEDFT becomes apparent: Due to a photonic structure that enhances certain photonic modes, different length-scales begin to talk to each other and we can no longer consider different molecules as statistically independent, not even in the dilute-gas limit. Thus QEDFT methodologies become important even for the case of simplified matter sub-systems, such as just two-level systems, because of the large number of coupled matter systems in macroscopic ensembles.

%%%%%%%%%%%%%%%%%%%%%%%%%%%%%%%%%%%%%%%%%%%%%%%%%%%%%%%%%%%%%%%%%%%%%
%% The "Acknowledgement" section can be given in all manuscript
%% classes.  This should be given within the "acknowledgement"
%% environment, which will make the correct section or running title.
%%%%%%%%%%%%%%%%%%%%%%%%%%%%%%%%%%%%%%%%%%%%%%%%%%%%%%%%%%%%%%%%%%%%%
\acknowledgments
    Some of the authors have had the privilege of having Trygve Helgaker as a teacher and mentor in the field of mathematical DFT. It is therefore with great honor that we submitted this work to the celebration of Trygve's 70th birthday and presented a QEDFT that also highlights Trygve's viewpoint of DFT as being more of a discovery than an invention.
    VHB, ML, MP, and AL were supported by ERC-2021-STG grant agreement No. 101041487 REGAL. AL was also supported by Research Council of Norway through funding of the CoE Hylleraas Centre for Quantum Molecular Sciences Grant No. 262695 and CCerror Grant No.\ 287906. The authors would like to thank Mihály A.\ Csirik for fruitful discussions.

%%%%%%%%%%%%%%%%%%%%%%%%%%%%%%%%%%%%%%%%%%%%%%%%%%%%%%%%%%%%%%%%%%%%%
%% The same is true for Supporting Information, which should use the
%% suppinfo environment.
%%%%%%%%%%%%%%%%%%%%%%%%%%%%%%%%%%%%%%%%%%%%%%%%%%%%%%%%%%%%%%%%%%%%%
\section*{Data Availability}
    The source code for performing the numerical investigation is available on GitHub: \url{https://github.com/VegardFalmaar/QEDFT-Quantum-Rabi-Code}.

%%%%%%%%%%%%%%%%%%%%%%%%%%%%%%%%%%%%%%%%%%%%%%%%%%%%%%%%%%%%%%%%%%%%%
%% The appropriate \bibliography command should be placed here.
%% Notice that the class file automatically sets \bibliographystyle
%% and also names the section correctly.
%%%%%%%%%%%%%%%%%%%%%%%%%%%%%%%%%%%%%%%%%%%%%%%%%%%%%%%%%%%%%%%%%%%%%

\section*{References}

%aipnum4-2.bst 2019-01-14 (MD) hand-edited version of apsrev4-1.bst
%Control: key (0)
%Control: author (8) initials jnrlst
%Control: editor formatted (1) identically to author
%Control: production of article title (0) allowed
%Control: page (1) range
%Control: year (1) truncated
%Control: production of eprint (0) enabled
%

\end{document}